\documentclass[letterpaper,journal]{IEEEtran}
\usepackage{amsmath,amsfonts}
\usepackage{algorithmic}
\usepackage{algorithm}
\usepackage{array}
\usepackage[caption=false,font=normalsize,labelfont=sf,textfont=sf]{subfig}
\usepackage{textcomp}
\usepackage{stfloats}
\usepackage[noadjust]{cite}
\usepackage{booktabs}
\usepackage{multirow}
\usepackage{url}
\usepackage{amsthm}
\usepackage{xcolor,arydshln}
\usepackage{array}
\usepackage[table]{xcolor}
\usepackage{bm}
\usepackage{makecell}
\usepackage{mathtools}
\usepackage{colortbl,makecell}
\usepackage{verbatim}
\usepackage{graphicx}
\usepackage{enumitem}
\usepackage{xcolor}
\usepackage{cite}
\usepackage{amsmath,amssymb}
\usepackage{tikz}
\usetikzlibrary{arrows.meta,shapes.geometric,calc}
\usepackage{float}

\newtheorem{theorem}{Theorem}

\newtheorem{lemma}{Lemma}

\newtheorem{assumption}{Assumption}

\definecolor{cenc}{HTML}{0072B2}
\definecolor{cphys}{HTML}{00784D}
\definecolor{closs}{HTML}{D55E00}
\definecolor{chead}{HTML}{CC79A7}
\definecolor{cgry}{HTML}{6E6E6E}

\usepackage{titlesec}

\titlespacing*{\section}{0pt}{1ex}{0.5ex}

\newcommand{\Wfig}{177.5}
\newcommand{\Hfig}{76}

\newcommand{\Wfigcol}{88.9}
\newcommand{\Hfigcol}{85}

\newcommand{\bd}{\mathbf}

\allowdisplaybreaks

\newenvironment{proofsketch}
  {\begin{proof}[Proof sketch]}
  {\end{proof}}

\makeatletter
\renewcommand{\@IEEEprocessthesectionargument}[1]{%
  \@IEEEappendixsavesection*{%
    \raggedright
    \appendixname~\thesection: #1}%
  \addcontentsline{toc}{section}{\appendixname~\thesection: #1}%
}
\makeatother

\begin{document}

\title{GridSFM: A Foundation Model  for Solving \\ AC Optimal Power Flow}
\author{
Luke Bhan\textsuperscript{1, 3 $\ast$},
Weiwei Yang\textsuperscript{2, 3 $\ast$},
Margaret Capetz\textsuperscript{4},
and Baosen Zhang\textsuperscript{3, 4}%
\thanks{$^\ast$ Equal contribution, listed alphabetically. The authors are with  \textsuperscript{1}University of California, San Diego; \textsuperscript{2}Asemic.ai; \textsuperscript{3}Microsoft Research; \textsuperscript{4}University of Washington. Emails: lbhan@ucsd.edu, weiwei@asemic.io, mcapetz@uw.edu, zhangbao@uw.edu }}




\maketitle

\begin{abstract}
We introduce GridSFM, a framework that combines a pretrained foundation model across grid topologies with physics-informed fine-tuning for solving AC Optimal Power Flow (AC-OPF) at scale. It is a $15$ million parameter physics-inspired graph neural network pretrained across $54$ topologies of $500$ to $4{,}000$ buses. Our model attains a $2.45\%$ zero-shot generation-cost error on a $10{,}000$ bus case held-out operating conditions with no degradation as system size grows. Building on this, we pair the pretrained backbone with a physics-informed fine-tuning design based on Newton's method for power flow. With only $100$ solved instances, GridSFM adapts to unseen grids up to $10{,}000$ buses. We show it out performs single topology, dedicated neural network models that are trained more data, both in terms of cost and solver iterations when deployed as warm starting points. 

In designing this foundation model, we overcome the fact that the feasible set for AC-OPF can be disconnected. This is an obstruction that prevents any continuous neural network from approximating the solution map. To do so, we lift the problem and relax its constraints with logarithmically penalized slacks. We prove that the resulting elastic feasible set is contractible, that the AC-OPF minimizers remain minimizers of the elastic problem above an explicit penalty threshold, and that projecting an approximate solution back onto the AC-OPF feasible set is well posed.  We release all models, data, and code so that the community can build on a shared starting point for AC-OPF.
\end{abstract}

\begin{IEEEkeywords}
Optimal power flow, foundation models, physics-informed machine learning.
\end{IEEEkeywords}

\section{Introduction} 
AC Optimal Power Flow (AC-OPF) is a fundamental problem in power system operations and planning. It asks for the least cost generation solution that would satisfy the loads in the system under various network and engineering constraints. This problem is nonconvex and largescale, and the search for practical and efficient algorithms has spurred a large body of work both in academia and industry~\cite{khaloie2025review}. At present, there are a number of nonlinear programming solvers that are used by system operators~\cite{5491276}.

Despite advances in numerical algorithms, changing operating conditions can make AC-OPF substantially harder to solve. Renewables and new large loads are now integrated into the grid, and consequently, they can push the grid to operating points that operators have not seen before, and dramatically influence the performance of AC-OPF solvers. For example, for the \texttt{Texas2k} grid~\cite{7725528}, AC-OPF can be solved in $24$ seconds ($103$ solver iterations) for a nominal load using Ipopt.
But it would take $83$ seconds ($365$ solver iterations) when the load increases by less than $10$\%. Thus, accelerating the solution process remains an active research area.

In the last several years, machine learning (ML) or AI based surrogates for AC-OPF have gained significant attention. Starting by viewing the AC-OPF problem as a mapping from the load to the optimal solutions, a surrogate model (often some type of neural network) can be used to replace the solver. Because feedforward function evaluations are much faster than iterative solvers, significant speedup can be achieved. For a comprehensive list of different neural network architectures and training methods, we refer the reader to \cite{pan_chen_ml_opf_wiki}. 

To be practically useful, an AI-based surrogate should have several features. First, it needs to generalize to a wide range of topologies and operating conditions. Second, the architecture should scale, as speedup becomes more important for larger systems. Third, the speedup obtained should not degrade the quality of the solutions. And finally, it should not require extensive computing resources.

In this paper, we present a foundation model framework, called GridSFM (Grid Small Foundation Model)\footnote{A preliminary version of GridSFM was described in an earlier, non-peer-reviewed technical white paper~\cite{Yan00}. The present paper substantially extends that work with a new backbone, fine-tuning design, and additional experiments.} that achieves all of these goals. It is a fast $15$ million parameter foundation model pretrained across more than $50$ topologies ranging from $500$ to $4,000$ buses. We note GridSFM is quite small compared to some other AI models (hence the `S` in the name), and we show, without significant computational resources, it can be easily adopted in practice. We pair this pretrained backbone with a fine-tuning method that yields both physically consistent outputs and a feasibility projection layer with explicit guarantees. We show that this framework can scale between grids with $500$ to $10,000$ buses and its performance does not degrade as the grid size increases. Our code and models are publicly available at \cite{bhan_gridsfm}. 

Our design is a careful combination of neural network architecture and fine-tuning methodology. A key obstruction to training a foundation model for AC-OPF across multiple grids is that the geometry of the feasible sets can vary drastically and behave poorly. It is well known that the feasible set can become disconnected~\cite{hiskens2001exploring} as the binding inequality constraints change with load variation. Given that neural networks are continuous, this inhibits \emph{any} neural network surrogate that approximates the load to solution mapping~\cite{cybenko1989approximation}. We therefore train against a slacked reformulation of AC-OPF
and prove that it removes this obstruction (Lemma~\ref{lem:elastic-connected}). Moreover, above an explicit penalty threshold, we show the slacked formulation recovers the optimal solution (Lemma~\ref{lem:exact}), and that it admits a unique Lipschitz feasibility projection within a radius computable from the network data and the distance to the power-flow singularity (Theorem~\ref{thm:proj}). 

The main architecture of GridSFM is built from a graph neural network (GNN), consisting of a linear attention layer and a heterogeneous message passing layer \cite{NIPS2017_5dd9db5e} that produce just the generator and voltage setpoints. This choice is deliberate, enabling a model that can be trained with gradient stability across many different topologies while producing batchable downstream outputs in milliseconds at inference time. 

From the pretrained backbone, we then present a fine-tuning framework that transforms  predicted
control setpoints into full, physically consistent solutions. To do so, we 
place a Newton power flow completion inside the fine-tuning loop, with a
slacked reformulation of AC-OPF as the target loss, and back-propagate by combining
the standard adjoint method~\cite{donti2021dc}. We include a residual term that remains differentiable even when the power flow fails to converge. As a result, fine-tuning stays stable even from initially poor setpoints and allows GridSFM to adapt to inputs that are far from the pretraining distribution.

The GridSFM pretrained backbone yields a single model that attains a $2.45\%$
zero-shot cost error on held-out operating conditions across all $54$
pretraining grids. Moreover, combinng with the proposed fine-tuning framework adapts to the unseen $6{,}470$-bus
\texttt{case6470\_rte} \cite{pglib} and $10{,}000$-bus \texttt{ACTIVSg10k} \cite{7725528} systems from
only $100$ solved instances in under $80$ minutes of fine-tuning on a single GPU. Operators can use GridSFM in several modes. For example, when exact feasibility is not required, GridSFM can be deployed for quick screening or planning problems~\cite{4956966} where its output can be used directly. When exact feasibility is important, it can serve as a warm start to a solver. For example, on the \texttt{Texas2k}~\cite{7725528} grid, it reduces the iteration needed by a solver by approximately $7.5\times$. When optimality is not required, a projection algorithm based on Theorem~\ref{thm:proj} delivers feasible operating points at
cost gaps of $0.57\%$ on $10{,}000$-bus systems at even faster speeds.

This paper follows the explosive growth in the last decade of using fast surrogate tools to approximate optimal power flow. 
Originally, work such as DeepOPF \cite{zam19, fio19, pan20} and DC3 \cite{donti2021dc} focused on single-grid designs, learning mappings from loads to the generator controls, and then completing the decomposition with a traditional solver. Variants of these methods have since been developed including unsupervised approaches~\cite{Hua24b,Kim25c} as well as using GNNs~\cite{Don20,Fal21}. 

From this initial set of work, two challenges remained. The first revolved around feasibility, which has been studied in multiple different works~\cite{Zha20,Hua24b,JMLR:v25:23-1577,Kim25c,Ngu25}. Broadly speaking, training losses are designed to penalize infeasible solutions and post-processing is used to repair infeasible solutions. However, guarantees are difficult to obtain, even for a single fixed system.

The second has focused on generalization. For most approaches, whenever grid conditions change out of training distributions or transmission outages modify topologies, neural networks require retraining. Hence, there has been a growing interest in GNN designs that adapt to grid
reconfigurations~\cite{Liu22f} and multiple topologies~\cite{Zho23,Pil24}. This has culminated in a few recent efforts, including ours, to pretrain single backbones across grids, and deploy them for various downstream tasks. For example, \cite{Pue26} proposed unifying power flow, OPF, and state estimation in one architecture. Pretrained models such in \cite{Par26} and \cite{Li26c} present preliminary results in scaling such models across multiple topologies. 

However, these approaches have not been shown to succeed across scale,
with pretraining corpora drawn from at most ten base networks and
without the feasibility certificates operators need at more than $10{,}000$ buses.
GridSFM closes this gap with a foundation model and a paired fine-tuning
framework that yields physically consistent outputs and a feasibility
projection with explicit guarantees.

The paper is organized as follows. In Section \ref{sec:formulation}, we formulate AC-OPF along with its slacked relaxation.
In Section \ref{sec:foundation-model-design}, we present the architectural design of GridSFM.
In Section \ref{sec:finetuning-full}, we develop a fine-tuning design that allows one to explicitly adapt the GridSFM backbone to out-of-distribution grid topologies. In Section \ref{sec:theoretical}, we show that the GridSFM framework is well-posed, proving that feasibility restoration from any GridSFM control output is Lipschitz and unique provided it is close to the AC-OPF minimizer. Lastly, Section \ref{sec:numerical} provides multiple numerical experiments highlighting the pretraining and fine-tuned performance of GridSFM. 

\section{Problem Formulation}\label{sec:formulation}
We work with the standard bus-injection formulation of AC-OPF~\cite{Low2026}.
Let bus $0$ denote the slack bus, $\mathcal N:=\{1,\dots,n_b\}$ be the set of the rest of the buses, and define $\mathcal N_0:=\{0\}\cup\mathcal N$. Let
$\mathcal G\subseteq\mathcal N_0$ and
$\mathcal L\subseteq\mathcal N_0\times\mathcal N_0$ denote the sets of
generator buses and branches, respectively, with
$n_g:=|\mathcal G|$ and $n_\ell:=|\mathcal L|$.
Let $\mathcal L_i$ denote the branches connected to bus $i$. We use 
 $v_i, \theta_i$ to denote the voltage magnitude and angle, $p_i^{\mathrm g}$ and $q_i^{\mathrm g}$ are the
active and reactive generation, and let $p_i^{\mathrm d}$ and $q_i^{\mathrm d}$ be the
fixed demands at bus $i$. 
The AC-OPF problem is given by 
\begin{subequations}\label{eq:AC-OPF}
\begin{align}
\min_{\mathbf v,\boldsymbol\theta,\mathbf p^{\mathrm g},\mathbf q^{\mathrm g}} \quad
& \sum_{i\in\mathcal G}c_i(p_i^{\mathrm g})                    \label{eq:AC-OPF-a}\\
\mathrm{s.t.}\quad
& p_i^{\mathrm g}-p_i^{\mathrm d}=p_i(\mathbf v,\boldsymbol\theta),
&& i\in\mathcal N_0,                                   \label{eq:AC-OPF-b}\\
& q_i^{\mathrm g}-q_i^{\mathrm d}=q_i(\mathbf v,\boldsymbol\theta),
&& i\in\mathcal N_0,                                   \label{eq:AC-OPF-c}\\
& \underline p_i^{\mathrm g}\le p_i^{\mathrm g}\le\overline p_i^{\mathrm g},
&& i\in\mathcal G,                                     \label{eq:AC-OPF-d}\\
& \underline q_i^{\mathrm g}\le q_i^{\mathrm g}\le\overline q_i^{\mathrm g},
&& i\in\mathcal G,                                     \label{eq:AC-OPF-e}\\
& \underline v_i\le v_i\le\overline v_i,
&& i\in\mathcal N_0,   \label{eq:AC-OPF-f}\\
& p_\ell(\mathbf v, \boldsymbol{\theta})^2 + q_\ell(\mathbf v, \boldsymbol{\theta})^2 \le\overline S_\ell^2,
&& \ell\in\mathcal L,                                  \label{eq:AC-OPF-g}
\end{align}
\end{subequations}
where for each branch $\ell$, the branch flows are
\begin{subequations}\label{eq:branch-flows}
\begin{align}
p_\ell(\mathbf v, \boldsymbol{\theta})&=v_i v_k (G_{ik}\cos\theta_{ik}
      +B_{ik}\sin\theta_{ik}), \label{eq:branch-flows-a}\\
q_\ell(\mathbf v, \boldsymbol{\theta})&=v_i v_k (G_{ik}\sin\theta_{ik}
      -B_{ik}\cos\theta_{ik}), \label{eq:branch-flows-b}
\end{align}
\end{subequations}
and  $\theta_{ik} = \theta_i - \theta_k$. The active and reactive power injections are the sum of the outgoing branch flows
\begin{equation}\label{eq:nodal-injections}
p_i=\sum_{\ell \in \mathcal{L}_i} p_\ell, \quad q_i=\sum_{\ell \in \mathcal{L}_i} q_\ell.        
\end{equation}
The quantities $G_{ik},B_{ik}$ in \eqref{eq:nodal-injections} are the real and
imaginary parts of the bus admittance matrix
$\mathbf Y=\mathbf G+\mathrm{j}\mathbf B$.

For subsequent analysis, we collect the equality and inequality
constraints into vector-valued mappings. In particular, the equality constraints are denoted by $\mathbf{g}$ and are the power balance equations in~\eqref{eq:AC-OPF-b} and~\eqref{eq:AC-OPF-c}. The rest of the constraints from~\eqref{eq:AC-OPF-d} to~\eqref{eq:AC-OPF-g} are inequalities and are denoted by $\mathbf{h}$. 
Accordingly, \eqref{eq:AC-OPF} can be written more compactly as
\begin{subequations}\label{eq:AC-OPF-compact}
\begin{align}
\min_{\mathbf v,\boldsymbol\theta,\mathbf p^{\mathrm g},\mathbf q^{\mathrm g}}\quad
& \sum_{i\in\mathcal G}c_i(p_i^{\mathrm g}),                  \label{eq:compact-a}\\
\mathrm{s.t.}\quad
& \mathbf g(\mathbf v,\boldsymbol\theta,\mathbf p^{\mathrm g},\mathbf q^{\mathrm g})=\mathbf 0,                              \label{eq:compact-b}\\
& \mathbf h(\mathbf v,\boldsymbol\theta,\mathbf p^{\mathrm g},\mathbf q^{\mathrm g})\leq\mathbf 0.                           \label{eq:compact-c}
\end{align}
\end{subequations}
We often collect all variables together and refer to them as $\mathbf x$, and \eqref{eq:compact-b} and \eqref{eq:compact-c} becomes $\bd{g}(\bd x)=0$ and $\bd h (\bd x) \leq 0$.

The challenge of learning solutions for \eqref{eq:AC-OPF} is well-understood given the problem is a nonlinear, nonconvex optimization whose feasible set can consist of multiple disconnected components \cite{hiskens2001exploring,zhang2012geometry}. The existence of disconnected components is, in fact, a major barrier in using a neural network for AC-OPF. Neural networks are continuous functions so they map connected sets to connected sets. As a consequence, we cannot in general approximate the solutions of AC-OPF using neural networks.

Therefore, we propose reformulating \eqref{eq:AC-OPF} by using a slacked relaxation, which, as proven in Lemma \ref{lem:elastic-connected}, enables a solution set that is fully connected and directly motivates the physics-informed loss in Section \ref{sec:finetuning-full}. More precisely, we introduce slack variables associated with each of the constraints that leads to an elastic relaxation of the original problem: 
\begin{align}
\min_{\mathbf x,\mathbf s^{ g},\mathbf s^{ h}}
& \sum_{i\in\mathcal G}c_i(p_i^{\mathrm g})
+\rho_{ g}\mathbf 1^\top\log(\mathbf 1+\mathbf s^{ g})
+\rho_{ h}\mathbf 1^\top\log(\mathbf 1+\mathbf s^{ h}) \nonumber\\
\mathrm{s.t.}\quad
& -\mathbf s^{ g}\leq{\mathbf g}(\mathbf x)\leq\mathbf s^{ g}, \; {\mathbf h}(\mathbf x)\leq\mathbf s^{ h}, \;  \mathbf s^{ g}\geq\mathbf 0,\; \mathbf s^{ h}\geq\mathbf 0,                            \label{eq:AC-OPF-elastic}
\end{align}
where we introduce nonnegative slack variables \(\mathbf s^{ g}\) and \(\mathbf s^{ h}\) for the equality and inequality residuals, respectively. If \(\mathbf s^{ g}\) and \(\mathbf s^{ h}\) are zero, we recover the solution to the original problem~\eqref{eq:AC-OPF-compact}.

Here, the logarithm is applied element-wise. Note that other penalty functions could be used, and our choice is motivated by the fact 
 $\log(1+s)$ remains with unit slope at the origin enabling a local minimizer of the original AC-OPF problem \eqref{eq:AC-OPF-compact} to be a minimizer of the slacked problem \eqref{eq:AC-OPF-elastic} as proven in Lemma \ref{lem:exact}. Additionally, its bounded, decaying gradient $\rho/(1+s)$ keeps diverged power-flow residuals from dominating the stochastic-gradient loss during training.

Since a solution $\hat{\mathbf x}$ of \eqref{eq:AC-OPF-elastic} is not necessarily a feasible solution of \eqref{eq:AC-OPF-compact}, we introduce the following projection problem which, using any candidate solution, $\hat{\mathbf x}$ of \eqref{eq:AC-OPF-elastic}, restores feasibility by solving
\begin{equation}\label{eq:AC-OPF-projection}
\min_{\mathbf x}\quad
\frac{1}{2}\lVert \mathbf x-\hat{\mathbf x}\rVert_2, \text{ s.t. }
 {\mathbf g}(\mathbf x)=\mathbf 0, \; {\mathbf h}(\mathbf x)\leq\mathbf 0.                            
\end{equation}

Together, \eqref{eq:AC-OPF-elastic} and \eqref{eq:AC-OPF-projection} form the
optimization scaffolding for the remainder of this work where their theoretical properties will be discussed in Section~\ref{sec:theoretical}. The  next section introduces the key design choices of GridSFM to solve \eqref{eq:AC-OPF-compact}.

\section{Foundation Model Design}\label{sec:foundation-model-design}
\begin{figure}[ht]
    \centering
\begin{tikzpicture}[x=1mm,y=1mm,
  every node/.style={inner sep=0pt,outer sep=0pt,font=\footnotesize},
  bx/.style={rounded corners=1mm,line width=0.35mm,align=center,
             inner xsep=1.4mm,inner ysep=1.1mm},
  ttl/.style={anchor=north west,align=left,font=\footnotesize\bfseries},
  cad/.style={anchor=north east,align=right,font=\footnotesize},
  li/.style={anchor=west,align=left,font=\footnotesize},
  hl/.style={anchor=center,align=center,font=\footnotesize\bfseries},
  ar/.style={-{Stealth[length=2.6mm,width=2.2mm]},line width=0.5mm,draw=black!55},
  cyc/.style={-{Stealth[length=2.6mm,width=2.2mm]},line width=0.5mm,draw=cphys},
  gar/.style={-{Stealth[length=2.6mm,width=2.2mm]},line width=0.5mm,draw=cenc,
              dash pattern=on 1.6mm off 1.1mm},
]

\useasboundingbox (0,0) rectangle (\Wfigcol,-\Hfigcol);

\foreach \T/\B/\C/\N/\W in {6/21/cenc/{(a)~Pretrained backbone}/{once, offline},
                            27/60/cphys/{(b)~Fine-tuning \normalfont(optional)}/{per grid, $<1$\,h},
                            65/82/closs/{(c)~Downstream interface}/{per scenario}}{
  \draw[rounded corners=1.6mm,draw=\C,line width=0.4mm,fill=\C!4]
        (3,-\T) rectangle (76,-\B);
  \node[ttl,text=\C,text width=50mm] at (4.6,-\T-1.6) {\N};
  \node[cad,text=\C!80] at (74.4,-\T-1.9) {\W};}

\foreach \y/\T in {-13/{15.1M parameters, 50$+$ topologies},
                   -17/{encoding $\rightarrow$ 8 GridBlocks $\rightarrow$
                        control variables $\mathbf{u}$}}{
  \fill[cenc] (6.2,\y) circle (0.6);
  \node[li,text width=66mm] at (8.2,\y) {\T};}

\node[bx,draw=cphys,fill=white,minimum width=22mm,minimum height=10mm,text width=19mm]
  at (22,-38) {controls $\hat{\mathbf{u}}$};
\node[bx,draw=cphys,fill=white,minimum width=28mm,minimum height=10mm,text width=25mm]
  at (56,-38) {Newton power flow};
\node[bx,draw=cphys,fill=white,minimum width=22mm,minimum height=10mm,text width=19mm]
  at (56,-53) {elastic loss};
\node[bx,draw=cphys,fill=white,minimum width=28mm,minimum height=10mm,text width=25mm]
  at (22,-53) {backprop: adjoint\\ $+$ residual loss};
\draw[cyc] (33.4,-38) -- (41.6,-38);
\draw[cyc] (56,-43.2) -- (56,-47.6);
\draw[cyc] (44.6,-53) -- (36.4,-53);
\draw[cyc] (22,-47.8) -- (22,-43.4);

\foreach \y/\T in {-72/{warm-start AC-OPF solver},
                   -76/{feasible projection}}{
  \fill[closs] (6.2,\y) circle (0.6);
  \node[li,text width=66mm] at (8.2,\y) {\T};}
\node[hl,text=closs] at (39.5,-79.6) {Signficantly fewer solver iterations};

\draw[ar] (39.5,-21) -- (39.5,-26.3);
\draw[ar] (39.5,-60) -- (39.5,-64.3);
\draw[gar] (76,-17) -- (82,-17) -- (82,-76) -- (76.7,-76);
\node[rotate=90,anchor=center,text=cenc]
  at (85.4,-46.5) {zero-shot when finetuning is not needed};

\end{tikzpicture}
\caption{The GridSFM framework. (a)~A $15$M-parameter backbone, pretrained offline on $50{+}$ topologies, maps grids and operating conditions to controls $\mathbf u$. (b) For specific grids, the model is then fine-tuned using a physics informed, power flow based design. (c) From here, GridSFM predictions can be used for downstream tasks.}
\label{fig:flow_diag}
\vspace{-1em}
\end{figure}

\begin{figure*}[t]
\begin{tikzpicture}[x=1mm,y=1mm,
  every node/.style={inner sep=0pt,outer sep=0pt,font=\footnotesize},
  bx/.style={rounded corners=0.9mm,line width=0.32mm,align=center,
             inner xsep=1.3mm,inner ysep=1mm},
  enc/.style={bx,draw=cenc,fill=cenc!7},
  phys/.style={bx,draw=cphys,fill=cphys!7},
  pre/.style={bx,draw=cgry,fill=cgry!8},
  lss/.style={bx,draw=closs,fill=closs!7},
  ttl/.style={anchor=north west,align=left,font=\small\bfseries},
  sub/.style={anchor=north west,align=left,font=\footnotesize\bfseries},
  note/.style={anchor=north,align=center,font=\footnotesize,text=black!70,
               execute at begin node={\hyphenpenalty=10000\exhyphenpenalty=10000}},
  ar/.style={-{Stealth[length=2mm,width=1.7mm]},line width=0.35mm,draw=black!55},
  zoom/.style={draw=black!45,line width=0.32mm,dash pattern=on 1.1mm off 0.8mm},
  thin ar/.style={-{Stealth[length=1.7mm,width=1.5mm]},line width=0.3mm,draw=cenc!75},
  cyclc/.style={draw=chead,line width=0.35mm,dash pattern=on 1.2mm off 0.9mm,
                rounded corners=1.1mm},
  busn/.style={circle,draw=cenc,fill=cenc!45,line width=0.28mm,minimum size=2.8mm},
  brn/.style={rectangle,draw=cenc,fill=white,line width=0.28mm,minimum size=2.2mm},
  onn/.style={rectangle,draw=cgry,fill=black!30,line width=0.28mm,minimum size=2.2mm},
  offn/.style={rectangle,draw=cgry,fill=white,line width=0.28mm,minimum size=2.2mm,
               dash pattern=on 0.6mm off 0.45mm},
  stub/.style={draw=cgry,line width=0.32mm},
  wire/.style={draw=cenc!60,line width=0.38mm},
  tok/.style={circle,draw=cenc,fill=cenc!30,line width=0.28mm,minimum size=2.9mm},
  sgn/.style={inner sep=0.2mm,fill=white},
]

\useasboundingbox (0,0) rectangle (\Wfig,-\Hfig);
\draw[rounded corners=1.5mm,draw=black!18,line width=0.28mm,fill=black!1]
      (1.5,-1.5) rectangle (176,-69.5);

\begin{scope}[yshift=4.5mm]
\foreach \L/\R/\N in {3/37/{grid encoding}, 41/105/{normalized linear self-attention},
                      109/174.5/{signed message passing}}{
  \draw[rounded corners=1.5mm,draw=black!28,line width=0.28mm,fill=white,
        dash pattern=on 1.3mm off 0.9mm] (\L,-9) rectangle (\R,-54.5);
  \node[fill=white,inner xsep=1.1mm,inner ysep=0.4mm,font=\footnotesize\bfseries]
        at ({\L+0.5*(\R-\L)},-9) {\N};}

\coordinate (c1) at (11,-17);  \coordinate (c2) at (29,-17);
\coordinate (c3) at (20,-26);
\coordinate (c4) at (11,-35);  \coordinate (c5) at (29,-35);
\coordinate (fA) at (18.5,-28); \coordinate (fB) at (26,-26);
\draw[cyclc]
      ($(c1)!0.3!(fA)$)--($(c3)!0.3!(fA)$)--($(c5)!0.3!(fA)$)--($(c4)!0.3!(fA)$)--cycle;
\draw[cyclc] ($(c3)!0.32!(fB)$)--($(c2)!0.32!(fB)$)--($(c5)!0.32!(fB)$)--cycle;
\draw[wire] (c1)--(c3) (c2)--(c3) (c1)--(c4) (c2)--(c5) (c3)--(c5) (c4)--(c5);
\foreach \p in {(15.2,-21.5),(24.6,-21.5),(11,-26),(29,-26),(20,-35),(24.6,-30.5)}
  {\node[brn] at \p {};}
\foreach \p in {(c1),(c2),(c3),(c4),(c5)}{\node[busn] at \p {};}
\draw[stub] (11,-15.6)--(11,-13.8);  \node[onn]  at (11,-12.8) {};
\draw[stub] (29,-15.6)--(29,-13.8);  \node[offn] at (29,-12.8) {};
\draw[cgry,line width=0.32mm] (28.2,-12)--(29.8,-13.6);

\node[busn] at (6,-42) {};  \node[anchor=west,text=cenc] at (8.2,-42) {bus};
\node[brn]  at (20,-42) {}; \node[anchor=west,text=cenc] at (22.2,-42) {line};
\draw[cyclc] (4.8,-46.8) -- (7.8,-46.8);
\node[anchor=west,text=chead] at (8.8,-46.8) {cycle};
\node[onn] at (6,-51) {}; \node[offn] at (9,-51) {};
\draw[cgry,line width=0.32mm] (8.2,-50.2)--(9.8,-51.8);
\node[anchor=west] at (11.2,-51) {device, on / off};

\node[anchor=center,text=black!70] at (47,-13) {every bus};
\node[anchor=center,text=black!70] at (96,-13) {each updated};
\foreach \y in {-17.5,-21.5,-25.5,-29.5}{
  \node[tok] at (47,\y) {};
  \node[tok] at (99,\y) {};
  \draw[thin ar] (48.7,\y) -- (52.4,-23.5);
  \draw[thin ar] (93.6,-23.5) -- (97.3,\y);}
%
\node[enc,minimum width=40mm,minimum height=11mm,text width=37mm]
  at (73,-23.5) {$\operatorname{Attn}_i(\bm z)=\bm W^o\,
                  \dfrac{\bm S^{\!\top}\varphi(\bm q_i)}
                        {\varphi(\bm q_i)^{\!\top}\bm\kappa}$};
\node[anchor=center] at (73,-35)
  {$\underbrace{\bm q_i = \bm W^q\bm z_i}_{\text{query emb.}},\quad
    \underbrace{\bm k_j = \bm W^k\bm z_j}_{\text{key emb.}},\quad
    \underbrace{\bm v_j = \bm W^v\bm z_j}_{\text{value emb.}}$};
\node[anchor=center] at (73,-44.5)
  {$\underbrace{\bm S
      =\textstyle\sum_{j\in\mathcal N_0}\varphi(\bm k_j)\bm v_j^{\!\top}
      }_{\text{global info}},
    \quad
    \underbrace{\bm\kappa
      =\textstyle\sum_{j\in\mathcal N_0}\varphi(\bm k_j)
      }_{\text{key normalization}}$};
\node[note,text width=62mm] at (73,-50.3)
  {$\varphi(x)=x+1$ for $x>0$ and $e^x$ otherwise};

%
\coordinate (B1) at (132.4,-19); \coordinate (B2) at (160.4,-19);
\coordinate (B3) at (160.4,-32); \coordinate (B4) at (132.4,-32);
\draw[cyclc,-{Stealth[length=2mm,width=1.7mm]}]
      (148.4,-23) -- (155.4,-23) -- (155.4,-28) -- (137.4,-28)
      -- (137.4,-23) -- (144.4,-23);
\draw[wire] (B1) -- (142.9,-19);  \draw[thin ar] (149.9,-19) -- (B2);
\draw[wire] (B2) -- (160.4,-23);  \draw[thin ar] (160.4,-28) -- (B3);
\draw[wire] (B4) -- (142.9,-32);  \draw[thin ar] (149.9,-32) -- (B3);
\draw[wire] (B4) -- (132.4,-28);  \draw[thin ar] (132.4,-23) -- (B1);
\foreach \p in {(146.4,-19),(160.4,-25.5),(146.4,-32),(132.4,-25.5)}{\node[brn] at \p {};}
\foreach \p in {(B1),(B2),(B3),(B4)}{\node[busn] at \p {};}
\node[anchor=east] at (129.8,-19) {bus $i$};
\node[sgn,text=cphys] at (138.4,-16.6) {$+1$};
\node[sgn,text=closs] at (128,-23.6) {$-1$};
\node[anchor=center] at (141.75,-40)
  {$\operatorname{MP}_i(\bm z) =
    \mathop{\mathrm{mean}}\limits_{j\in\mathcal N(i)}
      \bigl(s_{ij}\,\bm W\bm z_j\bigr)
    + \bm W'\bm z_i + \bm b$};
\node[note,text width=63mm] at (141.75,-45.4)
  {$\mathcal N(i)$ are the neighbours of $i$ and $s_{ij}=\pm1$ the signed
   incidence of $j$ on $i$};

\draw[zoom] (20,-54.5) -- (20,-57.4);
\draw[zoom] (73,-54.5) -- (86,-57.4);
\draw[zoom] (141.75,-54.5) -- (117,-57.4);
\node[pre,minimum width=30mm,minimum height=10mm,text width=27mm]
  (s1) at (18,-63) {features $+$ diffusion\\ positional encoding};
\node[pre,minimum width=26mm,minimum height=10mm,text width=23mm]
  (s2) at (50,-63) {linear embedding $\bm z$};
\foreach \d in {1.5,0.75}{
  \draw[rounded corners=1.1mm,draw=cphys!45,line width=0.3mm,fill=cphys!8]
        (67+\d,-57.5-\d) rectangle (150+\d,-68.5-\d);}
\draw[rounded corners=1.1mm,draw=cphys,line width=0.36mm,fill=cphys!4]
      (67,-57.5) rectangle (150,-68.5);
\node[sub] at (68.4,-59.1) {GridBlock $\times\,8$};
\node[phys,fill=white,minimum width=21mm,minimum height=4.8mm,text width=19mm]
  at (79.4,-65.4) {self-attention};
\node[phys,fill=white,minimum width=32mm,minimum height=4.8mm,text width=30mm]
  at (109.5,-65.4) {signed message passing};
\node[phys,fill=white,minimum width=19mm,minimum height=4.8mm,text width=17mm]
  at (138.6,-65.4) {feed-forward};
\draw[ar] (90.1,-65.4) -- (93.3,-65.4);
\draw[ar] (125.7,-65.4) -- (128.9,-65.4);
\node[lss,minimum width=20mm,minimum height=10mm,text width=17mm]
  (s4) at (164,-63) {\textbf{controls} $\widehat{\bm u}$};
\draw[ar] (33.2,-63) -- (36.8,-63);
\draw[ar] (63.2,-63) -- (66.4,-63);
\draw[ar] (151.7,-63) -- (153.8,-63);

\end{scope}
\end{tikzpicture}
\caption{GridSFM architecture. The grid is first encoded into a graph structure to a common hidden dimension where the model is then passed through $8$ layers consisting of both a global linear attention and signed message passing component resulting in the fine dispatch control estimates for the AC-OPF solve.}
\label{fig:arch-diagram}
\end{figure*}

In this section, we describe a framework that aligns with how a foundation model could be used by an operator, combining the computational speed of neural networks with the power-flow equations to produce physically consistent solutions. It is composed of three stages (see figure~\ref{fig:flow_diag}):

\begin{enumerate}
\item[(i)] \textbf{A topology-agnostic pretrained model} that maps the grid topology, line parameters, and loads to the generator active-power outputs and voltage magnitude. We call these the control variables and denote them by $\bm{u}$. 
\item[(ii)] \textbf{A physically consistent fine-tuning framework} that steers the pretrained backbone toward a single grid of interest via a physics-informed
loss, and remains effective even when the predicted setpoints are far from their targets on tasks outside the pretraining distribution.
\item[(iii)] \textbf{An output interface} that connects the approximate solutions to downstream tasks, including warm-starting a conventional solver
and projection via \eqref{eq:AC-OPF-projection}.
\end{enumerate}

This section provides intuition and high level structure of the architecture design, with the fine details omitted due to length constraints. They can be found in Appendix \ref{appendix:architecture}.

\subsection{Graph Encoding}
\label{sec:graph-encoding}

We encode the input grid as a heterogeneous graph whose nodes
take one of seven types, $\Sigma := \{\mathrm{bus}, \mathrm{gen}, \mathrm{load}, \mathrm{shunt}, \mathrm{line}, \mathrm{transformer}, \mathrm{cycle}\}.$ The first six are self explanatory, and the last type, cycle, accounts for the fact that Kirchhoff's voltage law imposes a constraint on the power flow in cycles~\cite{6756976,zhang2014network}. The edges of the graph record incidence. For example, generators, loads, and shunts attach to their bus by an unsigned edge. Each line and transformer attaches to its two endpoint buses by a signed edge, $+1$ at the from-bus and $-1$ at the to-bus, and to every cycle containing it by a second signed edge giving the direction in which that cycle traverses it. Throughout, we write $\sigma(i) \in \Sigma$ for the type of node $i$.



When node $i$'s type is a bus, branch (AC lines or transformers), or a cycle, it is associated with a positional encoding that captures its position relative to the other nodes. It is computed through a diffusion operation on a Laplacian defined from the topology of the input grid. The construction essentially allows a node's encoding to summarize the information in its neighborhood, and structurally different parts of the network are separated by their node encodings. The rates of diffusion are learned in the training process, and let each channel decide the contribution of their adjacent neighbors allowing more important network nodes to dominate. This style of encoding is not new \cite{dwivedi2022graph}, but to our knowledge, has not been applied in the context of power systems.

\subsection{Neural Network Design}
\label{sec:nn-design}

After encoding, GridSFM consists of eight layers combining linear self-attention, signed
message passing, and classic multi-layer perceptron networks (see
Figure~\ref{fig:arch-diagram}). Each block applies the three in sequence. We begin by detailing the normalized linear self-attention design \cite{katharopoulos20a}. The layer acts within a node type: node $i$ attends to nodes of its own type $\sigma(i)$ and to no others, with parameters shared by all nodes of that type. Explicitly, for each type, using $\tilde{\bm{z}}_i$ as the previous input into layer, we define the embeddings
\begin{align}
    \bm{q}_i^{\rm emb} = \bm{W}_{\sigma(i)}^{q}\tilde{\bm{z}}_i, \,
    \bm{k}_i^{\rm emb} = \bm{W}_{\sigma(i)}^{k}\tilde{\bm{z}}_i, \,
    \bm{v}_i^{\rm emb} = \bm{W}_{\sigma(i)}^{v}\tilde{\bm{z}}_i,
\end{align}
which yield the linearized attention output
\begin{align}
    \bm{z}_i^{\rm attn}
    &= \bm{W}_{\sigma(i)}^{\rm o}\frac{
        \left(\sum_{j \in \mathcal{V}_{\sigma(i)}}
        \bm{v}_j^{\rm emb} \varphi(\bm{k}_j^{\rm emb})^\top\right)
        \varphi(\bm{q}_i^{\rm emb})
    }{
        \varphi(\bm{q}_i^{\rm emb})^\top
        \sum_{j \in \mathcal{V}_{\sigma(i)}}
        \varphi(\bm{k}_j^{\rm emb})
    }, \label{eq:attn}
\end{align}
where $\mathcal{V}_\tau$ is the set of nodes of type $\tau$, and
$\varphi(x)$ is the Exponential linear unit function (see Figure \ref{fig:arch-diagram}) acting element-wise. In \eqref{eq:attn}, $\bm{z}_i^{\rm attn}$ is a single attention head and in practice each attention layer has four of these outputs that are concatenated (See Appendix \ref{appendix:architecture}). The matrices $\bm{W}_{\sigma(i)}^q, \bm{W}_{\sigma(i)}^k, \bm{W}_{\sigma(i)}^v, \bm{W}_{\sigma(i)}^o$ are learned independently per node-type and in the multi-head case, once per head. 
This layer aggregates global information across each node type by computing a shared global state from the local hidden states, creating long range dependencies across the network. 

We next turn to signed message passing. 
Let $\mathcal{N}(i)$ denote the nodes
adjacent to $i$, and let $s_{ij}$ be the sign of the edge between $i$ and $j$,
equal to $\pm 1$ on the bus--branch and branch--loop edges and to $+1$ on the unsigned edges attaching generators, loads, and shunts. Each node aggregates its neighbors of each type separately,
\begin{equation}
\label{eq:mp-update}
\begin{split}
    \bm{z}^{\mathrm{mp}}_i = \bm{z}^{\mathrm{attn}}_i + \sum_{\tau} \Bigl[
      &\underbrace{\operatorname*{mean}_{j \in \mathcal{N}(i) \cap \mathcal{V}_\tau}
        \bigl(s_{ij}\, \bm{W}_{\tau,\,  \sigma(i)}\,\tilde{\bm{z}}_j\bigr)}
        _{\text{neighbor contributions}} \\
      &+ \underbrace{\bm{W}'_{\tau,\,  \sigma(i)}\tilde{\bm{z}}_i}
        _{\text{node contribution}}
      + \bm{b}_{\tau,\,  \sigma(i)} \Bigr] \,,
\end{split}
\end{equation}
where the sum runs over the node types $\tau$ adjacent to $\sigma(i)$.

Finally, each block closes with a position-wise feed-forward network, $\operatorname{FFN}(\tilde{\bm{h}}_i) = \bm{W}'_{\sigma(i)}
\operatorname{GELU}\bigl(\bm{W}_{\sigma(i)} \tilde{\bm{z}}_i+ \bm{b}_{\sigma(i)}\bigr) + \bm{b}'_{\sigma(i)}$, where each node type carries its own pair in each block, shared by every
node of that type.

\subsection{GridSFM Output Heads}
\label{sec:output}

To complete GridSFM, the output heads are composed of two phases: a fusion that
gathers each node's neighborhood and the grid as a whole, and a per-quantity head. This part of the architecture is standard (see Appendix \ref{appendix:architecture}). An output layer aggregates all the hidden state information, and then output $\tilde{\mathbf {u}} = (\tilde{\mathbf{p}}^g, \tilde{\mathbf{v}}_{\mathcal{N}_v})$. These are projected on to there feasible ranges by thresholding against the voltage (and generator power) upper and lower bounds, giving the final output $\hat{\bd{u}}$. 

\subsection{GridSFM Pretraining Loss}
GridSFM is pretrained by supervised regression using solved AC-OPF instances. The loss function is
\begin{equation}
\label{eq:pretrain-loss}
    \Phi_{\rm pt} = w_p \mathop{\mathcal{M}}_{i \in \mathcal{G}}
      \left(\biggl| \frac{\widehat p^{\mathrm g}_i - p^{\mathrm g}_i}
                   {\varepsilon_p} \biggr|\right)
    + w_v\mathop{\mathcal{M}}_{i \in \mathcal{N}_v}
      \left(\biggl| \frac{\widehat v_i - v_i}{\varepsilon_v} \biggr|\right) \,, 
\end{equation}
where  $\mathop{\mathcal{M}}$ is a function on a set of errors $\{e_i\}_{i \in \mathcal{S}}$, defined as
\begin{align}
    \mathop{\mathcal{M}}_{i \in \mathcal{S}} e_i := w_{\mu} \operatorname*{mean}_{i \in \mathcal{S}} e_i
      + w_{\tau} \operatorname*{mean}_{i \in \text{topk}(\mathcal{S})} e_i
      + w_{\infty} \max_{i \in \mathcal{S}} e_i \,.
\end{align}
where $\text{topk}(\mathcal{S})$ collects the top $k$ errors in $\mathcal{S}$. The tolerances $\varepsilon_p$, $\varepsilon_v$ and weights $w_p$,$w_v$,$w_\mu$, $w_\tau$, $w_\infty$ are set by the user (see Table \ref{tab:hyperparams-finetuning} in Appendix \ref{appendix:hyperparams}).

\section{Physics-informed Fine-tuning Design}\label{sec:finetuning-full}

\begin{algorithm}[t]
\caption{One fine-tuning step on a batch of scenarios}
\label{alg:finetune-step}
\begin{algorithmic}[1]
\REQUIRE GridSFM pretrained backbone, batched scenarios $\mathcal{B}$
\STATE $\widehat{\bm{u}} \leftarrow
       (\widehat{\mathbf p}^{\mathrm g}, \widehat{\mathbf v}_{\mathcal N_v})
       = \mathrm{GridSFM}(\mathcal B)$
\FORALL{scenarios $s \in \mathcal B$}
  \STATE $\bm{y}^{(0)} \leftarrow$ predicted
         magnitudes with DC angles, $k= 0$
  \WHILE{$k < 40$ \textbf{and}
         $\|\bm{r}(\bm{y}; \widehat{\bm{u}})\|_{\infty} \geq 10^{-5}$}
    \STATE update $\bm{y}$ using \eqref{eq:newton} and $k \leftarrow k + 1$
  \ENDWHILE
  \STATE mark as converged if
         $\|\bm{r}(\bm{y}; \widehat{\bm{u}})\|_{\infty} < 10^{-5}$
  \STATE retain $\bm{J}$ and
         $\bm{r}(\bm{y}; \widehat{\bm{u}})$ at the final iterate
  \STATE recover the injections and the branch flows
\ENDFOR
\STATE evaluate $\Phi_{\mathrm{ft}}$, given in \eqref{eq:finetune-loss}, at the restored
       point
\STATE $\partial \Phi_{\mathrm{ft}} / \partial \widehat{\bm{u}},\
       \bm{\gamma} \leftarrow$ autograd through the loss and the model
\FORALL{scenarios $s \in \mathcal B$}
  \STATE \textbf{if} converged \textbf{then} solve
         $\bm{J}^{\!\top} \bm{\mu} = \bm{\gamma}$ and set \\
         $\nabla_{\widehat{\bm{u}}} \Phi^{(s)}_{\mathrm{ft}} \leftarrow
          \partial \Phi_{\mathrm{ft}} / \partial \widehat{\bm{u}}
          - (\partial \bm{r} / \partial \widehat{\bm{u}})^{\!\top} \bm{\mu}$
  \STATE \textbf{else} $\nabla_{\widehat{\bm{u}}} \Phi^{(s)}_{\mathrm{ft}}
         \leftarrow \partial \Phi_{\mathrm{ft}} / \partial \widehat{\bm{u}}$
\ENDFOR
\STATE update GridSFM by an Adam step
\end{algorithmic}
\end{algorithm}

Our pretraining step minimizes regression loss with respect to the generator setpoints and does not directly provide $\bd v$ and $\bm \theta$ solutions. This section describes how we close this gap for specific grids of interest by solving power flow  and fine-tuning based on the physically compatible solution. The process is summarized in Algorithm \ref{alg:finetune-step}.
 \subsection{Power-flow design}\label{subsec:powerflow}

Fix an input and let $\widehat{\bm{u}} = (\widehat{\mathbf p}^{\mathrm g},
\widehat{\mathbf v}_{\mathcal N_v})$ be output of the neural network. Write $\bm y =(\{\theta_i\}_{i \in \mathcal{N}}, \{v_i\}_{i \in \mathcal{N}_0\backslash \mathcal{N}_v})$ as the power flow solutions given $\widehat{\bm{u}}$ where
$\mathcal N_{\mathrm v}\subseteq\mathcal N_0$ denote the voltage-controlled
buses: namely the reference bus together with every bus hosting at least one
in-service generator. Define $\bm r$ to be the vector of active and reactive residuals: 
\begin{align}
\label{eq:pf-residual}
    \bm{r}(\bm{y}; \widehat{\bm{u}}) := (
      \widehat{\bm{p}}^{\mathrm g}- \bm{p}^{\mathrm d}
        - \bm{p}(\bm v, \bm\theta) ,
      - \bm{q}^{\mathrm d}
        - \bm{q}(\bm v, \bm\theta)).
\end{align}
We solve $\bm{r} = \bm 0$ by Newton's method  with a step-halving
line search~\cite{4073219},
\begin{equation}
\label{eq:newton}
    \bm{y}^{(k+1)} = \bm{y}^{(k)} - \beta_k\,
      \bm{J}\bigl(\bm{y}^{(k)}\bigr)^{-1}
      \bm{r}\bigl(\bm{y}^{(k)}; \widehat{\bm{u}}\bigr), \,
    \bm{J} :=\frac{\partial \bm{r}}{\partial \bm{y}},
\end{equation}
where $\beta_k$ is the largest step in $\{1, \tfrac12, \tfrac14, \dots\}$ that
decreases $\lVert \bm{r} \rVert_\infty$. We cap the
iteration at $40$ steps and declare convergence at $\lVert \bm{r} \rVert_\infty \le 10^{-5}$ p.u. We write  $\widehat{\bm v}$ and $\widehat{\bm \theta}$ as the power flow solutions. With the other other quantities that are computed from them,  we call this the restored operating point. 


\subsection{Fine-tuning loss}\label{subsec:finetuning-loss}
The controls $\widehat{\bm u}$ satisfy the generation and voltage
bounds, but the restored point computed from the power flow solutions may not. In addition, the restored branch flows may violate the capacity constraints. Recall that $\bm h$ is used collect all the inequality constraints~\eqref{eq:compact-c}, and let $\widehat{\bm h}$ denote the violations in the AC-OPF inequalities. The fine-tuning objective is then
\begin{align}
\label{eq:finetune-loss}
    \Phi_{\mathrm{ft}} =&\, w_c \sum_{i \in \mathcal G}
      c_i\bigl(\widehat p^{\mathrm g}_i\bigr)
      && \text{generation cost} \nonumber \\
    &+ w_g \sum_{m=1}^{n_r} \log\bigl(1 + |r_m|\bigr)
      && \text{equality violation} \nonumber \\
    &+ w_h \sum_{m=1}^{n_h} \log\bigl(1 + \widehat h_m\bigr)
      && \text{inequality violation} \nonumber \\
    &+ w_r \lVert \bm{r} \rVert_2^2
      && \text{power flow failure} \nonumber \\
    &+ w_u\, \Phi_{\mathrm{pt}}
      && \text{control loss}
\end{align}
where we minimize $\Phi_{\mathrm{ft}}$ by backpropagating through the layers of the neural network.
Next, we explain how to differentiate through the power flow solver. 


Within one sample, the
chain rule separates into two routes by which $\Phi_{\mathrm{ft}}$ depends on the
controls,
\begin{equation}
\label{eq:grad-routes}
    \nabla_{\widehat{\bm{u}}} \Phi_{\mathrm{ft}}
    = \underbrace{\partial \Phi_{\mathrm{ft}} /
      \partial \widehat{\bm{u}}}_{\text{explicit}}
    + \underbrace{\bigl( \partial \bm{y} / \partial \widehat{\bm{u}}
      \bigr)^{\!\top} \bm{\gamma}}_{\text{restoration}} \,, \quad
    \bm{\gamma} := \partial \Phi_{\mathrm{ft}} / \partial \bm{y} \,,
\end{equation}
where the first term can be explicitly computed. 
When the power flow solve converges, the second term can be computed via  the implicit
function theorem, since  $\partial \bm{y} / \partial \widehat{\bm{u}} =
-\bm{J}^{-1} \partial \bm{r} / \partial \widehat{\bm{u}}$, so the second term
equals $-(\partial \bm{r} / \partial \widehat{\bm{u}})^{\!\top} \bm{\mu}$ with
$\bm{J}^{\!\top} \bm{\mu} = \bm{\gamma}$ and thus, can be solved quickly since this reuses
the factorization already formed for the final Newton step. 

When the power flow equations fail to converge or the Jacobian is singular, the situation becomes more interesting. Existing AC-OPF surrogate approaches either assume this would not happen~\cite{donti2021dc} or declare failure~\cite{Par26}. But this scenario cannot be ignored for a foundation model, where all the samples in a batch could be non-convergent before fine-tuning is done. We overcome this by first dropping the $\partial \bm{y} / \partial \widehat{\bm{u}}$ term in \eqref{eq:grad-routes}. We then rely on the fourth term of \eqref{eq:finetune-loss}, $w_r \lVert \bm{r} \rVert_2^2$, to provide a descent direction. This is the classical least-squares measure of power-flow insolvability \cite{336130,373951} and is computed from its own physics rather than from proximity to a label.

Differentiating it through $\bm{y}(\widehat{\bm{u}})$ gives
\begin{equation}
\label{eq:merit-grad}
    \nabla_{\widehat{\bm{u}}} \lVert \bm{r} \rVert_2^2
    = \underbrace{
      \Bigl( \frac{\partial \bm{r}}{\partial \widehat{\bm{u}}} \Bigr)^{\!\top}
      \bm{r}}_{\text{retained}}
    + \underbrace{\Bigl( \frac{\partial \bm{y}}{\partial \widehat{\bm{u}}}
      \Bigr)^{\!\top} \bm{J}^{\!\top} \bm{r}}_{\text{neglected}},
\end{equation}
 whose neglected term carries the factor $\bm{J}^{\top}\bm{r} = \nabla_{\bm{y}} \lVert \bm{r} \rVert_2^2$, the gradient of the residual function with respect to the state, and so this vanishes wherever the closure has come to rest at a stationary point of the residual norm. A stalled solve leaves us at such a point: were it not so, the Newton direction
$\bm{d} = -\bm{J}^{-1} \bm{r}$ would satisfy
$\langle \bm{J}^{\!\top} \bm{r}, \bm{d} \rangle =
-\lVert \bm{r} \rVert_2^2 < 0$, and a sufficiently short step along it
would reduce the residual, so the line search would not have exhausted. The
retained term is therefore exact at a stationary point. 

\subsection{Downstream tasks: Feasible Projection and Warm Start}
\label{sec:downstream}
After fine-tuning, GridSFM can be used to find a solution that is nearly optimal and nearly feasible for an input load. However, there is no guarantee it is \emph{exactly} optimal or \emph{exactly} feasible. There are two natural ways to overcome this and both align with how operator may deploy surrogate models in practice. They are: (1) using it as a warm start for a conventional solve or (2) using it to find a close feasible point. 

Both of these methods are evaluated in detail in Section~\ref{sec:numerical}. The warm start method can be integrated seamlessly into existing AC-OPF algorithms, and often leads to faster solver speeds compared to flat start or DC-OPF based initialization. The projection method is even faster and works when there are no good warm starting points for the solver.

 Beyond these, we envision GridSFM will be powerful for downstream tasks that enumerate topologies such as in $N-1$ contingency screening and optimal transmission switching. Moreover, it can be used for AC feasibility checks for a candidate 
commitment, where the outer loop enumerates various loads or operating conditions. 

\section{Well-posedness of GridSFM training}\label{sec:theoretical}

A reader may reasonably ask why Section~\ref{sec:formulation} reformulates AC-OPF at all, whether relaxing changes the solution, and, what is the projection of an infeasible point onto a nonconvex feasible set.   
This section answers the three in turn. We first show that the feasible set of AC-OPF, when disconnected,
obstructs \emph{any} continuous approximator, whereas the elastic feasible set of \eqref{eq:AC-OPF-elastic} is simply connected (contractible). We then exhibit a finite threshold on the inequality parameters above which the elastic and the AC-OPF minimizers coincide. Finally, we show that the projection \eqref{eq:AC-OPF-projection} is uniquely defined and Lipschitz within an explicit radius of a
feasible point.

We begin by first isolating the only non-convex inequality constraint of \eqref{eq:AC-OPF}, the line flows. The reason for this will be clear in Assumption \ref{assumption:main} and Theorem \ref{thm:proj}. 
Hence, we introduce the auxiliary decision variables
$\tilde p_\ell$ and $\tilde q_\ell$ for the active and reactive power flow on each
branch $\ell\in\mathcal L$, stacked as
$\tilde{\mathbf p}_{\mathcal L}:=\operatorname{col}\{\tilde p_\ell\}_{\ell\in\mathcal L}\in\mathbb R^{n_\ell}$
and
$\tilde{\mathbf q}_{\mathcal L}:=\operatorname{col}\{\tilde q_\ell\}_{\ell\in\mathcal L}\in\mathbb R^{n_\ell}$,
and conveniently redefine the shorthand solution vector as
\begin{equation}
\mathbf x:=\operatorname{col}
\left(\mathbf v,\boldsymbol\theta,\mathbf p^{\mathrm g},\mathbf q^{\mathrm g},\tilde{\mathbf p}_{\mathcal L},\tilde{\mathbf q}_{\mathcal L}\right)
\in\mathbb R^{2n_b+1+2n_g+2n_\ell}.
\end{equation}
Then, the exact lifted problem becomes
\begin{equation}\label{eq:AC-OPF-lifted}
\min_{\mathbf x}\quad
 \sum_{i\in\mathcal G}c_i(p_i^{\mathrm g}), 
\text{ s.t. } 
\tilde{\mathbf g}(\mathbf x)=\mathbf 0, \tilde{\mathbf h}(\mathbf x)\leq\mathbf 0,
\end{equation}
where the equality constraints $\tilde{\mathbf g}$ formed by the original constraints in $\mathbf{g}(\mathbf{x})$ and line flow constraints $\{\tilde p_\ell-p_\ell(\mathbf v,\boldsymbol\theta)\}_{\ell\in\mathcal L}, \{\tilde q_\ell-q_\ell(\mathbf v,\boldsymbol\theta)\}_{\ell\in\mathcal L}$. 
The inequality constraints $\tilde{\mathbf h}(\mathbf x)$ collect the remaining constraints. 
Its elastic relaxation becomes
\begin{align}
& \min_{\mathbf x,\mathbf s^{\tilde g},\mathbf s^{\tilde h}}\quad
\sum_{i\in\mathcal G}c_i(p_i^{\mathrm g})
+\rho_{\tilde g}\mathbf 1^\top\log(\mathbf 1+\mathbf s^{\tilde g})
+\rho_{\tilde h}\mathbf 1^\top\log(\mathbf 1+\mathbf s^{\tilde h}) \nonumber
\\
& \text{s.t. }
-\mathbf s^{\tilde g}\leq\tilde{\mathbf g}(\mathbf x)\leq\mathbf s^{\tilde g}, \, \tilde{\mathbf h}(\mathbf x)\leq\mathbf s^{\tilde h}, \, \mathbf s^{\tilde g}\geq\mathbf 0,\, \mathbf s^{\tilde h}\geq\mathbf 0.                 \label{eq:AC-OPF-elastic-lifted}
\end{align}
where we introduce nonnegative elastic variables \(\mathbf s^{\tilde g}\) and \(\mathbf s^{\tilde h}\) for the equality and inequality residuals, respectively. For the remainder of this section and the proofs of Appendix~\ref{app:proofs}, the
projection \eqref{eq:AC-OPF-projection} is likewise understood over the lifted
variables: $\mathbf x\in\mathbb R^{N}$ with $\tilde{\mathbf g},\tilde{\mathbf h}$
in place of $\mathbf g,\mathbf h$, so that \eqref{eq:AC-OPF-projection} is the
Euclidean projection onto
$\mathcal F:=\{\mathbf x\in\mathbb R^{N}:\tilde{\mathbf g}(\mathbf x)=\mathbf 0,\
\tilde{\mathbf h}(\mathbf x)\leq\mathbf 0\}$. 

First, note that every component of $\tilde{\mathbf h}$ is convex and hence, all nonconvexity of AC-OPF is confined to $\tilde{\mathbf g}$. Next, note that the second derivatives are bounded branch-by-branch by admittance magnitudes and the voltage caps. This will allow us to bound the curvature inequalities in an easy manner.

The relaxation formulation avoids the possibility of disconnected feasible set as stated below
\begin{lemma}[The elastic feasible set is contractible]\label{lem:elastic-connected}
Define the feasible set of the lifted
elastic problem \eqref{eq:AC-OPF-elastic-lifted},
\begin{align}
  \Omega := \Bigl\{ (\mathbf{x},\mathbf{s}^{\tilde g},\mathbf{s}^{\tilde h})
     : \bigl|\tilde{\mathbf g}(\mathbf{x})\bigr| \leq \mathbf{s}^{\tilde g},\
     \bigl[\tilde{\mathbf h}(\mathbf{x})\bigr]_{+} \leq \mathbf{s}^{\tilde h} \Bigr\},
  \label{eq:Omega}
\end{align}
where $|\cdot|$, $[\,\cdot\,]_{+}:=\max\{\,\cdot\,,0\}$, and the inequalities apply componentwise. The feasible set $\Omega$ is set $\Omega$ is contractible, hence path-connected and simply connected.
\end{lemma}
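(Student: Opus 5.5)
The plan is to exhibit $\Omega$ as the epigraph-type region above a continuous function on $\mathbb R^{N}$ and to contract it explicitly. Write $\mathbf s:=(\mathbf s^{\tilde g},\mathbf s^{\tilde h})$ and define the continuous map $\phi:\mathbb R^{N}\to\mathbb R^{m}_{\ge 0}$, $\phi(\mathbf x):=\bigl(|\tilde{\mathbf g}(\mathbf x)|,[\tilde{\mathbf h}(\mathbf x)]_+\bigr)$. Continuity holds because $\tilde{\mathbf g}$ and $\tilde{\mathbf h}$ are built from polynomials, trigonometric functions of angles, and the (continuous) cost-free bound constraints, and $|\cdot|$ and $[\cdot]_+$ are continuous. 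Then
\[
\Omega=\{(\mathbf x,\mathbf s):\ \mathbf s\ge \phi(\mathbf x)\}.
\]
The key observation is that for every $\mathbf x\in\mathbb R^{N}$ the fibre $\{\mathbf s:\mathbf s\ge\phi(\mathbf x)\}$ is nonempty. No $\mathbf x$ is excluded, which is precisely what fails for $\mathcal F$.

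The deformation proceeds in two stages, concatenated into a single homotopy $H:\Omega\times[0,1]\to\Omega$.

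\emph{Stage 1} lifts the slacks. Fix the base point $\mathbf x_0=\mathbf 0$ and, for $(\mathbf x,\mathbf s)\in\Omega$, set $\bar{\mathbf s}(\mathbf x,\mathbf s):=\max\{\mathbf s,\phi(\mathbf 0)\}$ componentwise. Move $\mathbf s$ linearly to $\bar{\mathbf s}$ with $\mathbf x$ fixed. Each intermediate slack is a convex combination of two vectors that are both $\ge\phi(\mathbf x)$, so the path stays in $\Omega$.

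\emph{Stage 2} moves $\mathbf x$. Along the segment $t\mapsto(1-t)\mathbf x$, define the slacks $\mathbf s(t):=\max\{\bar{\mathbf s},\ \sup_{\tau\in[0,t]}\phi((1-\tau)\mathbf x)\}$. The running supremum guarantees $\mathbf s(t)\ge\phi((1-t)\mathbf x)$, so the path stays in $\Omega$. It is continuous in $(\mathbf x,\mathbf s,t)$ because the running maximum of a jointly continuous function over a compact interval depends continuously on the parameters.

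\emph{Stage 3} finishes at $(\mathbf 0,\mathbf s(1))$ with $\mathbf s(1)\ge\phi(\mathbf 0)$. A final straight-line move of the slacks to $\phi(\mathbf 0)$, or to any fixed $\mathbf s_0\ge\phi(\mathbf 0)$, stays in the convex fibre over $\mathbf 0$.

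Concatenating the stages gives $H_0=\mathrm{id}$ and $H_1\equiv(\mathbf 0,\mathbf s_0)$, so $\Omega$ is contractible. Path-connectedness and simple connectivity follow as standard consequences of contractibility.

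A cleaner alternative avoids the running maximum. Observe that $\Omega$ is homeomorphic to $\mathbb R^{N}\times\mathbb R^{m}_{\ge0}$ via $(\mathbf x,\mathbf s)\mapsto(\mathbf x,\mathbf s-\phi(\mathbf x))$, which is continuous with continuous inverse $(\mathbf x,\mathbf t)\mapsto(\mathbf x,\mathbf t+\phi(\mathbf x))$. The target is convex, hence contractible, and the homeomorphism transports the contraction back to $\Omega$.

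I would present this second argument as the proof, since it reduces everything to checking continuity of $\phi$. The only real obstacle is confirming that $\tilde{\mathbf g}$ and $\tilde{\mathbf h}$ are defined and continuous on all of $\mathbb R^{N}$, with no domain restrictions such as $v_i>0$ hidden in the formulation. The branch flows \eqref{eq:branch-flows} and the box and quadratic inequalities are globally defined, so this holds.
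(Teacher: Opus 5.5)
Your second argument is the paper's proof. The paper uses the same shear map $(\mathbf x,\mathbf s^{\tilde g},\mathbf s^{\tilde h})\mapsto(\mathbf x,\mathbf s^{\tilde g}-|\tilde{\mathbf g}(\mathbf x)|,\mathbf s^{\tilde h}-[\tilde{\mathbf h}(\mathbf x)]_+)$ as a homeomorphism of the ambient space that carries $\Omega$ onto the convex set $\mathbb R^N\times\mathbb R^{n_e}_{\ge0}\times\mathbb R^{n_i}_{\ge0}$. Your explicit running-maximum homotopy is also valid, and its Stage 1 is redundant because the supremum at $\tau=1$ already dominates $\phi(\mathbf 0)$; the homeomorphism route is the right one to present.
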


We note that the contractibility of $\Omega$ does not imply that approximating the elastic solution map is easy. The elastic optimization remains nonconvex and its solution map need not be continuous. Instead, in the two results that follow, we provide checkable conditions for an operator to verify their foundation models performance such that the resulting slacked AC-OPF is both learnable, and any feasible projection is well-posed. To do so, we require the following assumption:

\begin{assumption}[Linear Inequality Constraint Qualification (LICQ) at the target point]\label{assumption:main}
Let $\mathcal{F}\subseteq\mathbb R^{N}$ be the feasible set of the lifted problem \eqref{eq:AC-OPF-lifted}
and let $\mathbf x^\ast\in\mathcal{F}$ be a local minimizer. 
For $\mathbf x\in\mathcal{F}$, let
$\mathcal{A}(\mathbf x):=\{\tilde h_k(\mathbf x)=0\}$ denote the active inequalities and
\begin{align}
    \mathbf J(\mathbf x):=\begin{bmatrix}
        \nabla \tilde{\mathbf{g}}(\mathbf x)^\top \\[2pt]
        \nabla \tilde h_k(\mathbf x)^\top, & k\in\mathcal{A}(\mathbf x)
    \end{bmatrix},
    \label{eq:active-jacobian}
\end{align}
the active-constraint Jacobian. We assume there exist a radius $R_0>0$ and a constant
$\underline{\sigma}>0$ such that
\begin{equation}\label{eq:licq}
    \sigma_{\min}\!\big(\mathbf J(\mathbf x)\big)\;\ge\;\underline{\sigma}
    \qquad\text{for all } \mathbf x\in\mathcal{F}\cap\overline{B}(\mathbf x^\ast,R_0),
\end{equation}
and such that $R_0\ge\delta$, where
\begin{equation}\label{eq:Mdelta}
    \delta:=\frac{\underline{\sigma}}{M},
    \qquad
    M:=\kappa\left(\max_{i\in\mathcal N_0}\ \sum_{\ell=(i,k)\in\mathcal{L}_i}|Y_{ik}|\right),
\end{equation}
with $\kappa:=10\,(1+\overline{v})$, $\overline{v}:=\max_{i\in\mathcal{N}_0}\overline{v}_i$ the largest
upper voltage limit in \eqref{eq:AC-OPF-f}, and $Y_{ik}=G_{ik}+\mathrm{j}B_{ik}$ the off-diagonal entry
of the bus admittance matrix appearing in \eqref{eq:branch-flows}.
\end{assumption}

This assumption requires that the power flow and active inequality constraint Jacobian is non-singular in a small radius around any minimizer. This is the standard LICQ assumption which, for most practical grids, is satisfied (see \cite{8391733}). Second, we assume this radius is larger than a constant $\delta$, which is always satisfiable given LICQ as $\underline{\sigma}$ can always be lowered.

In Theorem \ref{thm:proj}, we will see that $\delta/2$ will be the reachable radius such that the projection is well-posed. It is distinctly made of two components, $\underline{\sigma}$ which controls the feasibility distance of the original AC-OPF problem, and $M$ depending on network parameters. We first show that, for the radius $R_0$, any minimizer of the original AC-OPF problem is a minimizer of the elastic problem given large enough penalty on the slack:

\begin{lemma}[Local exactness at $\mathbf x^\ast$]\label{lem:exact}
Let Assumption~\ref{assumption:main} hold, and write
$c(\mathbf x):=\sum_{i\in\mathcal G}c_i(p_i^{\mathrm g})$ for the cost of
\eqref{eq:AC-OPF-lifted}. Additionally, assume that $c$ is continuously
differentiable on $\overline{B}(\mathbf x^\ast,R_0)$ and let $L_c>0$ be a
constant with $\lVert\nabla c(\mathbf x)\rVert\le L_c$ for all
$\mathbf x\in\overline{B}(\mathbf x^\ast,R_0)$. Set $\rho_{\min}:=\min\{\rho_{\tilde g},\rho_{\tilde h}\}$. If
\begin{equation}\label{eq:rhocond}
    \rho_{\min}\;>\;L_c/\underline{\sigma},
\end{equation}
then $(\mathbf x^\ast,\mathbf 0,\mathbf 0)$ is a local minimizer of the
elastic problem \eqref{eq:AC-OPF-elastic-lifted}.
\end{lemma}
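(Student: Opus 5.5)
Our plan is to compare the elastic objective at a nearby feasible point of \eqref{eq:AC-OPF-elastic-lifted} with its value at $(\mathbf x^\ast,\mathbf 0,\mathbf 0)$, using the LICQ bound of Assumption~\ref{assumption:main} to control how far $\mathbf x$ can lie from $\mathcal F$ in terms of the slacks.

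\textbf{Reduction to minimal slacks.} Fix a feasible $(\mathbf x,\mathbf s^{\tilde g},\mathbf s^{\tilde h})$ in a small neighborhood of $(\mathbf x^\ast,\mathbf 0,\mathbf 0)$. The penalty $\log(1+s)$ is increasing, so the objective can only decrease if we replace the slacks by $|\tilde{\mathbf g}(\mathbf x)|$ and $[\tilde{\mathbf h}(\mathbf x)]_+$. Because the slacks tend to zero, we may assume every component of $s$ is at most some $\bar s$. Concavity then gives $\log(1+s)\ge s/(1+\bar s)$, and this lower bound can be brought as close to $s$ as we wish. It therefore suffices to show
\[
c(\mathbf x)+\rho_{\min}(1-\epsilon)\bigl(\lVert\tilde{\mathbf g}(\mathbf x)\rVert_1+\lVert[\tilde{\mathbf h}(\mathbf x)]_+\rVert_1\bigr)\ \ge\ c(\mathbf x^\ast)
\]
for all $\mathbf x$ near $\mathbf x^\ast$, with $\epsilon$ small enough that $\rho_{\min}(1-\epsilon)>L_c/\underline\sigma$. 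This is possible by the strict inequality \eqref{eq:rhocond}.

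\textbf{Error bound (main step).} The key ingredient is a local error bound
\[
\operatorname{dist}(\mathbf x,\mathcal F)\ \le\ \frac{1+\eta}{\underline\sigma}\Bigl\lVert\bigl(\tilde{\mathbf g}(\mathbf x),[\tilde{\mathbf h}(\mathbf x)]_+\bigr)\Bigr\rVert_2
\]
for $\mathbf x$ near $\mathbf x^\ast$, with $\eta\to 0$ as the neighborhood shrinks. We expect this to be the main obstacle, because only the constraints in $\mathcal A(\mathbf x^\ast)$ are active at $\mathbf x^\ast$, while $\mathbf J$ also has full rank on nearby feasible points.

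We would prove it by a Newton/Lyusternik-type construction. Near $\mathbf x^\ast$, the inactive inequalities stay strictly negative by continuity, so only $\tilde{\mathbf g}$ and the constraints indexed by $\mathcal A(\mathbf x^\ast)$ matter. We would solve the system $\tilde{\mathbf g}=\mathbf 0$, $\tilde h_k=\min\{\tilde h_k(\mathbf x),0\}$ for $k\in\mathcal A(\mathbf x^\ast)$ using the Gauss--Newton step $\mathbf x-\mathbf J^{+}\mathbf r$. Here $\lVert\mathbf J^{+}\rVert\le1/\underline\sigma$ by \eqref{eq:licq}, extended from $\mathcal F$ to a neighborhood of $\mathbf x^\ast$ by continuity of $\mathbf J$, at the cost of the factor $(1+\eta)$. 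The Lipschitz constant of $\mathbf J$ is bounded by $M$, from the curvature remark preceding the lemma. The standard contraction argument then yields a feasible point $\mathbf y\in\mathcal F$ with $\lVert\mathbf x-\mathbf y\rVert\le(1+\eta)\lVert\mathbf r\rVert/\underline\sigma$. The condition $R_0\ge\delta$ keeps the iterates inside $\overline B(\mathbf x^\ast,R_0)$. The $\tilde{\mathbf h}$ components are convex, and in fact affine or convex quadratic, so the reduction to active constraints is harmless.

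\textbf{Conclusion.} Let $\mathbf y$ be the feasible point just constructed. It lies close to $\mathbf x^\ast$, so the local minimality of $\mathbf x^\ast$ gives $c(\mathbf y)\ge c(\mathbf x^\ast)$. The gradient bound on $c$ gives the mean-value estimate $c(\mathbf x)\ge c(\mathbf y)-L_c\lVert\mathbf x-\mathbf y\rVert$. Combining these with the error bound and $\lVert\cdot\rVert_2\le\lVert\cdot\rVert_1$,
\[
c(\mathbf x)\ \ge\ c(\mathbf x^\ast)-\frac{L_c(1+\eta)}{\underline\sigma}\Bigl\lVert\bigl(\tilde{\mathbf g},[\tilde{\mathbf h}]_+\bigr)\Bigr\rVert_1 .
\]
Choosing $\eta$ and $\epsilon$ small, so that $\rho_{\min}(1-\epsilon)\ge L_c(1+\eta)/\underline\sigma$, gives the desired inequality. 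Hence $(\mathbf x^\ast,\mathbf 0,\mathbf 0)$ is a local minimizer of \eqref{eq:AC-OPF-elastic-lifted}.
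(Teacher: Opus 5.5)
Your argument is correct, but its main step, the local error bound, takes a different route from the paper's.

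The paper never builds a feasible point by iteration. Its steps are:
\begin{itemize}
\item It lets $\bar{\mathbf x}$ be a nearest point of $\mathcal F$ to $\mathbf x$ and writes the KKT condition of that projection as $\mathbf x-\bar{\mathbf x}=\mathbf J(\bar{\mathbf x})^\top\mathbf w$. This invokes the uniform LICQ of Assumption~\ref{assumption:main} at the nearby feasible point $\bar{\mathbf x}$, not only at $\mathbf x^\ast$.
\item It pairs this identity with $\Delta\mathbf x:=\mathbf x-\bar{\mathbf x}$ and uses $\|\mathbf w\|\le\|\Delta\mathbf x\|/\underline\sigma$. This shows the linearized violation at $\bar{\mathbf x}$ is at least $\underline\sigma\|\Delta\mathbf x\|$.
\item A second-order Taylor expansion about $\bar{\mathbf x}$ then bounds the total violation below by $(\underline\sigma-\tfrac{\Lambda}{2}\|\Delta\mathbf x\|)\|\Delta\mathbf x\|$, where $\Lambda$ is a Hessian bound.
\item The cost comparison is then made at $\bar{\mathbf x}$, exactly as you make it at $\mathbf y$.
\end{itemize}
That route is short, needs no convergence analysis, and automatically handles whichever inequalities happen to be active at $\bar{\mathbf x}$.

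Your Lyusternik--Graves/Gauss--Newton construction uses only the conditioning of the $\mathbf x^\ast$-active Jacobian at $\mathbf x^\ast$, plus $C^2$ smoothness and continuity. It therefore proves the lemma under pointwise LICQ at $\mathbf x^\ast$. It also gives a slightly sharper $\ell_2$ error bound and an explicit feasible point.

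The price is a quantitative convergence estimate for the iteration, which you have only sketched and should write out. The total displacement is at most $\|\mathbf r\|/(s(1-q))$. Here $s$ is a lower bound on the smallest singular value near $\mathbf x^\ast$, and $q\to0$ as $\mathbf x\to\mathbf x^\ast$.

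Two of your side remarks are unnecessary and should not be relied on:
\begin{itemize}
\item Treating $M$ as a Lipschitz constant for $\mathbf J$ is not justified as stated. $M$ bounds multiplier-weighted Hessians using the voltage caps, and infeasible iterates need not respect those caps. Any local Lipschitz constant of $\mathbf J$ on a sufficiently small neighborhood of $\mathbf x^\ast$ suffices.
\item The condition $R_0\ge\delta$ plays no role, since the neighborhood can simply be shrunk.
\end{itemize}
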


\begin{proofsketch}
The elastic problem reduces to minimizing
$f_\rho(\mathbf x)=c(\mathbf x)+\rho_{\tilde g}\mathbf 1^\top\log(\mathbf 1+|\tilde{\mathbf g}(\mathbf x)|)
+\rho_{\tilde h}\mathbf 1^\top\log(\mathbf 1+[\tilde{\mathbf h}(\mathbf x)]_+)$, and for $\mathbf x$ near $\mathbf x^\ast$. We show that any improvement in cost must at least incur an equivalent or greater  feasibility penalty. For the cost, using the Lipschitz bound on $c$ gives $c(\mathbf x)\ge c(\mathbf x^\ast)-L_c d$ where $d := \mathrm{dist}(\mathbf{x}, \mathcal{F})$. Analogously, it can be shown using the KKT conditions onto $\mathcal{F}$ with the LICQ assumption and $\rho_{\min}$ in \eqref{eq:rhocond} that the infeasibility penalty is at least $\rho_{\min}(\underline\sigma-O(d))\,d\ge L_c d$ for small $d$. Hence, $(\mathbf{x}^\ast, \bm 0, \bm 0)$ is a minimizer. 
The full proof is given in Appendix \ref{appendix:lemma2proof}.
\end{proofsketch} 

Lemma~\ref{lem:exact} says that once $\rho_{\min}$
clears $L_c/\underline{\sigma}$, the slacks vanish at $x^\ast$ and training against the
elastic objective targets exactly results in the same minimizers of the lifted AC-OPF. However, $\underline{\sigma}$ is not necessarily known by operators prior since, $x^\ast$ is not known. Hence, computing $\rho_{\rm min}$ is not necessarily practical.

Therefore, the resulting output from GridSFM should be expected to have some slack, whose valuable is directly checkable. We are interested in the result of the projection. Theorem~\ref{thm:proj} shows that this projection map is well posed
in a neighborhood of $x^\ast$, and that the neighborhood is governed by the constraint Jacobian. Explicitly, a larger $\underline{\sigma}$ yields a wider  radius for $\delta=\underline{\sigma}/M$ on which the projection can be trusted.

\begin{figure}[ht]
    \centering
    \includegraphics{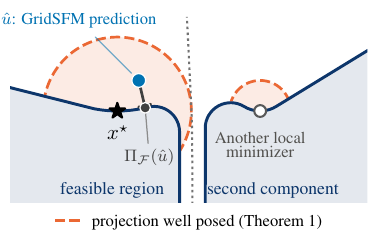}
    \caption{Illustrates geometrically the idea of the projection of Theorem \ref{thm:proj}}
    \label{fig:reach}
\end{figure}

\begin{theorem}[Local projection radius]\label{thm:proj}
Let Assumption~1 hold and let $\mathcal{B}_{\mathbf x^\ast}$ denote the connected component of $\mathcal{F}$
containing $\mathbf x^\ast$. Then for every $\hat{\mathbf x}\in B(\mathbf x^\ast,\delta/2)$ the projection \eqref{eq:AC-OPF-projection} has exactly
one stationary point $\boldsymbol\xi\in\mathcal{F}\cap\overline{B}(\mathbf x^\ast,R_0)$ with $\|\hat{\mathbf x}-\boldsymbol\xi\|<\delta$. Denoting
it $\Pi^{\mathrm{loc}}_{\mathcal{F}}(\hat{\mathbf x})$, we have
\begin{enumerate}[leftmargin=*]
\item[(i)] \emph{(uniqueness)} $\Pi^{\mathrm{loc}}_{\mathcal{F}}(\hat{\mathbf x})$ coincides with the global projection
$\Pi_{\mathcal{F}}(\hat{\mathbf x})$
\item[(ii)] \emph{(correct component)} $\Pi^{\mathrm{loc}}_{\mathcal{F}}(\hat{\mathbf x})\in\mathcal{B}_{\mathbf x^\ast}$
\item[(iii)] \emph{(Lipschitz)} for $\hat{\mathbf x}_1,\hat{\mathbf x}_2\in B(\mathbf x^\ast,\delta/2)$ with
$s_0:=\max_i\mathrm{dist}(\hat{\mathbf x}_i,\mathcal{F})<\delta/2$, we have $\|\Pi^{\mathrm{loc}}_{\mathcal{F}}(\hat{\mathbf x}_1)-\Pi^{\mathrm{loc}}_{\mathcal{F}}(\hat{\mathbf x}_2)\| \leq  \frac{1}{1-s_0/\delta}\,\|\hat{\mathbf x}_1-\hat{\mathbf x}_2\| < 2\,\|\hat{\mathbf x}_1-\hat{\mathbf x}_2\|.$
\end{enumerate}
\end{theorem}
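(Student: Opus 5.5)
The plan is to treat $\mathcal F\cap\overline B(\mathbf x^\ast,R_0)$ as a set of \emph{positive reach} at least $\delta=\underline\sigma/M$, in the sense of Federer. Uniqueness, correct component and the Lipschitz bound $1/(1-s_0/\delta)$ then follow from the standard properties of the metric projection onto sets with reach $\ge\delta$.

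\textbf{Curvature bound.} The first step is to bound the second derivatives of the constraints. Every component of $\tilde{\mathbf h}$ is convex, and the only nonconvexity lies in $\tilde{\mathbf g}$. Differentiating \eqref{eq:branch-flows} twice in $(\mathbf v,\boldsymbol\theta)$ gives entries bounded by a constant multiple of $(1+\overline v)|Y_{ik}|$. Summing branch by branch through \eqref{eq:nodal-injections} then yields $\|\nabla^2\tilde g_m(\mathbf x)\|\le M$ on $\overline B(\mathbf x^\ast,R_0)$, where $\kappa=10(1+\overline v)$ absorbs the numerical constants. Combining this with \eqref{eq:licq} gives the key inequality. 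For feasible $\boldsymbol\xi,\boldsymbol\eta$ in the ball and any KKT multiplier vector $\boldsymbol\lambda$ for the projection at $\boldsymbol\xi$, we have $\|\boldsymbol\lambda\|\le\|\hat{\mathbf x}-\boldsymbol\xi\|/\underline\sigma$, because $\hat{\mathbf x}-\boldsymbol\xi=\mathbf J(\boldsymbol\xi)^\top\boldsymbol\lambda$. Taylor expansion with the Hessian bound (convex $\tilde h$ terms only help) then gives
\begin{equation*}
\langle \hat{\mathbf x}-\boldsymbol\xi,\ \boldsymbol\eta-\boldsymbol\xi\rangle\le \frac{\|\hat{\mathbf x}-\boldsymbol\xi\|}{2\delta}\,\|\boldsymbol\eta-\boldsymbol\xi\|^2 .
\end{equation*}
This is exactly Federer's reach-$\delta$ normal-cone inequality, localized to the ball.

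\textbf{Existence and uniqueness.} For $\hat{\mathbf x}\in B(\mathbf x^\ast,\delta/2)$, a minimizer of $\|\mathbf x-\hat{\mathbf x}\|$ over the compact set $\mathcal F\cap\overline B(\mathbf x^\ast,R_0)$ exists. It lies at distance at most $\|\hat{\mathbf x}-\mathbf x^\ast\|<\delta/2$ from $\hat{\mathbf x}$, so it lies in the interior of the ball since $R_0\ge\delta$. Moreover LICQ holds there, so it is a KKT (stationary) point. For uniqueness, suppose $\boldsymbol\xi_1,\boldsymbol\xi_2$ are two stationary points with $\|\hat{\mathbf x}-\boldsymbol\xi_i\|<\delta$. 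Apply the inequality at each of them with $\boldsymbol\eta$ equal to the other, and add the two. This gives
\begin{equation*}
\|\boldsymbol\xi_1-\boldsymbol\xi_2\|^2\le \frac{\|\hat{\mathbf x}-\boldsymbol\xi_1\|+\|\hat{\mathbf x}-\boldsymbol\xi_2\|}{2\delta}\|\boldsymbol\xi_1-\boldsymbol\xi_2\|^2 ,
\end{equation*}
and the coefficient is $<1$, which forces $\boldsymbol\xi_1=\boldsymbol\xi_2$.

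\textbf{Parts (i) and (ii).} For (i), any global projection point lies within distance $<\delta/2$ of $\hat{\mathbf x}$. It is therefore inside the ball and stationary, so by uniqueness it coincides with $\Pi^{\mathrm{loc}}_{\mathcal F}(\hat{\mathbf x})$. For (ii), move $\hat{\mathbf x}$ along the segment $t\mapsto \mathbf x^\ast+t(\hat{\mathbf x}-\mathbf x^\ast)$. By the Lipschitz property in (iii), proved independently below, the projection path is continuous. It starts at $\Pi(\mathbf x^\ast)=\mathbf x^\ast$ and stays in $\mathcal F$, so its endpoint lies in $\mathcal B_{\mathbf x^\ast}$.

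\textbf{Part (iii).} Write $\boldsymbol\xi_i=\Pi^{\mathrm{loc}}_{\mathcal F}(\hat{\mathbf x}_i)$ and $s_i=\|\hat{\mathbf x}_i-\boldsymbol\xi_i\|\le s_0$. Apply the inequality at $\boldsymbol\xi_1$ with $\boldsymbol\eta=\boldsymbol\xi_2$ and vice versa, then add:
\begin{equation*}
\langle(\hat{\mathbf x}_1-\hat{\mathbf x}_2)-(\boldsymbol\xi_1-\boldsymbol\xi_2),\ \boldsymbol\xi_2-\boldsymbol\xi_1\rangle\le\frac{s_0}{\delta}\|\boldsymbol\xi_1-\boldsymbol\xi_2\|^2 .
\end{equation*}
Rearranging gives $(1-s_0/\delta)\|\boldsymbol\xi_1-\boldsymbol\xi_2\|^2\le\langle\hat{\mathbf x}_1-\hat{\mathbf x}_2,\boldsymbol\xi_1-\boldsymbol\xi_2\rangle$. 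Cauchy--Schwarz then yields the bound $1/(1-s_0/\delta)$, which is less than $2$ because $s_0<\delta/2$.

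\textbf{Main obstacle.} The hardest step is the Taylor/normal-cone inequality. Its remainder is only controlled when the whole segment $[\boldsymbol\xi,\boldsymbol\eta]$ stays inside the ball where the Hessian bound holds; points in $\mathcal F$ outside that ball must be handled separately using $\|\boldsymbol\eta-\hat{\mathbf x}\|\ge R_0-\delta/2\ge\delta/2$. The inequality also needs the multiplier bound to use the active-set Jacobian at the point where stationarity is taken. I would first prove it for $\boldsymbol\eta$ in the ball, and then settle the far-away competitors by the distance argument just described.
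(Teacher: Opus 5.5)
Your plan is essentially the paper's. Your cross-term inequality $\langle\hat{\mathbf x}-\boldsymbol\xi,\boldsymbol\eta-\boldsymbol\xi\rangle\le\frac{\|\hat{\mathbf x}-\boldsymbol\xi\|}{2\delta}\|\boldsymbol\eta-\boldsymbol\xi\|^2$ is exactly what the paper derives en route to Lemma~\ref{lem:reach}, using the same ingredients: the KKT representation $\hat{\mathbf x}-\boldsymbol\xi=\mathbf J(\boldsymbol\xi)^\top\mathbf w$, convexity of $\tilde{\mathbf h}$ with complementary slackness, Taylor on $\tilde{\mathbf g}$, and the LICQ multiplier bound. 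Your Lipschitz estimate (add the two cross-term inequalities, then Cauchy--Schwarz) and your component argument (by continuity) are also the paper's. Your uniqueness argument adds the inequalities at two candidate stationary points, whereas the paper compares a stationary point with the global projection via \eqref{eq:reach-growth}; this is a harmless variant.

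\textbf{The gap is the curvature step, which does not go through as written.} You bound each Hessian separately, $\|\nabla^2\tilde g_m\|\le M$, and then combine this with $\|\boldsymbol\lambda\|_2\le\|\hat{\mathbf x}-\boldsymbol\xi\|/\underline\sigma$. But the remainder you must control is $\tfrac12\sum_m\lambda_m(\boldsymbol\eta-\boldsymbol\xi)^\top\nabla^2\tilde g_m(\mathbf z_m)(\boldsymbol\eta-\boldsymbol\xi)$. A per-constraint bound only gives $\tfrac M2\|\boldsymbol\lambda\|_1\|\boldsymbol\eta-\boldsymbol\xi\|^2$, and LICQ controls only $\|\boldsymbol\lambda\|_2$, with $\|\boldsymbol\lambda\|_1$ possibly as large as $\sqrt{n_e}\,\|\boldsymbol\lambda\|_2$. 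Your route therefore yields a radius of order $\delta/\sqrt{n_e}$, not the size-independent $\delta=\underline\sigma/M$ of the theorem.

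What you need is the aggregated estimate $\|\sum_m\lambda_m\nabla^2\tilde g_m(\mathbf z_m)\|\le M\|\boldsymbol\lambda\|_2$. The paper obtains it from sparsity. Each branch cross term enters at most six residuals and each bus self term at most two, so the multipliers attached to any one term have total absolute value at most $\sqrt6\,\|\boldsymbol\lambda\|_2$. Regrouping $\sum_{\ell=(i,k)}|Y_{ik}|(u_{v_i}^2+u_{\theta_i}^2+u_{v_k}^2+u_{\theta_k}^2)$ bus by bus then bounds the quadratic form by the largest admittance row sum times $\|\mathbf u\|^2$. That is where the row-sum form of $M$ comes from.

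\textbf{Your ``main obstacle'' is misplaced.} The Hessian bound depends only on $v_i\le\overline v$, not on membership in $\overline B(\mathbf x^\ast,R_0)$. In fact the ball can contain points with $v_i>\overline v$, so stating the bound ``on the ball'' is not quite right. Every point of a segment between two feasible points lies in the convex voltage box, so the cross-term inequality holds for all $\boldsymbol\eta\in\mathcal F$ once $\boldsymbol\xi$ lies in the ball, which is where LICQ is used. No separate treatment of distant competitors is needed, although your distance argument is also valid.
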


Theorem \ref{thm:proj} provides a concrete radius such that the projection is well-posed in the sense that it is unique, Lipschitz, and on the same connected component as the nearby local minimizer of interest. 
In Figure \ref{fig:reach}, we see that for two disconnected components, the reach radius of $\delta/2$ exactly coincides with the radius for which the projection will align with the correct component. The value of $\delta/2$ is a direct function of the curvature of the feasible set which is controlled by by the constraint Jacobian $(\underline{\sigma})$ and the topology of the network $(M)$. Together, they provide a certificate for the user to check well-posedness. Lastly, from an practical perspective, we find the projection is always well-posed as it converges for all finetuned models in Section \ref{sec:numerical} when using GridSFM.

\begin{proofsketch}
We start with the following lemma
\begin{lemma}\label{lem:reach}
Let Assumption~\ref{assumption:main} hold and let $\boldsymbol\xi\in\mathcal F\cap\overline B(\mathbf x^\ast,R_0)$ be a
stationary point of the projection \eqref{eq:AC-OPF-projection} of $\hat{\mathbf x}$. Then for every
$\boldsymbol\zeta\in\mathcal F\cap\overline B(\mathbf x^\ast,R_0)$,
\begin{align}
    \|\hat{\mathbf x}-\boldsymbol\zeta\|^2\ \ge\ \|\hat{\mathbf x}-\boldsymbol\xi\|^2
    +\Big(1-\tfrac{\|\hat{\mathbf x}-\boldsymbol\xi\|}{\delta}\Big)\|\boldsymbol\zeta-\boldsymbol\xi\|^2 .
    \label{eq:reach-growth}
\end{align}
\end{lemma}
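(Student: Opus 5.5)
The plan is a standard positive-reach argument adapted to the lifted structure. Stationarity of $\boldsymbol\xi$ for the projection \eqref{eq:AC-OPF-projection} gives KKT multipliers. By Assumption~\ref{assumption:main}, LICQ holds at $\boldsymbol\xi$, so the multipliers exist and are unique:
\begin{equation*}
\hat{\mathbf x}-\boldsymbol\xi=\sum_j\lambda_j\nabla\tilde g_j(\boldsymbol\xi)+\sum_{k\in\mathcal A(\boldsymbol\xi)}\mu_k\nabla\tilde h_k(\boldsymbol\xi),\qquad \mu_k\ge0 .
\end{equation*}
(The squared-norm objective is used here; the factor $\tfrac12$ and the missing square in \eqref{eq:AC-OPF-projection} only rescale the multipliers.)

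Since the rows of $\mathbf J(\boldsymbol\xi)$ have smallest singular value at least $\underline\sigma$, I will bound the multiplier vector by
\begin{equation*}
\|(\boldsymbol\lambda,\boldsymbol\mu)\|\le\|\hat{\mathbf x}-\boldsymbol\xi\|/\underline\sigma .
\end{equation*}
I then expand the identity
\begin{equation*}
\|\hat{\mathbf x}-\boldsymbol\zeta\|^2=\|\hat{\mathbf x}-\boldsymbol\xi\|^2+\|\boldsymbol\zeta-\boldsymbol\xi\|^2-2\langle\hat{\mathbf x}-\boldsymbol\xi,\boldsymbol\zeta-\boldsymbol\xi\rangle ,
\end{equation*}
and the whole task reduces to showing
\begin{equation*}
2\langle\hat{\mathbf x}-\boldsymbol\xi,\boldsymbol\zeta-\boldsymbol\xi\rangle\le\frac{\|\hat{\mathbf x}-\boldsymbol\xi\|}{\delta}\|\boldsymbol\zeta-\boldsymbol\xi\|^2 .
\end{equation*}

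To get this, I substitute the KKT expansion into the inner product and treat the two constraint types separately.

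For the inequalities, every $\tilde h_k$ is convex after the lifting. So for active $k$,
\begin{equation*}
\langle\nabla\tilde h_k(\boldsymbol\xi),\boldsymbol\zeta-\boldsymbol\xi\rangle\le\tilde h_k(\boldsymbol\zeta)-\tilde h_k(\boldsymbol\xi)=\tilde h_k(\boldsymbol\zeta)\le0 .
\end{equation*}
Because $\mu_k\ge0$, the inequality terms contribute nonpositively and can be dropped. This is precisely why the line-flow limits were lifted into $\tilde{\mathbf g}$: it confines all nonconvexity to equality constraints.

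For the equalities, I use a second-order Taylor expansion with $\tilde g_j(\boldsymbol\zeta)=\tilde g_j(\boldsymbol\xi)=0$, which gives
\begin{equation*}
\bigl|\langle\nabla\tilde g_j(\boldsymbol\xi),\boldsymbol\zeta-\boldsymbol\xi\rangle\bigr|\le\tfrac12\sup\|\nabla^2\tilde g_j\|\,\|\boldsymbol\zeta-\boldsymbol\xi\|^2 .
\end{equation*}
The sup is taken along the segment, which stays inside $\overline B(\mathbf x^\ast,R_0)$ by convexity of the ball.

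The main obstacle is the aggregated curvature bound
\begin{equation*}
\Bigl\|\sum_j\lambda_j\nabla^2\tilde g_j\Bigr\|\le M\|\boldsymbol\lambda\| ,
\end{equation*}
with $M=\kappa\max_i\sum_{\ell\in\mathcal L_i}|Y_{ik}|$ as in \eqref{eq:Mdelta}. Once this holds, the multiplier bound yields
\begin{equation*}
2\langle\hat{\mathbf x}-\boldsymbol\xi,\boldsymbol\zeta-\boldsymbol\xi\rangle\le M\frac{\|\hat{\mathbf x}-\boldsymbol\xi\|}{\underline\sigma}\|\boldsymbol\zeta-\boldsymbol\xi\|^2=\frac{\|\hat{\mathbf x}-\boldsymbol\xi\|}{\delta}\|\boldsymbol\zeta-\boldsymbol\xi\|^2 ,
\end{equation*}
which is \eqref{eq:reach-growth}.

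To prove the curvature bound I would proceed as follows.
\begin{itemize}
\item Note that $\tilde{\mathbf g}$ is affine in $\mathbf p^{\mathrm g},\mathbf q^{\mathrm g},\tilde{\mathbf p}_{\mathcal L},\tilde{\mathbf q}_{\mathcal L}$. Its only curvature comes from the terms $v_iv_k(G_{ik}\cos\theta_{ik}+B_{ik}\sin\theta_{ik})$ and their reactive analogues, each depending on the four variables $(v_i,v_k,\theta_i,\theta_k)$.
\item Bound each such term's Hessian entrywise by $|Y_{ik}|$ times a polynomial in $\overline v$. The entries are at most $|Y_{ik}|\max(1,v_iv_k,v_i,v_k)$, and I expect this to be where $\kappa=10(1+\overline v)$ absorbs the constants.
\item Bound the spectral norm of the multiplier-weighted sum by a Gershgorin/Schur-type row-sum argument over buses. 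Each branch Hessian couples only its two endpoints, and each bus is touched only by the multipliers of the nodal balances at its neighbours and of the branch constraints on $\mathcal L_i$.
\end{itemize}

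Two points need care in this step. First, the voltages at $\boldsymbol\zeta,\boldsymbol\xi$ and along the segment must stay bounded by $\overline v$, or by a slightly enlarged constant that $\kappa$ covers, since $R_0$ may exceed the voltage box; the $(1+\overline v)$ slack in $\kappa$ presumably handles this. Second, the $\ell_1$-to-$\ell_2$ passage from $\sum_j|\lambda_j|$ to $\|\boldsymbol\lambda\|$ must be done via Cauchy--Schwarz on each bus's row, not globally, so that only the local degree sum enters $M$.
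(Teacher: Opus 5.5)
Your proposal is correct and follows the paper's proof essentially step for step (KKT at $\boldsymbol\xi$ with the LICQ multiplier bound, convexity to discard the $\tilde{\mathbf h}$ terms, Taylor expansions of the $\tilde g_j$ gathered into $\mathbf A=\sum_j\lambda_j\nabla^2\tilde g_j(\mathbf z_j)$, then $\|\mathbf A\|\le M\|\boldsymbol\lambda\|$ from the branch structure); the only variation is that the paper bounds $\|\mathbf A\|$ through per-term Frobenius norms and a $\sqrt6\,\|\boldsymbol\lambda\|_2$ count of the residuals sharing each branch, where you propose Gershgorin row sums. Minor fixes: keep the Taylor identities signed before weighting by $\lambda_j$ (your per-$j$ absolute bound with the global factor $\|\boldsymbol\zeta-\boldsymbol\xi\|^2$ would grow with network size); note that $v_i\le\overline v$ on $[\boldsymbol\xi,\boldsymbol\zeta]$ is automatic because both endpoints are feasible and the voltage box is convex; and be aware that $\kappa=10(1+\overline v)$ absorbs the $\overline v^{2}$-sized Hessian entries only under a cap such as $\overline v\le 2$, which the paper also imposes.
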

This lemma is proved in Appendix \ref{appendix:bowl-proof}, and says that moving along $\mathcal F$ away from $\boldsymbol\xi$ increases the distance to
$\hat{\mathbf x}$ at the quadratic rate $1-\|\hat{\mathbf x}-\boldsymbol\xi\|/\delta$.
Since $\hat{\mathbf x}-\boldsymbol\xi$ is normal to $\mathcal F$ at $\boldsymbol\xi$, a competitor
could only come closer if $\mathcal F$ bent back toward $\hat{\mathbf x}$ at second order, and
$\delta=\underline\sigma/M$ is the radius that caps this bending. The three claims follow.
\begin{enumerate}
\item[(i)] (unique, point-valued) Since $\mathcal F$ is closed, a global projection $\bar{\boldsymbol\xi}$ exists, and
$\|\hat{\mathbf x}-\bar{\boldsymbol\xi}\|\le\|\hat{\mathbf x}-\mathbf x^\ast\|<\delta/2$ places it in
$\overline B(\mathbf x^\ast,\delta)\subseteq\overline B(\mathbf x^\ast,R_0)$. For any other stationary point
$\boldsymbol\xi$ with $\|\hat{\mathbf x}-\boldsymbol\xi\|<\delta$, applying \eqref{eq:reach-growth} at $\boldsymbol\xi$ with
$\boldsymbol\zeta=\bar{\boldsymbol\xi}$ and using $\|\hat{\mathbf x}-\bar{\boldsymbol\xi}\|\le\|\hat{\mathbf x}-\boldsymbol\xi\|$ gives
$0\ge(1-\|\hat{\mathbf x}-\boldsymbol\xi\|/\delta)\|\bar{\boldsymbol\xi}-\boldsymbol\xi\|^2$, forcing $\boldsymbol\xi=\bar{\boldsymbol\xi}$.
\item[(ii)] (Lipschitz) For $\hat{\mathbf x}_1,\hat{\mathbf x}_2$ with projections $\boldsymbol\xi_1,\boldsymbol\xi_2$, applying the
cross-term bound underlying \eqref{eq:reach-growth} to each pair and adding yields
$(\hat{\mathbf x}_1-\hat{\mathbf x}_2)^\top(\boldsymbol\xi_1-\boldsymbol\xi_2)\ge(1-s_0/\delta)\|\boldsymbol\xi_1-\boldsymbol\xi_2\|^2$, and
Cauchy--Schwarz gives the Lipschitz constant $(1-s_0/\delta)^{-1}<2$.
\item[(iii)] (correct component) By (i) and (ii), $\Pi^{\rm loc}_{\mathcal F}$ is continuous on the connected ball
$B(\mathbf x^\ast,\delta/2)$, its image is a connected subset of $\mathcal F$ containing
$\mathbf x^\ast$ and hence lies in $\mathcal B_{\mathbf x^\ast}$.
\end{enumerate}
\end{proofsketch}

\vspace{-0.8cm}
\section{Numerical Experiments}\label{sec:numerical}
In this section, we conduct two primary experiments to highlight the performance of GridSFM. All models and code are released in \cite{bhan_gridsfm}. We evaluate GridSFM as a zero-shot pretrained model on various grids. Then we show that with the fine-tuning design of Section \ref{sec:finetuning-full}, its performance extrapolates to out-of-distribution grids with only $100$ new labels. 

\vspace{-0.3cm}
\subsection{Dataset Generation}\label{subsec:data-gen}
We begin by discussing the pretraining dataset design. It is built by considering base topologies from OPFData \cite{lovett2024opfdatalargescaledatasetsac}, PGLib \cite{pglib}, Texas A\&M Synthetic Grids \cite{7725528} and MSR synthetic topologies \cite{britto2026buildingpowergridmodels} and then applying perturbations. We consider load rescaling, permutations of generation costs, reductions in thermal line limits, reductions in voltage limits, and generator outages. 
We emphasize that, in particular, permuting generation costs makes the problem challenging as it forces the model to adapt across changing cost rankings instead of memorizing dispatch preferences. 
The full dataset design and perturbations are detailed in Appendix \ref{appendix:dataset-design}.

\begin{figure}[t]
    \centering
    \includegraphics{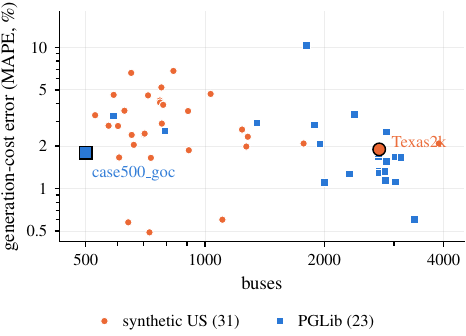}
    \caption{Performance of pretrained model (zero-shot) across $54$ different grids showing the generation cost percentage error vs the number of buses. 
    \vspace{-0.5cm}}
    \label{fig:pretrained}
\end{figure}

\vspace{-0.3cm}
\subsection{Performance as a model across topologies}
We first evaluate the pretrained backbone on $10{,}000$ held-out AC-OPF scenarios from the $54$ grid topologies used during pretraining. It took approximately one week to pretrain the model on a NVIDIA DGX B200 system with eight Blackwell GPUs. Figure~\ref{fig:pretrained} reports the resulting zero-shot cost error (the accuracy of control variables follow similar trends, omitted here due to space constraints). 
They remain relatively small (between $1\%$ and $10\%$) across most grids, and moreover, are approximately constant as the number of buses grows. This suggests that the pretrained model's predictive accuracy does not degrade with system size.

\begin{figure*}
    \centering
    \includegraphics[width=\linewidth]{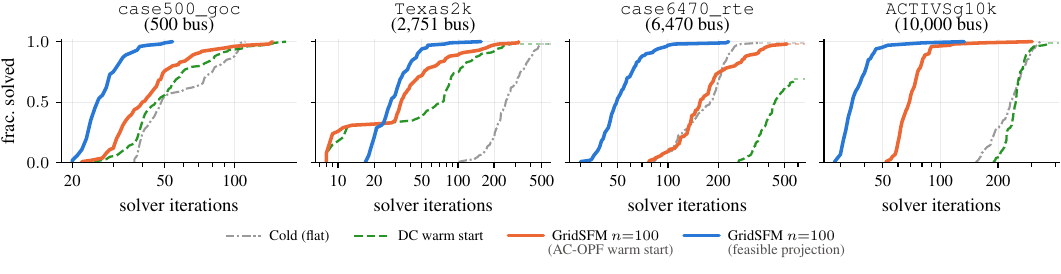}
    \caption{Performance comparison of Ipopt solver with various warm-start AC-OPF points on $100$ eval cases, showing the percentage of cases solved as a function of the number of iterations. Cold, DC, and GridSFM (orange line) represent warm-starts for solving the AC-OPF problem fully while the blue GridSFM line presents the iterations needed to obtain a feasible projection. \vspace{-0.5cm}}
    \label{fig:iter-cdf-cpu}
\end{figure*}

\vspace{-0.3cm}
\subsection{Fine-tuning performance}\label{subsec:finetune}

We illustrate the fine-tuning design of Section~\ref{sec:finetuning-full} on four
grids: \texttt{case500\_goc} and \texttt{Texas2k}, whose topologies are in
the pretraining set, and two significantly larger out-of-distribution grids,
\texttt{case6470\_rte} and \texttt{ACTIVSg10k}. For each grid we fine-tune on
$100$ solved scenarios generated with the perturbations of
Section~\ref{subsec:data-gen}, 
For comparison we consider four baselines: an MLP
mapping the scenario parameters directly to the controls in the style of
DeepOPF~\cite{pan20}, a GNN without signed message passing, the Linux
Foundation grid model~\cite{Pue26}, and GridSFM without pretraining. The MLP,
GNN, and unpretrained GridSFM are each trained on $1{,}000$ solved instances
per grid, while the Linux Foundation model is used directly through the
GridFM data kit (see Appendix~\ref{sec:hyperparams-baselines} for all
hyperparameters and training details).

We evaluate two downstream tasks: warm-starting AC-OPF in Ipopt
and the feasible projection \eqref{eq:AC-OPF-projection}. Throughout, we
report solver iterations rather than wall-clock time so that the results are
hardware independent (timings are given in
Table~\ref{tab:compute-cost} of Appendix~\ref{app:experiments}). As a
reference, one Ipopt iteration on \texttt{ACTIVSg10k} takes roughly $0.5$\,s
on our hardware.

Figure~\ref{fig:iter-cdf-cpu} compares the fine-tuned GridSFM against flat and
DC warm starts on $100$ unseen evaluation scenarios per grid. The top row
shows that the fine-tuned model outperforms both on every grid except
\texttt{case6470\_rte}. For that one, DC warmstarts significantly degrades solver
performance, while GridSFM matches the flat start. 
In all cases the projection converges in far fewer iterations than the full AC-OPF solve, with just 0.5\% optimality loss. In total, GridSFM reduces the iteration counts by $4$--$8\times$.

We emphasize that the speedup given may appear less then other literature (\cite{fio19,pan20,Pil24,Pue26}). These works measure the batched inference wall-clock times compared to solver wall-clock times. These are not directly comparable given that much of the saving depends on exact implementation and how infeasible output of the ML surrogate is counted. Because a solver needs some iterations to check the quality of the solution, even if we start with the actual minimizer, a median of 26 iterations are needed for convergence for the grids in Fig.~\ref{fig:iter-cdf-cpu}. Using GridSFM leads to $69$ iterations compared to $232.5$ for a cold start. Hence, GridSFM recovers $\approx 80$\% of the speedup as a perfect initial point. For a wall-clock perspective, please see Table \ref{tab:compute-cost} of Appendix \ref{sec:finetuning-appendix}.

Table~\ref{tab:warmstart} reports the results quantitatively. On
\texttt{Texas2k}, the fine-tuned warm start lowers the iterations needed to
solve $90\%$ of cases from $422$ to $103$ ($7.6\times$ median speedup), and
on \texttt{ACTIVSg10k} from $295$ to $87$ ($3.3\times$) at $100\%$
convergence, whereas the per-grid MLP and GNN converge on only $3$ of $100$. For the projection, 
using MLP on \texttt{case6470\_rte}
is $3.6\times$ faster but $32.9\%$ from the optimum, while GridSFM achieves
$3.2\times$ at a $2.0\%$ gap, and on \texttt{ACTIVSg10k} it projects every
case within $45$ iterations at a $0.57\%$ gap.  Appendix~\ref{app:experiments}
repeats the experiment with the GPU solver MadNLP~\cite{shin2024accelerating}
and finds the same qualitative behavior.

Moreover, we additionally provide results for the scaling fine-tuned performance with the number of data during the fine-tuning stage in Figure \ref{fig:activsg10k-scaling}. We see that both control performance and AC-OPF warm start iterations increases as more scenarios are used during training. 
Appendix \ref{appendix:ablation} has two ablation studies highlighting both the value of the power flow residual in the fine-tuning loss and the advantage of the elastic design over finetuning on just the controls.

\begin{table*}[!t]
\centering
\caption{\vspace{1em} \parbox{\textwidth}{\normalfont\footnotesize Performance of various warm-starting points on both AC-OPF \eqref{eq:AC-OPF} and the feasible projection \eqref{eq:AC-OPF-projection} (Ipopt) for $100$ evaluation problems across various grid sizes. $90\%$ done indicates the number of iterations to complete $90/100$ problems, the speed-up is the median speedup relative to a cold start computed on only convergent problems, convergence $\%$ indicates the number of problems that converged ($600$-iteration limit) and the cost gap indicates the projection's cost gap compared to the Ipopt optimum. The MLP, GNN and GridSFM without pretraining are retrained on $1$k solved instances per grid. 
The Linux Foundation model \cite{Pue26} is finetuned on the \texttt{case500\_goc} grid and should be considered zero-shot for the others. Best performing metric per column is indicated in bold and $\dagger$ indicates metrics are computed on $<50$ convergent cases.}}
\label{tab:warmstart}
\setlength{\tabcolsep}{2.4pt}
\renewcommand{\arraystretch}{1.15}
\footnotesize
\setlength{\arrayrulewidth}{0.6pt}
\arrayrulecolor{gray!60}
\begin{tabular}{@{}l;{1.5pt/1.2pt}cccc;{1.5pt/1.2pt}cccc;{1.5pt/1.2pt}cccc;{1.5pt/1.2pt}cccc@{}}
\arrayrulecolor{black}
\toprule
Warm start & \multicolumn{4}{c}{\texttt{case500\_goc}} & \multicolumn{4}{c}{\texttt{Texas2k}} & \multicolumn{4}{c}{\texttt{case6470\_rte}} & \multicolumn{4}{c}{\texttt{ACTIVSg10k}} \\
\cmidrule(lr){2-5}\cmidrule(lr){6-9}\cmidrule(lr){10-13}\cmidrule(lr){14-17}
 & \makecell{P90\\(iters)} & \makecell{Speed-\\up} & \makecell{Conv.\\(\%)} & \makecell{Cost gap\\(\%)} & \makecell{P90\\(iters)} & \makecell{Speed-\\up} & \makecell{Conv.\\(\%)} & \makecell{Cost gap\\(\%)} & \makecell{P90\\(iters)} & \makecell{Speed-\\up} & \makecell{Conv.\\(\%)} & \makecell{Cost gap\\(\%)} & \makecell{P90\\(iters)} & \makecell{Speed-\\up} & \makecell{Conv.\\(\%)} & \makecell{Cost gap\\(\%)} \\
\midrule
\multicolumn{17}{@{}l}{\itshape AC-OPF warm start}\\[1pt]
Cold start (flat) & 101 & 1.00 & 100 & 0.00 & 422 & 1.00 & 98 & 0.00 & \textbf{233} & 1.00 & 99 & 0.00 & 295 & 1.00 & 100 & 0.00 \\
DC warm start & 87 & 1.06 & 100 & 0.00 & 157 & 4.06 & 98 & 0.00 & -- & 0.44 & 69 & 0.00 & 282 & 0.93 & 99 & 0.00 \\
\noalign{\vskip 0.8pt}
\arrayrulecolor{gray!50}\cdashline{1-17}[2.4pt/2.2pt]\arrayrulecolor{black}
\noalign{\vskip 0.8pt}
MLP (DeepOPF \cite{pan20}) & 90 & 0.81 & 100 & 0.00 & 140 & 4.00 & 99 & 0.00 & 586 & 0.60 & 90 & 0.00 & -- & \textcolor{gray!60}{1.54\,$^{\dagger}$} & 3 & 0.00 \\
GNN & 72 & 1.18 & 100 & 0.00 & 140 & 4.09 & 100 & 0.00 & 369 & 0.89 & 95 & 0.00 & -- & \textcolor{gray!60}{1.18\,$^{\dagger}$} & 3 & 0.00 \\
Linux Foundation \cite{Pue26} & 72 & 1.06 & 100 & 0.00 & 175 & 2.88 & 100 & 0.00 & -- & 0.52 & 83 & 0.00 & 356 & 0.84 & 97 & 0.00 \\
\noalign{\vskip 0.8pt}
\arrayrulecolor{gray!50}\cdashline{1-17}[2.4pt/2.2pt]\arrayrulecolor{black}
\noalign{\vskip 0.8pt}
GridSFM (no pretrain) & 69 & \textbf{1.43} & 100 & 0.00 & 112 & 5.75 & 99 & 0.00 & -- & 0.49 & 82 & 0.00 & -- & \textcolor{gray!60}{0.56\,$^{\dagger}$} & 19 & 0.00 \\
GridSFM (zero-shot) & 73 & 1.22 & 100 & 0.00 & \textbf{92} & 5.99 & 99 & 0.00 & -- & 0.50 & 81 & 0.00 & -- & -- & 0 & -- \\
GridSFM (fine-tuned) & \textbf{68} & 1.32 & 100 & 0.00 & 103 & \textbf{7.56} & 100 & 0.00 & 340 & \textbf{1.02} & 98 & 0.00 & \textbf{87} & \textbf{3.28} & 100 & 0.00 \\
\midrule
\multicolumn{17}{@{}l}{\itshape AC-OPF feasible projection}\\[1pt]
MLP (DeepOPF \cite{pan20}) & \textbf{32} & 1.95 & 100 & 9.41 & 53 & 7.88 & 100 & 2.56 & \textbf{67} & \textbf{3.59} & 100 & 32.88 & -- & \textcolor{gray!60}{1.36\,$^{\dagger}$} & 8 & 2.59 \\
GNN & 34 & 1.78 & 100 & 0.86 & 58 & 7.01 & 100 & 1.97 & 87 & 2.94 & 100 & 3.98 & -- & \textcolor{gray!60}{1.83\,$^{\dagger}$} & 4 & 0.36 \\
Linux Foundation \cite{Pue26} & 36 & \textbf{2.05} & 100 & 1.69 & 76 & 5.44 & 100 & 65.81 & 205 & 1.18 & 100 & 36.54 & 276 & 1.11 & 100 & 3.22 \\
\noalign{\vskip 0.8pt}
\arrayrulecolor{gray!50}\cdashline{1-17}[2.4pt/2.2pt]\arrayrulecolor{black}
\noalign{\vskip 0.8pt}
GridSFM (no pretrain) & 33 & 1.75 & 100 & 0.31 & 108 & 6.85 & 100 & 0.29 & 396 & 0.53 & 100 & 1.83 & -- & 1.04 & 76 & 0.07 \\
GridSFM (zero-shot) & 36 & 1.84 & 100 & 0.60 & \textbf{46} & 8.48 & 100 & 0.88 & 310 & 0.72 & 100 & 2.65 & -- & -- & 0 & -- \\
GridSFM (fine-tuned) & 35 & 1.85 & 100 & 0.55 & \textbf{46} & \textbf{9.10} & 100 & 0.67 & 74 & 3.20 & 100 & 2.04 & \textbf{45} & \textbf{6.36} & 100 & 0.57 \\
\bottomrule
\end{tabular}
\setlength{\arrayrulewidth}{0.4pt}
\begin{minipage}{\textwidth}\vspace{3pt}\footnotesize

\end{minipage}
\vspace{-0.5cm}
\end{table*}

\begin{figure*}
    \centering
    \includegraphics{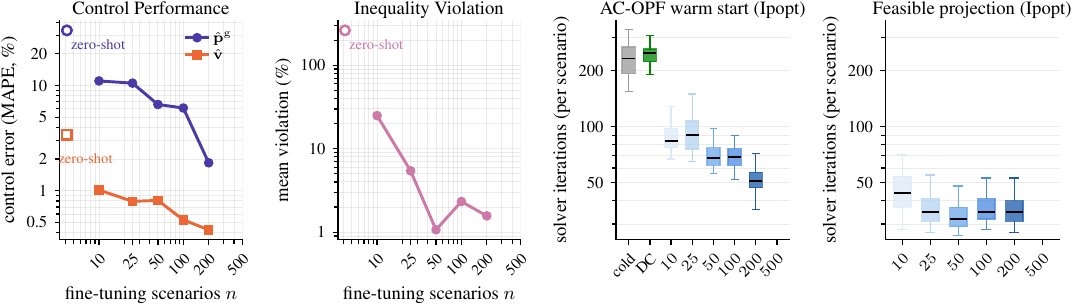}
    \caption{Scaling performance for fine-tuning GridSFM on \texttt{ACTIVSg10k} against the
number of fine-tuning scenarios $n$, on the same $100$ held-out test cases as Figure \ref{fig:iter-cdf-cpu}.
Inequality violation is the equal-weight mean of generator
active and reactive power, bus voltage magnitude, and branch apparent flow each
normalized by the bounds and averaged over every element of each class over each problem.
\vspace{-0.5cm}}
\label{fig:activsg10k-scaling}
\end{figure*}

\section{Conclusion}
In this paper, we presented GridSFM, a $15$ million parameter foundation model pretrained across $54$ transmission topologies ranging from $500$ to $4{,}000$ buses for solving AC-OPF.  Our approach is grounded in a slacked reformulation of AC-OPF and we 
show that this removes one obstruction to learning AC-OPF with disconnected feasible sets.
From this slacked design, we develop a graph transformer architecture that encodes the physics of power-flow through signed message passing while maintaining a global state of the system through linearized attention. With this architecture, the  pretrained GridSFM backbone attains a zero-shot $2.45\%$ generation-cost error with no degradation in accuracy as system size grows. 

 In consort with the pretrained model, we introduce a physics-informed fine-tuning design.
 With $100$ solved instance, this design extrapolates GridSFM to the $6{,}470$-bus \texttt{case6470\_rte} and $10{,}000$-bus \texttt{ACTIVSg10k} systems. On downstream tasks, we see GridSFM warm-starts improve Ipopt  up to $7.6\times$ fewer iterations on \texttt{Texas2k}. Using a simple projection algorithm, GridSFM points can be projected in $45$ ipopt iterations on a $10{,}000$-bus at a $0.57\%$ cost gap, while other neural network baselines typically fail to converge.

\bibliographystyle{IEEEtran}
\bibliography{references}

\begin{thebibliography}{10}
\providecommand{\url}[1]{#1}
\csname url@samestyle\endcsname
\providecommand{\newblock}{\relax}
\providecommand{\bibinfo}[2]{#2}
\providecommand{\BIBentrySTDinterwordspacing}{\spaceskip=0pt\relax}
\providecommand{\BIBentryALTinterwordstretchfactor}{4}
\providecommand{\BIBentryALTinterwordspacing}{\spaceskip=\fontdimen2\font plus
\BIBentryALTinterwordstretchfactor\fontdimen3\font minus \fontdimen4\font\relax}
\providecommand{\BIBforeignlanguage}[2]{{%
\expandafter\ifx\csname l@#1\endcsname\relax
\typeout{** WARNING: IEEEtran.bst: No hyphenation pattern has been}%
\typeout{** loaded for the language `#1'. Using the pattern for}%
\typeout{** the default language instead.}%
\else
\language=\csname l@#1\endcsname
\fi
#2}}
\providecommand{\BIBdecl}{\relax}
\BIBdecl

\bibitem{khaloie2025review}
H.~Khaloie, M.~Dolányi, J.-F. Toubeau, and F.~Vallée, ``Review of machine learning techniques for optimal power flow,'' \emph{Applied Energy}, vol. 388, p. 125637, 2025.

\bibitem{5491276}
R.~D. Zimmerman, C.~E. Murillo-Sánchez, and R.~J. Thomas, ``Matpower: Steady-state operations, planning, and analysis tools for power systems research and education,'' \emph{IEEE Transactions on Power Systems}, vol.~26, no.~1, pp. 12--19, 2011.

\bibitem{7725528}
A.~B. Birchfield, T.~Xu, K.~M. Gegner, K.~S. Shetye, and T.~J. Overbye, ``{Grid Structural Characteristics as Validation Criteria for Synthetic Networks},'' \emph{IEEE Transactions on Power Systems}, vol.~32, no.~4, pp. 3258--3265, 2017.

\bibitem{pan_chen_ml_opf_wiki}
\BIBentryALTinterwordspacing
X.~Pan and M.~Chen. (2021) {Machine Learning for Solving Optimal Power Flow Problems}. [Online]. Available: \url{https://personal.cityu.edu.hk/mchen88/projects/ML_OPF_wiki.html}
\BIBentrySTDinterwordspacing

\bibitem{Yan00}
\BIBentryALTinterwordspacing
W.~Yang, A.~Britto, T.~Spina, S.~Fowers, B.~Zhang, and C.~M. White, ``{GridSFM: A Foundation Model for AC Optimal Power Flow},'' \emph{Microsoft Research Technical White Paper}, 2026. [Online]. Available: \url{https://www.microsoft.com/en-us/research/publication/gridsfm-a-foundation-model-for-ac-optimal-power-flow/}
\BIBentrySTDinterwordspacing

\bibitem{bhan_gridsfm}
\BIBentryALTinterwordspacing
L.~Bhan, W.~Yang, M.~Capetz, and B.~Zhang, ``{GridSFM Code Repository}.'' [Online]. Available: \url{https://github.com/lukebhan/gridsfm}
\BIBentrySTDinterwordspacing

\bibitem{hiskens2001exploring}
I.~A. Hiskens and R.~J. Davy, ``Exploring the power flow solution space boundary,'' \emph{IEEE transactions on power systems}, vol.~16, no.~3, pp. 389--395, 2001.

\bibitem{cybenko1989approximation}
G.~Cybenko, ``Approximation by superpositions of a sigmoidal function,'' \emph{Mathematics of Control, Signals and Systems}, vol.~2, no.~4, pp. 303--314, Dec. 1989.

\bibitem{NIPS2017_5dd9db5e}
W.~Hamilton, Z.~Ying, and J.~Leskovec, ``{Inductive Representation Learning on Large Graphs},'' in \emph{Advances in Neural Information Processing Systems}, I.~Guyon, U.~V. Luxburg, S.~Bengio, H.~Wallach, R.~Fergus, S.~Vishwanathan, and R.~Garnett, Eds., vol.~30.\hskip 1em plus 0.5em minus 0.4em\relax Curran Associates, Inc., 2017.

\bibitem{donti2021dc}
P.~L. Donti, D.~Rolnick, and J.~Z. Kolter, ``{DC3: A learning method for optimization with hard constraints},'' in \emph{International Conference on Learning Representations}, 2021.

\bibitem{pglib}
{IEEE PES Task Force on Benchmarks for Validation of Emerging Power System Algorithms}, ``{The Power Grid Library for Benchmarking AC Optimal Power Flow Algorithms},'' \emph{arXiv preprint arXiv:1908.02788}, 2021, version 23.07.

\bibitem{4956966}
B.~Stott, J.~Jardim, and O.~Alsac, ``{DC Power Flow Revisited},'' \emph{IEEE Transactions on Power Systems}, vol.~24, no.~3, pp. 1290--1300, 2009.

\bibitem{zam19}
A.~Zamzam and K.~Baker, ``{Learning Optimal Solutions for Extremely Fast AC Optimal Power Flow},'' \emph{IEEE International Conference on Communications, Control, and Computing Technologies for Smart Grids}, pp. 1--6, 2020.

\bibitem{fio19}
F.~Fioretto, T.~W. Mak, and P.~V. Hentenryck, ``{Predicting AC Optimal Power Flows: Combining Deep Learning and Lagrangian Dual Methods},'' \emph{Proceedings of the AAAI Conference on Artificial Intelligence}, vol.~34, no.~1, pp. 630--637, 2020.

\bibitem{pan20}
X.~Pan, M.~Chen, T.~Zhao, and S.~Low, ``{DeepOPF: A Feasibility-Optimized Deep Neural Network Approach for AC Optimal Power Flow Problems},'' \emph{IEEE Systems Journal}, vol.~17, no.~1, pp. 673--683, 2023.

\bibitem{Hua24b}
W.~Huang, M.~Chen, and S.~H. Low, ``{Unsupervised Learning for Solving AC Optimal Power Flows: Design, Analysis, and Experiment},'' \emph{IEEE Transactions on Power Systems}, vol.~39, no.~6, pp. 7102--7114, 2024.

\bibitem{Kim25c}
M.~Kim and H.~Kim, ``{Unsupervised Deep Lagrange Dual With Equation Embedding for AC Optimal Power Flow},'' \emph{IEEE Transactions on Power Systems}, vol.~40, no.~1, pp. 1078--1090, 2025.

\bibitem{Don20}
B.~Donon, R.~Clément, B.~Donnot, A.~Marot, I.~M. Guyon, and M.~Schoenauer, ``{Neural networks for power flow: Graph neural solver},'' \emph{Electric Power Systems Research}, vol. 189, p. 106547, 2020.

\bibitem{Fal21}
T.~Falconer and L.~Mones, ``{Leveraging Power Grid Topology in Machine Learning Assisted Optimal Power Flow},'' \emph{IEEE Transactions on Power Systems}, vol.~38, no.~3, pp. 2234--2246, 2023.

\bibitem{Zha20}
T.~Zhao, X.~Pan, M.~Chen, A.~Venzke, and S.~Low, ``{DeepOPF+: A Deep Neural Network Approach for DC Optimal Power Flow for Ensuring Feasibility},'' \emph{2020 IEEE International Conference on Communications, Control, and Computing Technologies for Smart Grids (SmartGridComm)}, 2020.

\bibitem{JMLR:v25:23-1577}
E.~Liang, M.~Chen, and S.~H. Low, ``{Homeomorphic Projection to Ensure Neural-Network Solution Feasibility for Constrained Optimization},'' \emph{Journal of Machine Learning Research}, vol.~25, no. 329, pp. 1--55, 2024.

\bibitem{Ngu25}
H.~Nguyen and P.~Donti, ``{FSNet: Feasibility-Seeking Neural Network for Constrained Optimization with Guarantees},'' \emph{Advances in Neural Information Processing Systems}, pp. 39\,670--39\,708, 2025.

\bibitem{Liu22f}
S.~Liu, C.~Wu, and H.~Zhu, ``{Topology-Aware Graph Neural Networks for Learning Feasible and Adaptive AC-OPF Solutions},'' \emph{IEEE Transactions on Power Systems}, vol.~38, no.~6, pp. 5660--5670, 2023.

\bibitem{Zho23}
M.~Zhou, M.~Chen, and S.~Low, ``{DeepOPF-FT: One Deep Neural Network for Multiple AC-OPF Problems With Flexible Topology},'' \emph{IEEE Transactions on Power Systems}, vol.~38, no.~1, pp. 964--967, 2023.

\bibitem{Pil24}
L.~Piloto, S.~Liguori, S.~Madjiheurem, M.~Zgubič, S.~Lovett, H.~Tomlinson, S.~Elster, C.~Apps, and S.~Witherspoon, ``{CANOS: A Fast and Scalable Neural AC-OPF Solver Robust To N-1 Perturbations},'' \emph{ArXiv}, 2024.

\bibitem{Pue26}
A.~Puech \emph{et~al.}, ``{GENCO: A Unified Neural Solver Embedded in a Development Framework for Steady-State Grid Analysis},'' \emph{ArXiv}, 2026.

\bibitem{Par26}
K.~Park, K.~Song, Y.~Lim, S.~Park, K.~Kim, and H.~Kim, ``{UNION: A Unified AC-OPF Framework for Topology-Varying Real-Time Grid Operation},'' \emph{ArXiv}, 2026.

\bibitem{Li26c}
Y.~Li, Z.~Memon, H.~Jin, S.~Fenu, K.~Song, S.~Sharma, P.~Gasana, H.~Kim, L.~Zhao, and K.~Kim, ``{LUMINA: Foundation Models for Topology Transferable ACOPF},'' \emph{ArXiv}, 2026.

\bibitem{Low2026}
S.~Low, \emph{Power System Analysis - Analytical Tools and Structural Properties}.\hskip 1em plus 0.5em minus 0.4em\relax Cambridge University Press, 2026.

\bibitem{zhang2012geometry}
B.~Zhang and D.~Tse, ``Geometry of injection regions of power networks,'' \emph{IEEE Transactions on Power Systems}, vol.~28, no.~2, pp. 788--797, 2013.

\bibitem{6756976}
S.~H. Low, ``{Convex Relaxation of Optimal Power Flow—Part I: Formulations and Equivalence},'' \emph{IEEE Transactions on Control of Network Systems}, vol.~1, no.~1, pp. 15--27, 2014.

\bibitem{zhang2014network}
B.~Zhang, R.~Rajagopal, and D.~Tse, ``Network risk limiting dispatch: Optimal control and price of uncertainty,'' \emph{IEEE Transactions on Automatic Control}, vol.~59, no.~9, pp. 2442--2456, 2014.

\bibitem{dwivedi2022graph}
V.~P. Dwivedi, A.~T. Luu, T.~Laurent, Y.~Bengio, and X.~Bresson, ``{Graph Neural Networks with Learnable Structural and Positional Representations},'' in \emph{International Conference on Learning Representations}, 2022.

\bibitem{katharopoulos20a}
A.~Katharopoulos, A.~Vyas, N.~Pappas, and F.~Fleuret, ``{Transformers are RNNs: Fast Autoregressive Transformers with Linear Attention},'' in \emph{ICML}, 2020, pp. 5156--5165.

\bibitem{4073219}
W.~F. Tinney and C.~E. Hart, ``{Power Flow Solution by Newton's Method},'' \emph{IEEE Transactions on Power Apparatus and Systems}, vol. PAS-86, no.~11, pp. 1449--1460, 1967.

\bibitem{336130}
T.~Overbye, ``{A power flow measure for unsolvable cases},'' \emph{IEEE Transactions on Power Systems}, vol.~9, no.~3, pp. 1359--1365, 1994.

\bibitem{373951}
------, ``{Computation of a practical method to restore power flow solvability},'' \emph{IEEE Transactions on Power Systems}, vol.~10, no.~1, pp. 280--287, 1995.

\bibitem{8391733}
A.~Hauswirth, S.~Bolognani, G.~Hug, and F.~Dörfler, ``Generic existence of unique lagrange multipliers in {AC} optimal power flow,'' \emph{IEEE Control Systems Letters}, vol.~2, no.~4, pp. 791--796, 2018.

\bibitem{lovett2024opfdatalargescaledatasetsac}
\BIBentryALTinterwordspacing
S.~Lovett, M.~Zgubic, S.~Liguori, S.~Madjiheurem, H.~Tomlinson, S.~Elster, C.~Apps, S.~Witherspoon, and L.~Piloto, ``Opfdata: Large-scale datasets for ac optimal power flow with topological perturbations,'' 2024. [Online]. Available: \url{https://arxiv.org/abs/2406.07234}
\BIBentrySTDinterwordspacing

\bibitem{britto2026buildingpowergridmodels}
\BIBentryALTinterwordspacing
A.~Britto, T.~Spina, W.~Yang, S.~Fowers, B.~Zhang, and C.~White, ``Building power grid models from open data: A complete pipeline from openstreetmap to optimal power flow,'' 2026. [Online]. Available: \url{https://arxiv.org/abs/2605.04289}
\BIBentrySTDinterwordspacing

\bibitem{shin2024accelerating}
S.~Shin, M.~Anitescu, and F.~Pacaud, ``Accelerating optimal power flow with {GPU}s: {SIMD} abstraction of nonlinear programs and condensed-space interior-point methods,'' \emph{Electric Power Systems Research}, vol. 236, p. 110651, 2024.

\bibitem{hu2020strategies}
W.~Hu, B.~Liu, J.~Gomes, M.~Zitnik, P.~Liang, V.~Pande, and J.~Leskovec, ``Strategies for pre-training graph neural networks,'' in \emph{International Conference on Learning Representations}, 2020.

\end{thebibliography}

\clearpage
\appendices

\renewcommand{\thesubsection}{\thesection.\arabic{subsection}}
\renewcommand{\thesubsectiondis}{\thesubsection}

\renewcommand{\thesubsubsection}{\thesubsection.\arabic{subsubsection}}
\renewcommand{\thesubsubsectiondis}{\thesubsubsection}
\numberwithin{equation}{section}
\renewcommand{\theequation}{\thesection\arabic{equation}}
\allowdisplaybreaks
\raggedbottom

\section{Proofs}\label{app:proofs} 

\subsection{Proof of Lemma \ref{lem:elastic-connected}}
\begin{proof}
Define the mapping $\Phi:\mathbb R^N\times\mathbb R^{n_e}\times\mathbb R^{n_i}\to\mathbb R^N\times\mathbb R^{n_e}\times\mathbb R^{n_i}$ by
\begin{align}
    \Phi(\mathbf x,\mathbf s^{\tilde g},\mathbf s^{\tilde h})
    :=\big(\mathbf x,\ \mathbf s^{\tilde g}-|\tilde{\mathbf g}(\mathbf x)|,\ \mathbf s^{\tilde h}-[\tilde{\mathbf h}(\mathbf x)]_+\big).
    \label{eq:Phi-def}
\end{align}
Since $\tilde{\mathbf g}$, $\tilde{\mathbf h}$, $|\cdot|$ and $[\,\cdot\,]_+$ are continuous, $\Phi$ is continuous, and it is a bijection with continuous inverse
\begin{align}
    \Phi^{-1}(\mathbf x,\mathbf u^{\tilde g},\mathbf u^{\tilde h})
    =\big(\mathbf x,\ \mathbf u^{\tilde g}+|\tilde{\mathbf g}(\mathbf x)|,\ \mathbf u^{\tilde h}+[\tilde{\mathbf h}(\mathbf x)]_+\big).
    \label{eq:Phi-inv}
\end{align}
Hence $\Phi$ is a homeomorphism of $\mathbb R^N\times\mathbb R^{n_e}\times\mathbb R^{n_i}$ onto itself. 
It remains to show that $\Phi$ maps $\Omega$ onto $\mathcal{C}=\mathbb{R}^N\times\mathbb{R}_{\geq 0}^{n_e}\times\mathbb{R}^{n_i}_{\geq 0}$. Notice that $(\mathbf x,\mathbf s^{\tilde g},\mathbf s^{\tilde h})\in\Omega$ if and only if
$\mathbf s^{\tilde g}-|\tilde{\mathbf g}(\mathbf x)|\ge\mathbf 0$ and $\mathbf s^{\tilde h}-[\tilde{\mathbf h}(\mathbf x)]_+\ge\mathbf 0$, which occurs precisely when $\Phi(\mathbf x,\mathbf s^{\tilde g},\mathbf s^{\tilde h})\in \mathcal{C}$. Hence, $\Phi(\Omega)=\mathcal{C}$. Since $\Phi$ is a homeomorphism of the ambient space, its restriction $\Phi|_{\Omega}:\Omega\to \mathcal{C}$ is a homeomorphism onto $\mathcal{C}$. Finally, $\mathcal{C}$ is a product of convex sets, hence convex, and therefore contractible.  Thus, $\Omega$ is contractible, and correspondingly simply connected.
\end{proof}

\subsection{Proof of Lemma \ref{lem:exact}} \label{appendix:lemma2proof}
\begin{proof}
 Let $f(\mathbf x,\mathbf s^{\tilde g},\mathbf s^{\tilde h})$ denote the objective in \eqref{eq:AC-OPF-elastic-lifted}.
Since $\log(1 + s)$ is increasing with $s$, for every $(\mathbf x,\mathbf s^{\tilde g},\mathbf s^{\tilde h})\in\Omega$ we have
$f(\mathbf x,\mathbf s^{\tilde g},\mathbf s^{\tilde h})\ge f_\rho(\mathbf x)$, where
\begin{align}
    f_\rho(\mathbf x):=&\,c(\mathbf x)+\rho_{\tilde g}\sum_{j\in\mathcal E}\log\!\big(1+|\tilde g_j(\mathbf x)|\big) \nonumber \\
    &\,+\rho_{\tilde h}\sum_{k\in\mathcal M}\log\!\big(1+[\tilde h_k(\mathbf x)]_+\big),
\end{align}
with equality when $\mathbf s^{\tilde g}=|\tilde{\mathbf g}(\mathbf x)|$ and $\mathbf s^{\tilde h}=[\tilde{\mathbf h}(\mathbf x)]_+$.
In particular $f(\mathbf x^\ast,\mathbf 0,\mathbf 0)=f_\rho(\mathbf x^\ast)=c(\mathbf x^\ast)$ since $\mathbf x^\ast\in\mathcal F$.
Hence $(\mathbf x^\ast,\mathbf 0,\mathbf 0)$ is a local minimizer of \eqref{eq:AC-OPF-elastic-lifted} as soon as
$\mathbf x^\ast$ is a local minimizer of $f_\rho$. Hence it suffices to show that $\mathbf x^\ast$ is a local minimizer of $f_\rho$, i.e.\ that
$f_\rho(\mathbf x)\ge c(\mathbf x^\ast)$ for all $\mathbf x$ near $\mathbf x^\ast$. We do so by showing that the cost at
$\mathbf x$ can fall below $c(\mathbf x^\ast)$ by at most $L_c\,\mathrm{dist}(\mathbf x,\mathcal F)$, while the penalty
at $\mathbf x$ is at least $L_c\,\mathrm{dist}(\mathbf x,\mathcal F)$.

We begin with the cost. Since $\mathbf x^\ast$ is a local minimizer of \eqref{eq:AC-OPF-lifted}, there is
$r_0\in(0,R_0]$ such that $c(\boldsymbol\xi)\ge c(\mathbf x^\ast)$ for all
$\boldsymbol\xi\in\mathcal F\cap\overline B(\mathbf x^\ast,r_0)$. Let $\mathbf x$ satisfy $\|\mathbf x-\mathbf x^\ast\|\le r_0/2$ and let
$\bar{\mathbf x}\in\mathcal F$ be a nearest feasible point. Define $\Delta\mathbf x:=\mathbf x-\bar{\mathbf x}$, so that
$\|\Delta\mathbf x\|=\mathrm{dist}(\mathbf x,\mathcal F)$. Since $\mathbf x^\ast\in\mathcal F$, we have
$\|\Delta\mathbf x\|\le\|\mathbf x-\mathbf x^\ast\|$ and hence $\|\bar{\mathbf x}-\mathbf x^\ast\|\le2\|\mathbf x-\mathbf x^\ast\|\le r_0$,
so $c(\bar{\mathbf x})\ge c(\mathbf x^\ast)$. Using $\|\nabla c\|\le L_c$ on the set
$\overline B(\mathbf x^\ast,R_0)$ (which contains the segment $[\bar{\mathbf x},\mathbf x]$), we obtain the bound
\begin{align}
    c(\mathbf x)\ \ge\ c(\bar{\mathbf x})-L_c\|\Delta\mathbf x\|\ \ge\ c(\mathbf x^\ast)-L_c\,\|\Delta\mathbf x\| .
    \label{eq:costlb}
\end{align}
Hence, the cost is at best, $L_c\|\Delta \mathbf{x}\|$ better then the minimizer $\mathbf{x}^\ast$. 

We now turn to the penalty, and first reduce it to a bound on the total violation. Define the total violation as 
\begin{align}
    \epsilon(\mathbf x):=\sum_{j\in\mathcal E}|\tilde g_j(\mathbf x)|+\sum_{k\in\mathcal M}[\tilde h_k(\mathbf x)]_+ .
\end{align}
On a sufficiently small neighborhood, we will show the inequality penalty is bounded by this $\epsilon$.

First, since \eqref{eq:rhocond} is strict we may fix $\sigma\in(0,\underline\sigma)$ with
$\rho_{\min}\sigma>L_c$. Define the constant
\begin{align}
    \eta:=\frac{\rho_{\min}\sigma}{L_c}-1\ >\ 0 .\label{eq:a5}
\end{align}
Now, by concavity of log, we have
 $\log(1+s) \geq s/(1+\eta)$ for $0 \leq s \leq \eta$. If we then shrink the neighborhood so that $|\tilde g_j(\mathbf x)|\le\eta$ and
$[\tilde h_k(\mathbf x)]_+\le\eta$ for all $j,k$ (which is possible since the violations are continuous
and vanish at $\mathbf x^\ast$), we obtain
\begin{align}
    \rho_{\tilde g}\!\sum_{j\in\mathcal E}\log\!\big(1+|\tilde g_j(\mathbf x)|\big)
    +\rho_{\tilde h}\!\sum_{k\in\mathcal M}\log\!\big(1+&[\tilde h_k(\mathbf x)]_+\big)\nonumber 
    \\  \ge&\,  \frac{\rho_{\min}}{1+\eta}\,\epsilon(\mathbf x)\, \nonumber \\
    =&\, \frac{L_c}{\sigma}\epsilon(\mathbf{x}) \label{eq:penlb}
\end{align}
where we used $1+\eta=\rho_{\min}\sigma/L_c$ from \eqref{eq:a5} for the last equality. It remains to  lower bound the violation rate $\epsilon(\mathbf{x})$ by $\Delta \mathbf{x}$.

To lower bound the violation rate, we will use Taylor's theorem. Let $\Lambda$ bound the second derivative of the constraints
$\sum_{j\in\mathcal E}\|\nabla^2\tilde g_j\|+\sum_{k\in\mathcal M}\|\nabla^2\tilde h_k\|$ on
$\overline B(\mathbf x^\ast,R_0)$, which is finite because the residuals are polynomial in
$(\mathbf v,\mathbf p^{\mathrm g},\mathbf q^{\mathrm g},\mathbf p_{\mathcal L},\mathbf q_{\mathcal L})$ and trigonometric in
$\boldsymbol\theta$. Since $\bar{\mathbf x}$ is feasible, $\tilde g_j(\bar{\mathbf x})=0$ for all $j$ and
$\tilde h_k(\bar{\mathbf x})=0$ for $k\in\mathcal A(\bar{\mathbf x})$, so for some $\mathbf z_j,\mathbf z_k$ on the segment
$[\bar{\mathbf x},\mathbf x]\subseteq\overline B(\mathbf x^\ast,R_0)$, we have via Taylor's theorem:
\begin{align}
    \tilde g_j(\mathbf x)&=\nabla\tilde g_j(\bar{\mathbf x})^\top\Delta\mathbf x+\tfrac12\Delta\mathbf x^\top\nabla^2\tilde g_j(\mathbf z_j)\Delta\mathbf x,
    && j\in\mathcal E, \label{eq:appendix-a7}\\
    \tilde h_k(\mathbf x)&=\nabla\tilde h_k(\bar{\mathbf x})^\top\Delta\mathbf x+\tfrac12\Delta\mathbf x^\top\nabla^2\tilde h_k(\mathbf z_k)\Delta\mathbf x,
    && k\in\mathcal A(\bar{\mathbf x}).\label{eq:appendix-a8}
\end{align}
Taking absolute values in \eqref{eq:appendix-a7} and the  positive parts in \eqref{eq:appendix-a8} (using $[a+b]_+\ge[a]_+-|b|$), with the definition of $\epsilon(\mathbf x)$ gives
\begin{align}
    \epsilon(\mathbf x)\ \ge&\, \epsilon^{\rm \nabla }(\mathbf x)-\tfrac{\Lambda}{2}\|\Delta\mathbf x\|^2,
  \\ 
    \epsilon^{\rm \nabla }(\mathbf x):=&\, \sum_{j\in\mathcal E}\big|\nabla\tilde g_j(\bar{\mathbf x})^\top \Delta\mathbf x\big|
    +\!\!\sum_{k\in\mathcal A(\bar{\mathbf x})}\!\!\big[\nabla\tilde h_k(\bar{\mathbf x})^\top \Delta\mathbf x\big]_+ ,
    \label{eq:taylor-viol}
\end{align}
and
therefore suffices to bound $\epsilon^{\rm \nabla}(\mathbf x)$. 

To do so, we will invoke LICQ on the projection. Explicitly, note the point $\bar{\mathbf x}$ minimizes $\tfrac12\|\mathbf y-\mathbf x\|^2$
over $\mathcal F$, i.e.\ it solves the projection \eqref{eq:AC-OPF-projection} of $\mathbf x$, whose Lagrangian is
\begin{align}
    \mathcal L_{\rm proj}(\mathbf y;\boldsymbol\lambda,\boldsymbol\mu)
    :=\tfrac12\|\mathbf y-\mathbf x\|^2+\sum_{j\in\mathcal E}\lambda_j\tilde g_j(\mathbf y)+\sum_{k\in\mathcal M}\mu_k\tilde h_k(\mathbf y).
\end{align}
Since LICQ holds at $\bar{\mathbf x}$ by \eqref{eq:licq}, the KKT conditions hold there. Thus, there exist
multipliers $\boldsymbol\lambda\in\mathbb R^{n_e}$ and $\boldsymbol\mu\in\mathbb R^{n_i}_{\ge0}$ with $\mu_k=0$ for
$k\notin\mathcal A(\bar{\mathbf x})$ such that $\nabla_{\mathbf y}\mathcal L_{\rm proj}(\bar{\mathbf x};\boldsymbol\lambda,\boldsymbol\mu)=\mathbf 0$. Thus, we have
\begin{align}
   \mathbf x-\bar{\mathbf x}
    =\sum_{j\in\mathcal E}\lambda_j\nabla\tilde g_j(\bar{\mathbf x})+\sum_{k\in\mathcal A(\bar{\mathbf x})}\mu_k\nabla\tilde h_k(\bar{\mathbf x})
    =\mathbf J(\bar{\mathbf x})^\top\mathbf w,\label{eq:a12}
\end{align}
where $\mathbf J(\bar{\mathbf x})$ is the active-constraint Jacobian \eqref{eq:active-jacobian} and
$\mathbf w:=\big(\boldsymbol\lambda,(\mu_k)_{k\in\mathcal A(\bar{\mathbf x})}\big)$ stacks the corresponding multipliers.
As $\mathbf J(\bar{\mathbf x})$ has full row rank with $\sigma_{\min}(\mathbf J(\bar{\mathbf x}))\ge\underline\sigma$, we
have
\begin{align}
   \|\mathbf x-\bar{\mathbf x}\| =  \|\Delta\mathbf x\|=\|\mathbf J(\bar{\mathbf x})^\top\mathbf w\|\ge\underline\sigma\|\mathbf w\|\label{eq:a13}\,. 
\end{align}
Now, taking the inner product of \eqref{eq:a12} with $\Delta\mathbf x$ and bounding each term using
$|\lambda_j|\le\|\mathbf w\|$, $0\le\mu_k\le\|\mathbf w\|$ and $u\le[u]_+$, we obtain
\begin{align}
    \|\Delta\mathbf x\|^2 =\ & \mathbf w^\top \mathbf J(\bar{\mathbf x})\Delta\mathbf x  \nonumber \\
    \le\ & \|\mathbf w\|\Big(\sum_{j\in\mathcal E}\big|\nabla\tilde g_j(\bar{\mathbf x})^\top \Delta\mathbf x\big|
    +\!\!\sum_{k\in\mathcal A(\bar{\mathbf x})}\!\!\big[\nabla\tilde h_k(\bar{\mathbf x})^\top \Delta\mathbf x\big]_+\Big), \label{eq:a14}
\end{align}
Substituting the multiplier bound $\|\mathbf w\|\le\|\Delta\mathbf x\|/\underline\sigma$ from \eqref{eq:a13} into
\eqref{eq:a14} gives $\|\Delta\mathbf x\|^2\le\|\Delta\mathbf x\|\,\epsilon^{\nabla}(\mathbf x)/\underline\sigma$, and
dividing by $\|\Delta\mathbf x\|$ (the case $\Delta\mathbf x=\mathbf 0$ being trivial),
\begin{align}
    \epsilon^{\nabla}(\mathbf x)\ \ge\ \underline\sigma\,\|\Delta\mathbf x\| .
    \label{eq:linviol}
\end{align}
Inserting \eqref{eq:linviol} into \eqref{eq:taylor-viol} yields
\begin{align}
    \epsilon(\mathbf x)\ \ge\ \Big(\underline\sigma-\tfrac{\Lambda}{2}\|\Delta\mathbf x\|\Big)\|\Delta\mathbf x\| .
    \label{eq:errbound}
\end{align}
Finally, shrink the neighborhood so that $\|\mathbf x-\mathbf x^\ast\|\le2(\underline\sigma-\sigma)/\Lambda$.
Since $\|\Delta\mathbf x\|\le\|\mathbf x-\mathbf x^\ast\|$, this gives $\tfrac{\Lambda}{2}\|\Delta\mathbf x\|\le\underline\sigma-\sigma$,
and \eqref{eq:errbound} becomes
\begin{align}
    \epsilon(\mathbf x)\ \ge\ \sigma\,\|\Delta\mathbf x\| .
    \label{eq:errbound-goal}
\end{align}
Substituting \eqref{eq:errbound-goal} into the penalty bound \eqref{eq:penlb}, the penalty at $\mathbf x$
satisfies
\begin{align}
    \rho_{\tilde g}\!\sum_{j\in\mathcal E}\log\!\big(1+|\tilde g_j(\mathbf x)|\big)
    +\rho_{\tilde h}\!\sum_{k\in\mathcal M}\log\!\big(1+&[\tilde h_k(\mathbf x)]_+\big) \nonumber \\ 
    \ \ge &\, \frac{L_c}{\sigma}\,\sigma\,\|\Delta\mathbf x\|\  \nonumber \\ =&\ L_c\,\|\Delta\mathbf x\| \,. 
    \label{eq:penfinal}
\end{align}

Combining \eqref{eq:costlb} and \eqref{eq:penfinal}, there exists a positive radius such that
\begin{align}
    f_\rho(\mathbf x)\ \ge\ c(\mathbf x^\ast)-L_c\|\Delta\mathbf x\|+L_c\|\Delta\mathbf x\|\ =\ c(\mathbf x^\ast)\ =\ f_\rho(\mathbf x^\ast).
\end{align}
Thus $\mathbf x^\ast$ minimizes $f_\rho$ locally, and $(\mathbf x^\ast,\mathbf 0,\mathbf 0)$ is a local minimizer of
\eqref{eq:AC-OPF-elastic-lifted}.
\end{proof}

\subsection{Proof of Lemma \ref{lem:reach}} \label{appendix:bowl-proof}
\begin{proof}
The Lagrangian of the projection \eqref{eq:AC-OPF-projection} is given by:
\begin{align}
    \mathcal{L}_{\rm proj}(\mathbf y; \boldsymbol\lambda, \boldsymbol\mu) := \frac{1}{2} \|\mathbf y- \hat{\mathbf x}\|^2 + \sum_{j\in\mathcal E} \lambda_j  \tilde g_j(\mathbf y) + \sum_{m\in\mathcal M} \mu_m \tilde h_m(\mathbf y)\,,
\end{align}
for $\mu_m \geq 0$, $\lambda_j \in \mathbb{R}$. A stationary point $\boldsymbol\xi$ satisfies $\nabla_{\mathbf y} \mathcal{L}_{\rm proj}(\boldsymbol\xi) = \mathbf 0$, hence we have
\begin{align}
    \hat{\mathbf x} - \boldsymbol\xi = \sum_{j\in\mathcal E} \lambda_j  \nabla \tilde g_j(\boldsymbol\xi) + \sum_{m\in\mathcal M} \mu_m\nabla  \tilde h_m(\boldsymbol\xi)\,.
\end{align}
Now, using the quadratic expansion
\begin{align}
    \|\hat{\mathbf x}-\boldsymbol\zeta\|^2 = \|\hat{\mathbf x}-\boldsymbol\xi\|^2 - 2 (\hat{\mathbf x} - \boldsymbol\xi)^\top (\boldsymbol\zeta - \boldsymbol\xi)  + \|\boldsymbol\zeta - \boldsymbol\xi\|^2\,,
\end{align}
it remains to bound the cross term. By direct calculation, we have
\begin{align}
  \nonumber  (\hat{\mathbf x} - \boldsymbol\xi)^\top (\boldsymbol\zeta - \boldsymbol\xi) =&\, \sum_{j\in\mathcal E} \lambda_j\, \nabla \tilde g_j(\boldsymbol\xi)^\top (\boldsymbol\zeta-\boldsymbol\xi) \\ &\, + \sum_{m\in\mathcal M} \mu_m\, \nabla \tilde h_m(\boldsymbol\xi)^\top (\boldsymbol\zeta - \boldsymbol\xi)
\end{align}
Notice, that for each $m$ with $\mu_m>0$ complementary slackness gives $\tilde h_m(\boldsymbol\xi) = 0$, so since each $\tilde h_m$ is convex and $\tilde h_m(\boldsymbol\zeta) \leq 0$, we have
\begin{align}
    \nabla  \tilde h_m(\boldsymbol\xi)^\top (\boldsymbol\zeta - \boldsymbol\xi)  \leq \tilde h_m(\boldsymbol\zeta) - \tilde h_m(\boldsymbol\xi) \leq 0\,, \quad \forall m\,,
\end{align}
and hence since $\mu_m \geq 0$, this term contributes nothing to the upper bound. For the first term, note that each $\tilde g_j$ is twice continuously differentiable. Thus, we can use Taylor's theorem, which says there exists some $\mathbf z_j \in [\boldsymbol\xi, \boldsymbol\zeta]$ such that
\begin{align}
  \nonumber  \tilde g_j(\boldsymbol\zeta) =&\, \tilde g_j(\boldsymbol\xi) + \nabla \tilde g_j(\boldsymbol\xi)^\top (\boldsymbol\zeta-\boldsymbol\xi)  \\ &\, + \frac{1}{2} (\boldsymbol\zeta - \boldsymbol\xi)^\top \nabla^2 \tilde g_j(\mathbf z_j) (\boldsymbol\zeta - \boldsymbol\xi)\,, \quad \forall j
\end{align}
Using the fact that $\boldsymbol\zeta, \boldsymbol\xi$ are both feasible, we have $\tilde g_j(\boldsymbol\zeta) = \tilde g_j(\boldsymbol\xi) = 0$.

Rearranging and weighting by the multipliers, the equality terms collect into the single matrix $\mathbf A:=\sum_{j\in\mathcal E} \lambda_j \nabla^2 \tilde g_j(\mathbf z_j)$ and we obtain
\begin{align}
      (\hat{\mathbf x} - \boldsymbol\xi)^\top (\boldsymbol\zeta - \boldsymbol\xi)  \leq \frac{1}{2}\left|(\boldsymbol\zeta-\boldsymbol\xi)^\top \mathbf A (\boldsymbol\zeta-\boldsymbol\xi)\right| \leq \frac{1}{2} \|\mathbf A\| \|\boldsymbol\zeta - \boldsymbol\xi\|^2\,.
      \label{eq:crossA}
\end{align}

It remains to bound $\|\mathbf A\|$ by a constant depending only on the network. We first identify
the nonlinear terms of $\tilde{\mathbf g}$, then bound the Hessian of each, and finally assemble the
pieces.

The equality residuals are the flow definitions
$\tilde p_\ell-p_\ell(\mathbf v,\boldsymbol\theta)$ and
$\tilde q_\ell-q_\ell(\mathbf v,\boldsymbol\theta)$ for $\ell\in\mathcal L$,  and the
power balances $p_i^{\mathrm g}-p_i^{\mathrm d}-p_i(\mathbf v,\boldsymbol\theta)$ and
$q_i^{\mathrm g}-q_i^{\mathrm d}-q_i(\mathbf v,\boldsymbol\theta)$ for $i\in\mathcal N_0$. All dependence on
$(\mathbf p^{\mathrm g},\mathbf q^{\mathrm g},\tilde{\mathbf p}_{\mathcal L},\tilde{\mathbf q}_{\mathcal L})$ is linear, so the Hessians are
those of $p_i,q_i,p_\ell,q_\ell$ with respect to $(\mathbf v,\boldsymbol\theta)$. Write each entry of the bus
admittance matrix in polar form, $G_{ik}+\mathrm jB_{ik}=Y_{ik}=|Y_{ik}|e^{\mathrm j\psi_{ik}}$. The identities
$G_{ik}\cos\phi+B_{ik}\sin\phi=|Y_{ik}|\cos(\phi-\psi_{ik})$ and
$G_{ik}\sin\phi-B_{ik}\cos\phi=|Y_{ik}|\sin(\phi-\psi_{ik})$ turn \eqref{eq:nodal-injections} and
\eqref{eq:branch-flows} into
\begin{subequations}
\begin{align}
    p_i(\mathbf v,\boldsymbol\theta)&=G_{ii}\,v_i^2+\sum_{k\ne i}|Y_{ik}|\,v_iv_k\cos(\theta_{ik}-\psi_{ik}),\\
    q_i(\mathbf v,\boldsymbol\theta)&=-B_{ii}\,v_i^2+\sum_{k\ne i}|Y_{ik}|\,v_iv_k\sin(\theta_{ik}-\psi_{ik}),\\
    p_\ell(\mathbf v,\boldsymbol\theta)&=|Y_{ik}|\,v_iv_k\cos(\theta_{ik}-\psi_{ik}),\\
    q_\ell(\mathbf v,\boldsymbol\theta)&=|Y_{ik}|\,v_iv_k\sin(\theta_{ik}-\psi_{ik}),
\end{align}
\end{subequations}

for a branch $\ell=(i,k)$, where the sums run over the buses $k$ adjacent to $i$, since $Y_{ik}=0$
otherwise. Hence every nonlinear term in $\tilde{\mathbf g}$ is of one of two kinds: a \emph{cross term}
$|Y_{ik}|\,v_iv_k\cos(\theta_{ik}-\psi_{ik})$ or $|Y_{ik}|\,v_iv_k\sin(\theta_{ik}-\psi_{ik})$ attached to a
branch $\ell=(i,k)\in\mathcal L$, or a \emph{self term} $G_{ii}v_i^2$ or $-B_{ii}v_i^2$ attached to a bus
$i\in\mathcal N_0$. In particular, each $\tilde g_j$ is a linear function plus a
sum of cross and self terms, so $\nabla^2\tilde g_j(\mathbf z_j)$ is the sum of the Hessians of those
terms, which we now bound.

Since $\boldsymbol\xi,\boldsymbol\zeta$ are feasible and the voltage limits \eqref{eq:AC-OPF-f} form a convex box,
every point of the segment $[\boldsymbol\xi,\boldsymbol\zeta]$, in particular every $\mathbf z_j$, satisfies
$v_i\le\overline v$ for all $i\in\mathcal N_0$. Now, consider a cross term of branch $\ell=(i,k)$, say
$f_\ell:=|Y_{ik}|\,v_iv_k\cos(\theta_{ik}-\psi_{ik})$. It depends only on the four variables
$v_i,v_k,\theta_i,\theta_k$, so $\nabla^2 f_\ell$ is zero outside the $4\times4$ block on those
coordinates. A direct computation shows that every nonzero entry of this block is, up to sign, one of
\begin{align*}
    |Y_{ik}|\cos(\theta_{ik}-\psi_{ik}),\qquad
    |Y_{ik}|\,v_i\sin(\theta_{ik}-\psi_{ik}),\\
    |Y_{ik}|\,v_k\sin(\theta_{ik}-\psi_{ik}),\qquad
    |Y_{ik}|\,v_iv_k\cos(\theta_{ik}-\psi_{ik}),
\end{align*}
and summing the squares of all entries gives
\begin{align}
 \nonumber  \|\nabla^2 f_\ell\|_F^2
  =&\ |Y_{ik}|^2\Big[2\cos^2(\theta_{ik}-\psi_{ik}) \\\nonumber
  &\quad +4\sin^2(\theta_{ik}-\psi_{ik})\big(v_i^2+v_k^2\big) \\\nonumber
  &\quad +4v_i^2v_k^2\cos^2(\theta_{ik}-\psi_{ik})\Big] \\
  \le&\ 4|Y_{ik}|^2\big(1+\overline v^2\big)^2,
\end{align}
using $\cos^2,\sin^2\le1$ and $v_i,v_k\le\overline v$. As the operator norm is bounded by the
Frobenius norm, $\|\nabla^2 f_\ell\|\le2(1+\overline v^2)|Y_{ik}|$. The sine cross terms obey the same exact result. Consequently, for
any vector $\mathbf u$, a cross term of branch $\ell=(i,k)$ satisfies
\begin{align}
    |\mathbf u^\top\nabla^2 f_\ell\,\mathbf u|
    \ \le\ 2(1+\overline v^2)|Y_{ik}|\,\big(u_{v_i}^2+u_{\theta_i}^2+u_{v_k}^2+u_{\theta_k}^2\big),
    \label{eq:cross-quad}
\end{align}
since only the four coordinates of $\mathbf u$ on the block enter the quadratic form. A self term of
bus $i$, $G_{ii}v_i^2$ or $-B_{ii}v_i^2$, depends only on $v_i$, so its Hessian has a single nonzero
entry, $2G_{ii}$ or $-2B_{ii}$, in the $(v_i,v_i)$ position, and for any $\mathbf u$,
\begin{align}
    |\mathbf u^\top\nabla^2(G_{ii}v_i^2)\,\mathbf u|\le&2\, |Y_{ii}|\,u_{v_i}^2,    \label{eq:self-quad} \\ 
    |\mathbf u^\top\nabla^2(B_{ii}v_i^2)\,\mathbf u|\le&\, 2|Y_{ii}|\,u_{v_i}^2 .
    \label{eq:self-quad2}
\end{align}

Since $\mathbf A$ is symmetric, $\|\mathbf A\|=\max_{\|\mathbf u\|=1}|\mathbf u^\top\mathbf A\mathbf u|$. Fix a unit vector
$\mathbf u$. Expanding $\mathbf u^\top\mathbf A\mathbf u=\sum_j\lambda_j\,\mathbf u^\top\nabla^2\tilde g_j(\mathbf z_j)\,\mathbf u$ into
cross and self terms and bounding each by \eqref{eq:cross-quad}, \eqref{eq:self-quad} or \eqref{eq:self-quad2}, it remains
to count how often each term occurs. A branch appears in six residuals (the active and reactive
balances at its two ends and its two flow definitions) and a bus in two (its active and reactive
balances), so by Cauchy--Schwarz the multipliers attached to any one branch or bus have total
absolute value at most $\sqrt6\,\|\boldsymbol\lambda\|_2$. Hence
\begin{align}
   |\mathbf u^\top \mathbf A \mathbf u|
   \le&\ 2\sqrt6\,(1+\overline v^2)\,\|\boldsymbol\lambda\|_2 \nonumber \\ & \qquad \times \Big[\sum_{\ell=(i,k)\in\mathcal L}|Y_{ik}|\big(u_{v_i}^2+u_{\theta_i}^2+u_{v_k}^2+u_{\theta_k}^2\big) \nonumber\\
   &\qquad\qquad +\sum_{i\in\mathcal N_0}|Y_{ii}|\,u_{v_i}^2\Big] \nonumber\\
   \le&\ 2\sqrt6\,(1+\overline v^2)\,\|\boldsymbol\lambda\|_2
   \sum_{i\in\mathcal N_0}\Big(\sum_{k\in\mathcal N_0}|Y_{ik}|\Big)\big(u_{v_i}^2+u_{\theta_i}^2\big) \nonumber \\
   \le&\ 2\sqrt6\,(1+\overline v^2)\Big(\max_{i\in\mathcal N_0}\sum_{k\in\mathcal N_0}|Y_{ik}|\Big)\|\boldsymbol\lambda\|_2 ,
\end{align}
where the second inequality regroups by bus: the coordinates of bus $i$ appear once for each
incident branch $\ell=(i,k)\in\mathcal L_i$ with weight $|Y_{ik}|$, and once in the self term with weight
$|Y_{ii}|$. Therefore $\|\mathbf A\|\le M\|\boldsymbol\lambda\|_2$ with $M$ as in \eqref{eq:Mdelta}, using
$2\sqrt6\,(1+\overline v^2)\le10(1+\overline v)$ for $\overline v\le2$.

Finally, LICQ bounds $\|\boldsymbol\lambda\|_2$: writing the stationarity relation as $\hat{\mathbf x}-\boldsymbol\xi=\mathbf J(\boldsymbol\xi)^\top\mathbf w$ with $\mathbf w:=\big(\boldsymbol\lambda,(\mu_m)_{m\in\mathcal A(\boldsymbol\xi)}\big)$, since $\mu_m=0$ off the active set, \eqref{eq:licq} gives
$\|\boldsymbol\lambda\|_2\le\|\mathbf w\|_2\le\|\hat{\mathbf x}-\boldsymbol\xi\|/\underline{\sigma}$. Substituting this into
\eqref{eq:crossA} gives the \emph{cross-term bound}
\begin{align}
   \nonumber  (\hat{\mathbf x}-\boldsymbol\xi)^\top(\boldsymbol\zeta-\boldsymbol\xi)\ \le&\, \frac{\|\hat{\mathbf x}-\boldsymbol\xi\|\,M}{2\underline{\sigma}}\,\|\boldsymbol\zeta-\boldsymbol\xi\|^2
    \\ =&\, \frac{\|\hat{\mathbf x}-\boldsymbol\xi\|}{2\delta}\,\|\boldsymbol\zeta-\boldsymbol\xi\|^2 ,
    \label{eq:cross}
\end{align}
using $\delta=\underline{\sigma}/M$. Inserting \eqref{eq:cross} into the quadratic expansion
$\|\hat{\mathbf x}-\boldsymbol\zeta\|^2=\|\hat{\mathbf x}-\boldsymbol\xi\|^2-2(\hat{\mathbf x}-\boldsymbol\xi)^\top(\boldsymbol\zeta-\boldsymbol\xi)+\|\boldsymbol\zeta-\boldsymbol\xi\|^2$ yields
\begin{align}
    \|\hat{\mathbf x}-\boldsymbol\zeta\|^2\ \ge\ \|\hat{\mathbf x}-\boldsymbol\xi\|^2+\Big(1-\tfrac{\|\hat{\mathbf x}-\boldsymbol\xi\|}{\delta}\Big)\|\boldsymbol\zeta-\boldsymbol\xi\|^2,
\end{align}
which is \eqref{eq:reach-growth}.
\end{proof}

\section{Additional details on GridSFM architecture}\label{appendix:architecture}
In this section, we provide a full detailed description of the GridSFM design.
The section is broken into three components: the graph encoding, the core layer
design, and the output heads.

\subsection{Graph encoding} \label{appendix:graph-encoding}
As introduced in Section~\ref{sec:graph-encoding}, the input grid is encoded as
a heterogeneous graph whose nodes take one of seven types,
$\Sigma := \{\mathrm{bus}, \mathrm{gen}, \mathrm{load}, \mathrm{shunt},
\mathrm{line}, \mathrm{transformer}, \mathrm{cycle}\}$, and whose edges record
incidence: generators, loads, and shunts attach to their bus by an unsigned
edge, while each line and transformer attaches to its two endpoint buses by a
signed edge, $+1$ at the from-bus and $-1$ at the to-bus, and to every cycle
containing it by a second signed edge giving the direction in which that cycle
traverses it. Throughout, $\sigma(i) \in \Sigma$ is the type of node $i$, and in
subscripts we abbreviate the line and transformer types as $\mathrm{ac}$ and
$\mathrm{tr}$, respectively.

Let $\bm{f}_i$ denote the raw feature vector of node $i$. Since each node type
may carry features of different widths, we lift each type to a common hidden
dimension by a per-type affine map,
\begin{equation}
\label{eq:lift}
    \bm{z}^{(0)}_i = \bm{W}^{\mathrm{in}}_{\sigma(i)}
      \operatorname{LayerNorm}_{\sigma(i)}\!\bigl([\,\bm{f}_i ; \bm{e}_i^{\mathrm pos}\,]\bigr)
      + \bm{b}^{\mathrm{in}}_{\sigma(i)},
\end{equation}
where $[\,\cdot\,;\cdot\,]$ denotes concatenation. Here and throughout the
architecture, $\bm{z}$ denotes a hidden state of the network.

When node $i$'s type is a bus, a branch (AC line or transformer), or a cycle, it
is associated with a positional encoding $\bm{e}_i^{\rm pos}$ that captures its position
relative to the other nodes. It is computed through a diffusion operation on a
Laplacian defined from the topology of the input grid. This style
of encoding is not new \cite{dwivedi2022graph}. 

We begin by defining the Laplacians of the encoding. Let
$\bm{A}_{\mathrm{ac}}$ collect the bus--line signs, so that
$(\bm{A}_{\mathrm{ac}})_{ij} = +1$ if bus $i$ is the from-bus of line $j$, $-1$
if it is the to-bus, and $0$ if the two are not incident, and let
$\bm{A}_{\mathrm{tr}}$ collect the bus--transformer signs,
$\bm{A}^{\mathrm{cycle}}_{\mathrm{ac}}$ the line--cycle signs, and
$\bm{A}^{\mathrm{cycle}}_{\mathrm{tr}}$ the transformer--cycle signs in the same
way. These give one Laplacian per encoded node type,
\begin{subequations}
\label{eq:laplacians}
\begin{align}
    \bm{L}_{\mathrm{bus}}  =&\,
        \bm{A}_{\mathrm{ac}}\bm{A}_{\mathrm{ac}}^{\!\top}
        + \bm{A}_{\mathrm{tr}}\bm{A}_{\mathrm{tr}}^{\!\top} \,, \\
    \bm{L}_{\mathrm{ac}}   =&\,
        \bm{A}_{\mathrm{ac}}^{\!\top}\bm{A}_{\mathrm{ac}}
        + \bm{A}^{\mathrm{cycle}}_{\mathrm{ac}}
          (\bm{A}^{\mathrm{cycle}}_{\mathrm{ac}})^{\!\top} \,, \\
    \bm{L}_{\mathrm{tr}}   =&\,
        \bm{A}_{\mathrm{tr}}^{\!\top}\bm{A}_{\mathrm{tr}}
        + \bm{A}^{\mathrm{cycle}}_{\mathrm{tr}}
          (\bm{A}^{\mathrm{cycle}}_{\mathrm{tr}})^{\!\top} \,, \\
    \bm{L}_{\mathrm{cycle}} =&\,
        (\bm{A}^{\mathrm{cycle}}_{\mathrm{ac}})^{\!\top}
         \bm{A}^{\mathrm{cycle}}_{\mathrm{ac}}
        + (\bm{A}^{\mathrm{cycle}}_{\mathrm{tr}})^{\!\top}
          \bm{A}^{\mathrm{cycle}}_{\mathrm{tr}} \,,
\end{align}
\end{subequations}
so that $\bm{L}_\sigma$ is defined for every type
$\sigma \in \{\mathrm{bus}, \mathrm{ac}, \mathrm{tr}, \mathrm{cycle}\}$ that
carries an encoding. Only $\bm{L}_{\mathrm{ac}}$ and $\bm{L}_{\mathrm{tr}}$
couple a branch both to those it shares a bus with and to those it shares a
cycle with.

The positional encoding is then obtained by diffusing along $\bm{L}_\sigma$.
Define an encoding initialization as 
\begin{equation}
\label{eq:pe-init}
    \bm{e}^{(0)}_i = \bm{W}^{\mathrm{pe}}_{\sigma(i)}\bm{f}_i
      + \bm{b}^{\mathrm{pe}}_{\sigma(i)} \,,
\end{equation}
where $\bm{W}^{\mathrm{pe}}_{\sigma(i)}$ and $\bm{b}^{\mathrm{pe}}_{\sigma(i)}$
are learnable for each type, and let
$\bm{E}^{(t)}_\sigma := \operatorname{col}\{(\bm{e}^{(t)}_i)^{\!\top}\}_{i \in \mathcal{V}_\sigma}$
be the stack of the positional encodings of all nodes $\mathcal{V}_\sigma$ of
type $\sigma$. Then, define the diffusion for $T$ steps as
\begin{equation}
\label{eq:diffusion-pe}
    \bm{E}^{(t+1)}_\sigma = \bm{E}^{(t)}_\sigma
        - \bm{L}_\sigma \bm{E}^{(t)}_\sigma
          \operatorname{diag}\bigl(\bm{\alpha}^{(t)}_\sigma\bigr),
    \qquad t = 0,\dots,T-1,
\end{equation}
where $\bm{\alpha}^{(t)}_\sigma$ is a learnable per-channel step size.
Then, the positional encoding for each node is a learnable function of
the $T+1$ states it visited,
\begin{equation}
\label{eq:pe-readout}
        \bm{e}_i^{\rm pos} = \operatorname{LayerNorm}_{\sigma(i)}\Bigl(
      \mathrm{MLP}_{\sigma(i)}\bigl([\,\bm{e}^{(0)}_i ; \dots ;
      \bm{e}^{(T)}_i\,]\bigr)\Bigr) \,,
\end{equation}
where $\mathrm{MLP}_{\sigma(i)}$ is a two-layer network with GELU activations
mapping the $T+1$ concatenated states back to the corresponding position encoding channel dimension. Only
$\bm{L}_\sigma$ is fixed by the topology.  $\bm{W}^{\mathrm{pe}}_{\sigma(i)}$,
$\bm{b}^{\mathrm{pe}}_{\sigma(i)}$, $\bm{\alpha}^{(t)}_\sigma$, and
$\mathrm{MLP}_{\sigma(i)}$ are trained jointly with the rest of the network, so
the encoding is learned rather than precomputed.

\subsection{Neural Network Design}
\label{sec:nn-design-appendix}

After encoding, GridSFM consists of eight identical blocks combining linear
self-attention, signed message passing, and a position-wise feed-forward
network (see Figure~\ref{fig:arch-diagram}). Each block applies the three in
sequence in pre-normalized residual form:
\begin{subequations}
\label{eq:gridblock-appendix}
\begin{alignat}{2}
    \bm{z}^{\mathrm{attn}}_i &= \bm{z}_i
      + \operatorname{Attn}_i\bigl(\operatorname{LayerNorm}(\bm{z})\bigr)\,,
      &\hspace{1em}&  \\
    \bm{z}^{\mathrm{mp}}_i &= \bm{z}^{\mathrm{attn}}_i
      + \operatorname{MP}_i\bigl(
        \operatorname{LayerNorm}(\bm{z}^{\mathrm{attn}})\bigr)\,,\\
    \bm{z}^{\mathrm{ffn}}_i &= \bm{z}^{\mathrm{mp}}_i
      + \operatorname{FFN}\bigl(
        \operatorname{LayerNorm}(\bm{z}^{\mathrm{mp}}_i)\bigr)\,,
\end{alignat}
\end{subequations}
where $\bm{z}$ without a node index denotes the states of all nodes, and the
three layer normalizations are distinct, each carrying its own parameters for
each node type. Note that $\operatorname{Attn}_i$ and $\operatorname{MP}_i$
depend on the states of other nodes, while $\operatorname{FFN}$ acts on node
$i$ alone. Eight such blocks are stacked, the output $\bm{z}^{\mathrm{ffn}}_i$
of each becoming the input $\bm{z}_i$ of the next, with $\bm{z}^{(0)}_i$ from
\eqref{eq:lift} entering the first and all states of common hidden dimension
$d$. Every parameter below is learned separately for each node type
and shared by all nodes of that type. We now define the three operators,
writing $\tilde{\bm{z}}$ for whichever normalized state
\eqref{eq:gridblock-appendix} passes in, each of dimension $d$.

We begin with the normalized linear self-attention design
\cite{katharopoulos20a}. The layer acts within a node type: node $i$ attends to
the nodes of its own type $\sigma(i)$ and to no others. Let $\varphi$ be the
element-wise feature map of \cite{katharopoulos20a}, that is, $\varphi(x)=x+1$ for $x>0$ and
$\varphi(x)=e^x$ for $x \le 0$. The multi-head extension of the single-head
layer of Section~\ref{sec:nn-design} is
\begin{subequations}
\label{eq:attention-layer-design}
\begin{align}
    \operatorname{Attn}_i(\tilde{\bm{z}}) =&\, \mathrlap{\bm{W}^{o}_{\sigma(i)}
        \underbrace{\operatorname{Concat}\bigl(\bm{o}_i^{(1)}, \dots,
        \bm{o}_i^{(H)}\bigr)}_{\text{multi-head attention}} \,,} \\
    \bm{o}_i^{(m)} =&\,
        \frac{(\bm{S}_{\sigma(i)}^{(m)})^{\!\top}\varphi(\bm{q}_i^{(m)})}
             {\varphi(\bm{q}_i^{(m)})^{\!\top}\bm{\kappa}_{\sigma(i)}^{(m)}} \,,
        && \text{single head} \\
    \bm{S}_{\tau}^{(m)} =&\, \sum_{j \in \mathcal{V}_\tau}
        \varphi(\bm{k}_j^{(m)})\,(\bm{v}_j^{(m)})^{\!\top} \,,
        && \text{key--value product} \\
    \bm{\kappa}_{\tau}^{(m)} =&\, \sum_{j \in \mathcal{V}_\tau}
        \varphi(\bm{k}_j^{(m)}) \,,
        && \text{key normalization} \\
    \bm{q}_i^{(m)} =&\, \bm{W}^{q,(m)}_{\sigma(i)} \tilde{\bm{z}}_i \,,
        && \text{query embedding} \\
    \bm{k}_j^{(m)} =&\, \bm{W}^{k,(m)}_{\sigma(j)} \tilde{\bm{z}}_j \,,
        && \text{key embedding} \\
    \bm{v}_j^{(m)} =&\, \bm{W}^{v,(m)}_{\sigma(j)} \tilde{\bm{z}}_j \,,
        && \text{value embedding}
\end{align}
\end{subequations}
where $m = 1,\dots,H$ indexes the $H = 4$ attention heads, each projection
$\bm{W}^{q,(m)}_{\sigma(i)}$, $\bm{W}^{k,(m)}_{\sigma(j)}$,
$\bm{W}^{v,(m)}_{\sigma(j)}$ maps into $\mathbb{R}^{d/H}$, and
$\bm{W}^{o}_{\sigma(i)}$ is a learnable output projection applied to the
concatenated head outputs; with $H=1$ this reduces to the expression of
Section~\ref{sec:nn-design}.

We turn next to the signed message passing. Let $\mathcal{N}(i)$ denote the
nodes adjacent to $i$, and let $s_{ij}$ be the sign of the edge between $i$ and
$j$, equal to $\pm 1$ on the bus--branch and branch--cycle edges of
Section~\ref{sec:graph-encoding} and to $+1$ on the unsigned edges attaching
generators, loads, and shunts. The message-passing operator aggregates the
neighbors of each type separately,
\begin{align}
\operatorname{MP}_i(\tilde{\bm{z}})
= \sum_{\tau} \Bigl[&
  \underbrace{
    \operatorname*{mean}_{j \in \mathcal{N}(i) \cap \mathcal{V}_\tau}
    \bigl(s_{ij}\, \bm{W}_{\tau,\,\sigma(i)}\,\tilde{\bm{z}}_j\bigr)
  }_{\text{neighbor contributions}}
  \notag \\
&\qquad + \underbrace{
    \bm{W}'_{\tau,\,\sigma(i)}\,\tilde{\bm{z}}_i
  }_{\text{node contribution}}
  + \bm{b}_{\tau,\,\sigma(i)} \Bigr]\,,
\label{eq:mp-update-appendix}
\end{align}
where the sum runs over the node types $\tau$ adjacent to $\sigma(i)$ and the
bracketed term is taken as zero when $\mathcal{N}(i) \cap \mathcal{V}_\tau$ is
empty; composed with the residual of \eqref{eq:gridblock-appendix}, this is
exactly \eqref{eq:mp-update}. Each learnable parameter
$\bm{W}_{\tau,\,\sigma(i)}$, $\bm{W}'_{\tau,\,\sigma(i)}$, and
$\bm{b}_{\tau,\,\sigma(i)}$ is learned once for each ordered pair of neighbor
type $\tau$ and receiving type $\sigma(i)$, and used across all relations of
those types.

Finally, each block closes with the position-wise feed-forward network
$\operatorname{FFN}$, whose weights
$\bm{W}^{\mathrm{ffn}}_{\sigma(i)}, \bm{b}^{\mathrm{ffn}}_{\sigma(i)},
\bm{W}'^{\,\mathrm{ffn}}_{\sigma(i)}, \bm{b}'^{\,\mathrm{ffn}}_{\sigma(i)}$
are carried per node type in each block, shared by every node of that type.

\subsection{GridSFM Output Heads}
\label{sec:output-appendix}

To complete GridSFM, the output heads are composed of a fusion stage that
gathers each node's neighborhood and the grid as a whole, followed by a
per-quantity head whose output is projected onto the operational limits. For a bus $i \in \mathcal N_0$, write
\begin{equation}
\label{eq:fusion-agg}
    \bm{a}^\tau_i := \operatorname*{mean}_{j \in \mathcal{N}(i) \cap \mathcal{V}_\tau}
      s_{ij}\, \bm{z}_j \,, \qquad
    \tau \in \{\mathrm{ac}, \mathrm{tr}, \mathrm{gen}, \mathrm{load},
    \mathrm{shunt}\} \,,
\end{equation}
for the signed mean over its neighbors of each adjacent type, recalling that
$s_{ij} = +1$ on the device edges. Cycles are not adjacent to buses, so we
reach them through the branches they contain by defining
\begin{equation}
\label{eq:fusion-cycle}
    \bm{a}^{\mathrm{cycle}}_i := \sum_{\tau \in \{\mathrm{ac}, \mathrm{tr}\}}
      \operatorname*{mean}_{j \in \mathcal{N}(i) \cap \mathcal{V}_\tau} s_{ij}
      \Bigl( \operatorname*{mean}_{j' \in \mathcal{N}(j) \cap
      \mathcal{V}_{\mathrm{cycle}}} s_{jj'}\, \bm{z}_{j'} \Bigr) \,.
\end{equation}
The fusion then produces a bus readout, a generator readout, and a single
global summary,
\begin{subequations}
\label{eq:fusion}
\begin{align}
    \bm{z}^{\mathrm{bus}}_i =&\, \operatorname{LayerNorm}_{\mathrm{bus}}\Bigl(
      \sum_{\tau} \bm{W}_{\mathrm{bus},\tau}^{\rm fuse}\, \bm{a}^\tau_i
      + \bm{W}'_{\mathrm{bus}} \bm{z}_i \Bigr) \,,   \label{eq:fusion-a}\\
    \bm{z}^{\mathrm{gen}}_i =&\, \operatorname{LayerNorm}_{\mathrm{gen}}\bigl(
      \bm{W}_{\mathrm{gen}}^{\rm fuse} \bm{z}^{\mathrm{bus}}_j
      + \bm{W}'_{\mathrm{gen}} \bm{z}_i \bigr) \,,  \label{eq:fusion-b} \\
    \bm{z}^{\mathrm{glob}} =&\, \operatorname{LayerNorm}_{\mathrm{glob}}\Bigl(
      \bm{W}_{\mathrm{glob}}^{\rm fuse} \bm{z}^{\mathrm{pool}} \Bigr) \,, \label{eq:fusion-c}\\
    \bm{z}^{\mathrm{pool}} =&\, \operatorname{col}\Bigl(
      \operatorname*{mean}_{k \in \mathcal N_0} \bm{z}^{\mathrm{bus}}_k \,,
      \operatorname*{max}_{k \in \mathcal N_0} \bm{z}^{\mathrm{bus}}_k \,, \nonumber \\
      &\, \qquad \quad
      \bigl\{ \operatorname*{mean}_{k \in \mathcal{V}_\tau} \bm{z}_k\,,
      \operatorname*{max}_{k \in \mathcal{V}_\tau} \bm{z}_k
      \bigr\}_{\tau \in \Sigma_{\mathrm{p}}} \Bigr) \,,\label{eq:fusion-d}
\end{align}
\end{subequations}
where in \eqref{eq:fusion-a} the sum runs over the six source types of
\eqref{eq:fusion-agg} and \eqref{eq:fusion-cycle}, each with its own learnable
parameters $\bm{W}_{\mathrm{bus},\tau}$. In \eqref{eq:fusion-b}, $i$ is a
generator ($i \in \mathcal{V}_{\mathrm{gen}}$) and $j$ is its unique adjacent
bus, $\mathcal{N}(i) = \{j\}$. In \eqref{eq:fusion-d}, $\Sigma_{\mathrm{p}}$
runs over the types carrying a positional encoding, that is,
$\Sigma_{\mathrm{p}} := \{\mathrm{bus}, \mathrm{line}, \mathrm{transformer},
\mathrm{cycle}\}$.

Each control is then produced by a two-layer network with GELU activation,
\begin{subequations}
\label{eq:heads}
\begin{align}
    \widehat p^{\mathrm g}_i =&\, \Pi_{[\underline p^{\mathrm g}_i,
      \overline p^{\mathrm g}_i]}\bigl(\check p^{\mathrm g}_i
      + \tfrac{1}{2}(\underline p^{\mathrm g}_i
      + \overline p^{\mathrm g}_i)\bigr) \,,          \label{eq:heads-a}\\
    \widehat v_i =&\, \Pi_{[\underline v_i, \overline v_i]}
      \bigl(\check v_i\bigr) \,,                      \label{eq:heads-b}\\
    \check p^{\mathrm g}_i =&\, (\bm{w}^{\mathrm p}_2)^{\!\top}
      \operatorname{GELU}\bigl(\bm{W}^{\mathrm p}_1 \bm{\eta}^{\mathrm p}_i
      + \bm{b}^{\mathrm p}_1\bigr) + b^{\mathrm p}_2 \,,  \label{eq:heads-c}\\
    \check v_i =&\, (\bm{w}^{\mathrm v}_2)^{\!\top}
      \operatorname{GELU}\bigl(\bm{W}^{\mathrm v}_1 \bm{\eta}^{\mathrm v}_i
      + \bm{b}^{\mathrm v}_1\bigr) + b^{\mathrm v}_2 \,,  \label{eq:heads-d}\\
    \bm{\eta}^{\mathrm p}_i =&\, [\,\bm{z}^{\mathrm{gen}}_i ;
      \bm{z}^{\mathrm{glob}} ; \bm{\chi}\,] \,,           \label{eq:heads-e}\\
    \bm{\eta}^{\mathrm v}_i =&\, [\,\bm{z}^{\mathrm{bus}}_i ;
      \bm{z}^{\mathrm{glob}} ; \bm{\chi} ; v^{\mathrm{set}}_i\,] \,,
                                                          \label{eq:heads-f}
\end{align}
\end{subequations}
where \eqref{eq:heads-a} and \eqref{eq:heads-c} are evaluated at each
in-service generator $i$ and \eqref{eq:heads-b} and \eqref{eq:heads-d} at each voltage-controlled bus. $\Pi_{[a, b]}$ denotes the projection onto the
interval $[a,b]$, $\bm{W}^{\mathrm p}_1, \bm{W}^{\mathrm v}_1$ are learnable
weight matrices, $\bm{w}^{\mathrm p}_2, \bm{w}^{\mathrm v}_2$ learnable
vectors, and $b^{\mathrm p}_2, b^{\mathrm v}_2$ learnable scalars.
$\bm{\chi} \in \mathbb{R}^{7}$ is a fixed per-instance descriptor of grid scale
and loading (bus and generator counts, active and reactive load as fractions of
total generation capability, the mean voltage band, and branch rating and
admittance magnitudes) and $v^{\mathrm{set}}_i$ is the mean voltage setpoint of
the generators at bus $i$, taken as zero where there are none.

\section{\raggedright Additional Experiments}\label{app:experiments} %

\subsection{Hyperparameters: Dataset Generation}\label{appendix:dataset-design}

The pretrained dataset is built by considering base topologies from OPFData \cite{lovett2024opfdatalargescaledatasetsac}, PGLib \cite{pglib}, Texas A\&M Synthetic Grids \cite{7725528} and Microsoft Research Synthetic topologies \cite{britto2026buildingpowergridmodels} and then applying the following perturbations:
\begin{enumerate}[label=(\roman*)]
\item \emph{Demand.} Scale each load as
\[
P^d_\ell \leftarrow \sigma \varepsilon^P_\ell P^d_\ell,
\qquad
Q^d_\ell \leftarrow \sigma \varepsilon^Q_\ell Q^d_\ell,
\qquad \ell \in \mathcal{L},
\]
where $\sigma \sim \mathcal{U}[\sigma_{\mathrm{lo}},\sigma_{\mathrm{hi}}]$ is system-wide and
$\varepsilon^P_\ell,\varepsilon^Q_\ell \sim \mathcal{U}[0.9,1.1]$ are line independent.
We require
\[
\sigma_{\mathrm{hi}} < \bar{\sigma}
= \frac{\sum_i P_i^{\max}}{\sum_\ell P^d_\ell},
\]
since demand above $\bar{\sigma}$ is infeasible even without losses.

\item \emph{Generation cost.} Randomly select $40\%$ of the in-service generators and pertmute their quadratic cost coefficients. Namely, if $c_i=(c_{i2},c_{i1},c_{i0})$, where the cost of unit $i$ is $c_{i2}p_i^2+c_{i1}p_i+c_{i0}$, then permute $c_{i2}$ across all $i$ between the $40\%$ subset of available coefficients for each term. Repeat the same for $c_{i1}$ and $c_{i0}$.

\item \emph{Congestion.} Randomly select $10\%$ of rated branches, ($28\%$ for \texttt{case500\_goc}), and scale their thermal limits as
\[
\mathrm{rate}_{a,b,c} \leftarrow \phi\,\mathrm{rate}_{a,b,c},
\qquad
\phi \sim \mathcal{U}[\phi_{\mathrm{lo}},\phi_{\mathrm{hi}}],
\]
with $\phi$ drawn independently for each branch.

\item \emph{Voltage bands.} Randomly select $10\%$ of buses and narrow their voltage bands:
\[
V_i^{\min} \leftarrow V_i^{\min}+\delta_i^-,
\qquad
V_i^{\max} \leftarrow V_i^{\max}-\delta_i^+,
\]
where $\delta_i^\pm \sim \mathcal{U}[0,0.01]$ p.u. Revert any change that inverts the band.

\item \emph{Outages.} Consider in-service generators with
$P_i^{\max}>10^{-2}$ p.u. Draw the number of outages ($(3, 2, 1)$ or $(6, 4, 2)$ for \texttt{ACTIVSg10k}) with probabilities $(0.7,0.2,0.1)$, then trip that many generators uniformly at random. Limit the number of outages so that at least two generators remain in service. 
\end{enumerate}
where the demand and generation cost perturbation are always applied and the congestion, voltage band, and outage are applied with uniform probabilities $0.2, 0.15,$ and $0.3$ respectively. 
These perturbations are designed to force GridSFM to learn not only the least-cost generation dispatch, but also how to operate across a broad range of active constraint regimes that stress different components of the network. We provide the full hyperparameter list for dataset generation in Table \ref{tab:hyperparams-dataset-gen}.

\begin{table}[H]
\centering
\caption{\vspace{1em} \parbox{\columnwidth}{\normalfont\footnotesize {Scenario generation hyperparameters using the perturbations as in Section \ref{subsec:data-gen}. The modes are drawn independently and compose, so one
scenario carries two to four of them at once: demand and cost are applied to every
scenario, while congestion, voltage bands and outages fire with probability $0.20$,
$0.15$ and $0.30$ respectively. }}}
\label{tab:hyperparams-dataset-gen}
\scriptsize
\setlength{\tabcolsep}{3pt}
\renewcommand{\arraystretch}{1.15}
\setlength{\arrayrulewidth}{0.6pt}
\setlength{\heavyrulewidth}{0.9pt}
\setlength{\lightrulewidth}{0.55pt}
\arrayrulecolor{gray!60}
\begin{tabular}{@{}l;{1.5pt/1.2pt}cccc@{}}
\arrayrulecolor{black}
\toprule
& & & & \\
\noalign{\vskip -8.6pt}
Hyperparameter & \texttt{case500\_goc} & \texttt{Texas2k} & \texttt{case6470\_rte} & \texttt{ACTIVSg10k} \\
\noalign{\vskip 0.8pt}
\arrayrulecolor{gray!50}\cdashline{1-5}[2.4pt/2.2pt]\arrayrulecolor{black}
\noalign{\vskip 0.8pt}
Demand $\sigma$ & $[0.8,1.2]$ & $[0.8,1.3]$ & $[0.8,1.2]$ & $[0.8,1.1]$ \\
Per-load & \multicolumn{4}{c}{$\varepsilon^P_\ell,\varepsilon^Q_\ell\sim\mathcal U[0.9,1.1]$} \\
Cost $c_i$ & \multicolumn{4}{c}{$40\%$ of in-service units} \\
Congestion & \multicolumn{4}{c}{$10\%$ of rated branches, $\phi\sim\mathcal U[0.70,0.95]$} \\
Voltage bands & \multicolumn{4}{c}{$10\%$ of buses, $\delta_i^{\pm}\sim\mathcal U[0,0.01]$ p.u.} \\
Outages $k$ & $(1,2,3)$ & $(1,2,3)$ & $(1,2,3)$ & $(2,4,6)$ \\
$k$ probabilities & \multicolumn{4}{c}{$(0.7,0.2,0.1)$} \\
\bottomrule
\end{tabular}
\setlength{\arrayrulewidth}{0.4pt}
\end{table}

\subsection{Hyperparameters: Training and Fine-tuning}\label{appendix:hyperparams}
\begin{table}[H]
\centering
\caption{\vspace{1em} \parbox{\columnwidth}{\normalfont\footnotesize Pretraining and fine-tuning loss parameters. For \texttt{case6470\_rte}, the control weight is geometrically annealed from $13.05$ to $85$ over the first $100$ epochs to ensure convergence.}}
\label{tab:hyperparams-finetuning}
\scriptsize
\setlength{\tabcolsep}{2pt}
\renewcommand{\arraystretch}{1.15}
\setlength{\arrayrulewidth}{0.6pt}
\setlength{\heavyrulewidth}{0.9pt}
\setlength{\lightrulewidth}{0.55pt}
\arrayrulecolor{gray!60}
\begin{tabular}{@{}l;{1.5pt/1.2pt}cccc@{}}
\arrayrulecolor{black}
\toprule
& & & & \\
\noalign{\vskip -8.6pt}
Hyperparameter & \texttt{case500\_goc} & \texttt{Texas2k} & \texttt{case6470\_rte} & \texttt{ACTIVSg10k} \\
\noalign{\vskip 0.8pt}
\arrayrulecolor{gray!50}\cdashline{1-5}[2.4pt/2.2pt]\arrayrulecolor{black}
\noalign{\vskip 0.8pt}
\makecell[l]{Generation cost\\$w_c$ ($\times10^{-6}$)} & $1.16$ & $2.18$ & $63.6$ & $0.46$ \\
Equality violation $w_g$ & $10$ & $10$ & $10.4$ & $10$ \\
Inequality violation $w_h$ & $10$ & $10$ & $14.5$ & $10$ \\
Power flow failure $w_r$ & $170$ & $144$ & $200$ & $2000$ \\
Control weight $w_u$ & $32$ & $1600$ & $13.05\!\to\!85$ & $300$ \\
Learning rate & $10^{-4}$ & $10^{-4}$ & $10^{-4}$ & $3\!\times\!10^{-4}$ \\
Batch size & $16$ & $4$ & $4$ & $4$ \\
\noalign{\vskip 0.8pt}
\arrayrulecolor{gray!50}\cdashline{1-5}[2.4pt/2.2pt]\arrayrulecolor{black}
\noalign{\vskip 0.8pt}
Dispatch weight $w_p$ & \multicolumn{4}{c}{$1.0$} \\
Voltage weight $w_v$ & \multicolumn{4}{c}{$1.0$} \\
Mean weight $w_\mu$ & \multicolumn{4}{c}{$1.0$} \\
Top-$k$ weight $w_\tau$ & \multicolumn{4}{c}{$1.0$} \\
Max weight $w_\infty$ & \multicolumn{4}{c}{$0.3$} \\
Top-$k$ generators & \multicolumn{4}{c}{$k/|\mathcal A| = 0.0435$} \\
Top-$k$ buses & \multicolumn{4}{c}{$k/|\mathcal N_v| = 0.0254$} \\
Dispatch tolerance $\varepsilon_p$ & \multicolumn{4}{c}{$10^{-2}$ p.u.} \\
Voltage tolerance $\varepsilon_v$ & \multicolumn{4}{c}{$10^{-3}$ p.u.} \\
\bottomrule
\end{tabular}
\setlength{\arrayrulewidth}{0.4pt}
\end{table}

We provide the full hyperparameter list for the training and fine-tuning losses in Table \ref{tab:hyperparams-finetuning}. Moreover, the code to replicate the results and models are publicly available (see \cite{bhan_gridsfm}). 

\subsection{Hyperparameters: Baseline Implementations}\label{sec:hyperparams-baselines}

\begin{table}[H]
\centering
\caption{\vspace{1em} \parbox{\columnwidth}{\normalfont Baseline hyperparameters including both network size and the optimizer settings.}}
\label{tab:hyperparams-baselines}
\footnotesize
\setlength{\tabcolsep}{6pt}
\renewcommand{\arraystretch}{1.15}
\setlength{\arrayrulewidth}{0.6pt}
\setlength{\heavyrulewidth}{0.9pt}
\setlength{\lightrulewidth}{0.55pt}
\arrayrulecolor{gray!60}
\begin{tabular}{@{}p{13.5em};{1.5pt/1.2pt}c@{}}
\arrayrulecolor{black}
\toprule
\arrayrulecolor{gray!60}
& \\
\noalign{\vskip -10.4pt}
Hyperparameter & Value \\
\noalign{\vskip 0.8pt}
\arrayrulecolor{gray!50}\cdashline{1-2}[2.4pt/2.2pt]\arrayrulecolor{black}
\noalign{\vskip 0.8pt}
\arrayrulecolor{gray!60}
\multicolumn{1}{@{}l;{1.5pt/1.2pt}}{\itshape MLP (DeepOPF \cite{pan20})} & \\[1pt]
\arrayrulecolor{black}
Layers          & $4$ \\
Width           & $256$ \\
Activation      & GELU \\
Parameters      & \makecell{grid dependent\\(see MLP, Appendix~\ref{sec:hyperparams-baselines})} \\
\noalign{\vskip 0.8pt}
\arrayrulecolor{gray!50}\cdashline{1-2}[2.4pt/2.2pt]\arrayrulecolor{black}
\noalign{\vskip 0.8pt}
\arrayrulecolor{gray!60}
\multicolumn{1}{@{}l;{1.5pt/1.2pt}}{\itshape GNN} & \\[1pt]
\arrayrulecolor{black}
Layers          & $4$ GINEConv \cite{hu2020strategies} \\
Width           & $256$ \\
Activation      & GELU \\
Edge features   & $10$ \\
Parameters      & $0.68$M \\
\noalign{\vskip 0.8pt}
\arrayrulecolor{gray!50}\cdashline{1-2}[2.4pt/2.2pt]\arrayrulecolor{black}
\noalign{\vskip 0.8pt}
\arrayrulecolor{gray!60}
\multicolumn{1}{@{}l;{1.5pt/1.2pt}}{\itshape GridSFM without pretraining} & \\[1pt]
\arrayrulecolor{black}
GridBlocks      & $4$ \\
Width           & $64$ \\
Parameters      & $1.96$M \\
\noalign{\vskip 0.8pt}
\arrayrulecolor{gray!50}\cdashline{1-2}[2.4pt/2.2pt]\arrayrulecolor{black}
\noalign{\vskip 0.8pt}
\arrayrulecolor{gray!60}
\multicolumn{1}{@{}l;{1.5pt/1.2pt}}{\itshape Optimization} & \\[1pt]
\arrayrulecolor{black}
Optimizer       & AdamW \\
Learning rate   & $10^{-3}$ \\
Weight decay    & $10^{-4}$ \\
Schedule        & cosine to $0.01\times$ lr \\
\bottomrule
\end{tabular}
\setlength{\arrayrulewidth}{0.4pt}
\end{table}

We being by briefly explaining the design and training methodology used for each baseline. Table \ref{tab:hyperparams-baselines} provides a summary of all the hyperparameters and training configurations for each baseline.

    \textbf{Multi-Layer Perceptron (MLP):} We train an individual MLP per grid in Section \ref{sec:numerical} using a DeepOPF-style~\cite{pan20} framework. For each grid's scenario varying quantities (per-bus demand, voltage bands and
    shunt/generator aggregates, generator unit availability flag, capability limits, setpoint and cost coefficients), we flatten the scenario conditions into a single input vector. This is then passed through $4$ hidden layers of width $256$ with GELU activations to $P^{\mathrm g}$ at every generator slot and $V_m$ at every
    control bus. Input widths are grid-specific, so this is the network size is grid dependent: $1.07$M parameters on \texttt{case500\_goc}, $3.30$M on
    \texttt{Texas2k}, $6.44$M on \texttt{case6470\_rte} and $10.76$M on
    \texttt{ACTIVSg10k}. All models are trained on $1$k unseen problems per grid generated following the same design as Section \ref{subsec:data-gen}.

    \textbf{GNN:} The same width, depth and activation as the MLP, but with the
    four dense layers replaced by four GINEConv message-passing layers \cite{hu2020strategies} over the bus graph, each branch carrying its ten leading raw columns (angle limits, series impedance, charging susceptance and thermal ratings) as
    unsigned edge features. Every quantity is per-cell rather than per-grid, so the same $0.68$M parameters
    ($678{,}658$) serve all four grids.

    \textbf{Linux Foundation:} We use the harness of 
    \texttt{gridfm-graphkit}~\cite{Pue26} ($12$ layers, hidden width $48$, $8$ attention
    heads, $20.07$M parameters), trained with their code, their data pipeline, with their default hyperparameters and loss on
    $200$k \texttt{case500\_goc} scenarios for $200$ epochs. As it is trained on
    \texttt{case500\_goc} alone, its \texttt{Texas2k}, \texttt{case6470\_rte} and
    \texttt{ACTIVSg10k} rows are zero-shot transfers to grids of a size and topology it
    has never seen.
    
    \textbf{GridSFM (no pretraining):} 
    The GridSFM architecture of
    Section~\ref{sec:foundation-model-design} initialized randomly and trained per grid on the same
    $1$k instances as the MLP and GNN models with the same pretraining loss~\eqref{eq:pretrain-loss}. We use $4$
    GridBlocks of width $64$ ($1.96$M parameters) rather than the released backbone's
    $8\times128$ ($15.15$M).

\subsection{Additional fine-tuning results}\label{sec:finetuning-appendix}
\subsubsection{Additional training metrics}
We additionally present full metrics for training on all grids with all baselines in Table \ref{tab:violation-supp} including both the fine-tuning performance with various amounts data as well as the inequality violations.

\subsubsection{GPU (MadNLP) solver results}
Since we report iterations, it is worth additionally presenting the model warm-start points for solvers beyond Ipopt. To do this, we considered the recently popular MadNLP solver \cite{shin2024accelerating} which is optimized to take advantage of the performance of GPUs. In Table \ref{tab:warmstart-gpu}, we find qualitatively, there is very little difference between the GPU and CPU solver as projection takes less iterations then full AC-OPF solves and all the finetuned model warm start points perform similarly. Lastly, we provide full wall-clock time breakdowns of the solver time and GridSFM inference in Table \ref{tab:compute-cost}.

\subsubsection{Ablation study on fine-tuning design}\label{appendix:ablation}

We present two ablation studies for the fine-tuning design. They aim to answer the questions:
\begin{enumerate}
    \item Why not just finetune with the same pretraining loss?
    \item Does the merit function really contribute significantly to the gradient in such a multi-term loss function?
\end{enumerate}

\begin{figure}[H]
    \centering
    \includegraphics[width=\linewidth]{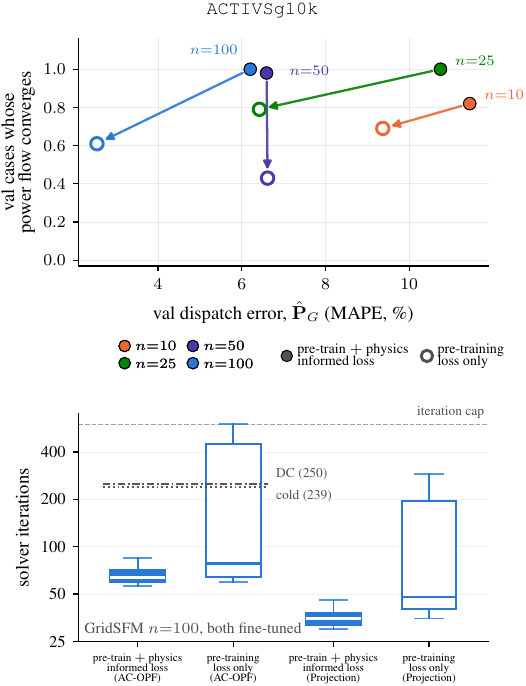}
    \caption{Performance comparison when fine-tuning with the pretrained loss in \eqref{eq:pretrain-loss} and the physics informed fine tune loss in \eqref{eq:finetune-loss}. The top row indicates power flow convergence vs the dispatch error on the validation set. The bottom showcases the performance of the resulting models control variables as both a warm start for AC-OPF and feasible projection over $100$ evaluation cases. $n$ refers to the amount of data used for fine-tuning.}
    \label{fig:ablation-pretrain}
\end{figure}

Both questions can be answered with the extrapolated grids \texttt{ACTIVSg10k} and \texttt{case6470\_rte}. In particular, in Figure \ref{fig:ablation-pretrain}, we showcase that the performance of fine-tuning with just the pretraining loss does not encourage points that satisfy the inequality constranits. Hence, when using the model as a warm start for both AC-OPF and projection, one either (a) does not complete the power flow or (b) completes the power flow, but at a worse point the the models finetuned on the full inequality penalized elastic loss. 

Furthermore, to highlight the affect of the power flow penalty term in \eqref{eq:finetune-loss}, we train finetuned models on \texttt{case6470\_rte}, a grid that is intentionally challenging to complete powerflow as indicative by the fact that a flat start outperforms a DC start for an AC-OPF solver. Figure \ref{fig:merit-v-no-merit}  confirms this, showcasing that without the power flow residual term, the evaluation cases almost never complete and hence actively hurt the AC-OPF performance compared to that of the models finetuned with the merit term.

\begin{figure}[H]
    \centering
    \includegraphics[width=\linewidth]{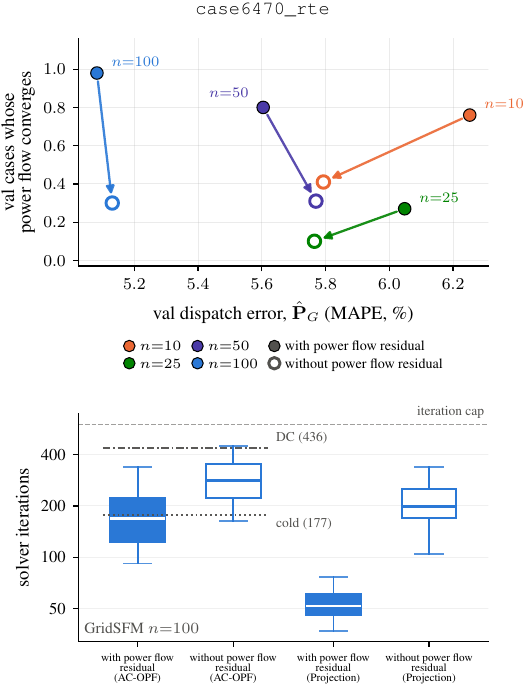}
    \caption{Ablation study on the affect of the power flow residual in the fine-tuning loss. The top row indicates power flow convergence vs the dispatch error on the validation set compared to between models finetuned with and without the power-flow residual term in the loss \eqref{eq:finetune-loss}. The bottom figure highlights the performance of resulting models as a warm start for both AC-OPF and feasible projection with and without the power flow residual term in the fine-tuning loss over $100$ evaluation cases. $n$ refers to the amount of data used for fine-tuning.}
    \label{fig:merit-v-no-merit}
\end{figure}

\newpage

\begin{table*}[!t]
\centering
\caption{\vspace{1em} \parbox{\textwidth}{\normalfont  Closed-state constraint violations, on the \emph{same} 100 held-out scenarios per grid that the warm-start tables solve. Each row is the operating point obtained by closing that model's predicted controls $(\widehat{\bm {p}}^g,\widehat{\bm v}_{\mathcal{G}})$ with a Newton power flow. Each violation column reports how far that point exceeds one class of inequality constraint, divided by the physical scale of that class so all four read as percentages. $\Delta Q$: reactive power outside $[\underline Q,\overline Q]$ at a control bus, as a fraction of that bus's reactive range. $\Delta V$: voltage magnitude outside $[\underline V_i,\overline V_i]$, as a fraction of nominal ($1$~p.u.). $\Delta P$: generator active power outside $[\underline P_{G_i},\overline P_{G_i}]$, as a fraction of the unit's range, and $\Delta S$:, apparent branch flow above its rating, as a fraction of $\overline S_\ell$. Each is averaged over \emph{all} elements of its class, violating or not, so the column is intensive and comparable across grids of very different size. The cost gap is $100(\bar c-\bar c^{\star})/\bar c^{\star}$, where $\bar c$ is the mean generation cost of the closed state over the scored scenarios and $\bar c^{\star}$ the mean cost of the reference AC-OPF optimum on those same scenarios. It is signed, so a negative value is a cheaper but, its an infeasible operating point rather than a better one. $\dagger$ indicates not every control for that model converged in the power flow, hence violation calculation is only over converged points. }}
\label{tab:violation-supp}
\setlength{\tabcolsep}{5pt}
\renewcommand{\arraystretch}{1.05}
\footnotesize
\begin{tabular}{@{}lrrrrrrrrr@{}}
\toprule
 & \multicolumn{5}{c}{violation (\% of capability)} & \multicolumn{4}{c}{accuracy} \\
\cmidrule(lr){2-6}\cmidrule(lr){7-10}
Operating point & mean & $\Delta Q$ & $\Delta S$ & $\Delta V$ & $\Delta P$ & $P_G$ MAPE & $V$ MAE & cost gap & power flow conv. \\
\midrule
\multicolumn{10}{@{}l}{\itshape \texttt{case500\_goc}} \\
\quad MLP ($n{=}1000$) & 8.452 & 32.784 & 0.0241 & 0.0041 & 0.9978 & 19.41 & 0.00699 & +6.60 & 100 \\
\quad GNN ($n{=}1000$) & 9.394 & 37.106 & 0.0104 & 0.0014 & 0.4581 & 4.89 & 0.00588 & -0.37 & 100 \\
\quad GridSFM, no pretrain ($n{=}1000$) & 12.013 & 48.043 & 0.0061 & 0.0015 & 0.0000 & \textbf{2.13} & \textbf{0.00479} & -0.04 & 100 \\
\quad Linux Foundation & 13.055 & 52.140 & 0.0784 & 0.0001 & 0.0000 & 7.79 & 0.01234 & -7.92 & 100 \\
\noalign{\vskip 0.8pt}
\arrayrulecolor{gray!50}\cdashline{1-10}[2.4pt/2.2pt]\arrayrulecolor{black}
\noalign{\vskip 0.8pt}
\quad GridSFM (zero-shot) & 20.208 & 80.822 & 0.0101 & 0.0013 & 0.0000 & 3.83 & 0.00648 & -0.17 & 100 \\
\quad GridSFM ($n{=}10$) & 19.940 & 79.745 & 0.0120 & 0.0032 & 0.0000 & 4.25 & 0.00726 & -1.57 & 100 \\
\quad GridSFM ($n{=}25$) & 8.658 & 34.617 & 0.0124 & 0.0013 & 0.0000 & 3.96 & 0.00701 & -1.48 & 100 \\
\quad GridSFM ($n{=}50$) & 7.402 & 29.595 & 0.0115 & 0.0012 & 0.0000 & 3.73 & 0.00676 & -0.32 & 100 \\
\quad GridSFM ($n{=}100$) & \textbf{6.475} & 25.885 & 0.0135 & 0.0012 & 0.0000 & 3.41 & 0.00638 & -0.94 & 100 \\
\midrule
\multicolumn{10}{@{}l}{\itshape \texttt{Texas2k}} \\
\quad MLP ($n{=}1000$) & 32.832 & 130.324 & 0.7092 & 0.0019 & 0.2941 & 3.38 & 0.00689 & +4.06 & 100 \\
\quad GNN ($n{=}1000$) & 51.401 & 204.421 & 0.7413 & 0.0003 & 0.4425 & 2.60 & 0.01085 & -0.39 & 100 \\
\quad GridSFM, no pretrain ($n{=}1000$) & 10.655 & 41.986 & 0.6318 & 0.0006 & 0.0000 & \textbf{0.65} & \textbf{0.00454} & -0.27 & 100 \\
\quad Linux Foundation$^\dagger$ & 34.868 & 138.943 & 0.4850 & 0.0447 & 0.0000 & 23.71 & 0.03686 & +62.54 & 75 \\
\noalign{\vskip 0.8pt}
\arrayrulecolor{gray!50}\cdashline{1-10}[2.4pt/2.2pt]\arrayrulecolor{black}
\noalign{\vskip 0.8pt}
\quad GridSFM (zero-shot) & 14.854 & 58.900 & 0.5130 & 0.0009 & 0.0000 & 2.11 & 0.00993 & -0.53 & 100 \\
\quad GridSFM ($n{=}10$) & 12.616 & 50.288 & 0.1738 & 0.0008 & 0.0000 & 2.43 & 0.01339 & +0.66 & 100 \\
\quad GridSFM ($n{=}25$) & 11.634 & 46.359 & 0.1753 & 0.0015 & 0.0000 & 2.61 & 0.01144 & -1.69 & 100 \\
\quad GridSFM ($n{=}50$) & 6.369 & 25.405 & 0.0705 & 0.0009 & 0.0000 & 2.02 & 0.01025 & +0.64 & 100 \\
\quad GridSFM ($n{=}100$) & \textbf{3.830} & 15.301 & 0.0195 & 0.0004 & 0.0000 & 1.37 & 0.00768 & -0.04 & 100 \\
\midrule
\multicolumn{10}{@{}l}{\itshape \texttt{case6470\_rte}} \\
\quad MLP ($n{=}1000$) & 52.647 & 209.505 & 0.3075 & 0.0021 & 0.7729 & 30.45 & 0.01099 & +18.99 & 100 \\
\quad GNN ($n{=}1000$) & 64.552 & 256.860 & 0.1028 & 0.0025 & 1.2433 & 10.65 & 0.01156 & +1.28 & 100 \\
\quad GridSFM, no pretrain ($n{=}1000$)$^\dagger$ & -- & -- & -- & -- & -- & \textbf{3.83} & \textbf{0.01080} & -- & 0 \\
\quad Linux Foundation$^\dagger$ & -- & -- & -- & -- & -- & 48.71 & 0.03960 & -- & 0 \\
\noalign{\vskip 0.8pt}
\arrayrulecolor{gray!50}\cdashline{1-10}[2.4pt/2.2pt]\arrayrulecolor{black}
\noalign{\vskip 0.8pt}
\quad GridSFM (zero-shot)$^\dagger$ & 3476.622 & 1.39e+04 & 10.5862 & 2.0160 & 0.0000 & 5.65 & 0.02035 & -1.71 & 3 \\
\quad GridSFM ($n{=}10$)$^\dagger$ & 169.412 & 677.253 & 0.2648 & 0.1297 & 0.0000 & 5.92 & 0.01648 & -1.61 & 62 \\
\quad GridSFM ($n{=}25$)$^\dagger$ & 867.159 & 3461.940 & 5.8910 & 0.8048 & 0.0000 & 5.64 & 0.01493 & +2.12 & 28 \\
\quad GridSFM ($n{=}50$)$^\dagger$ & 252.589 & 1008.865 & 1.3890 & 0.1015 & 0.0000 & 5.31 & 0.01495 & -0.43 & 74 \\
\quad GridSFM ($n{=}100$)$^\dagger$ & \textbf{38.747} & 154.923 & 0.0433 & 0.0218 & 0.0000 & 4.78 & 0.01332 & +0.12 & 98 \\
\midrule
\multicolumn{10}{@{}l}{\itshape \texttt{ACTIVSg10k}} \\
\quad MLP ($n{=}1000$)$^\dagger$ & 21.950 & 87.276 & 0.2079 & 0.0038 & 0.3115 & 14.82 & 0.00350 & +2.36 & 94 \\
\quad GNN ($n{=}1000$)$^\dagger$ & 26.182 & 103.117 & 0.1986 & 0.0014 & 1.4104 & 6.77 & 0.00563 & +0.44 & 98 \\
\quad GridSFM, no pretrain ($n{=}1000$)$^\dagger$ & 154.513 & 607.376 & 9.1390 & 1.5362 & 0.0000 & \textbf{0.76} & \textbf{0.00312} & +0.06 & 33 \\
\quad Linux Foundation$^\dagger$ & 100.411 & 397.778 & 3.4935 & 0.3722 & 0.0000 & 26.61 & 0.02107 & -14.71 & 26 \\
\noalign{\vskip 0.8pt}
\arrayrulecolor{gray!50}\cdashline{1-10}[2.4pt/2.2pt]\arrayrulecolor{black}
\noalign{\vskip 0.8pt}
\quad GridSFM (zero-shot)$^\dagger$ & -- & -- & -- & -- & -- & 33.39 & 0.03391 & -- & 0 \\
\quad GridSFM ($n{=}10$)$^\dagger$ & 24.942 & 99.001 & 0.7395 & 0.0279 & 0.0000 & 11.04 & 0.01015 & +0.60 & 77 \\
\quad GridSFM ($n{=}25$) & 5.427 & 21.703 & 0.0042 & 0.0004 & 0.0000 & 10.52 & 0.00792 & +2.96 & 100 \\
\quad GridSFM ($n{=}50$) & \textbf{1.067} & 4.265 & 0.0023 & 0.0007 & 0.0000 & 6.59 & 0.00809 & +1.25 & 100 \\
\quad GridSFM ($n{=}100$) & 2.327 & 9.303 & 0.0024 & 0.0004 & 0.0000 & 6.11 & 0.00528 & +1.18 & 100 \\
\bottomrule
\end{tabular}
\end{table*}

\begin{table*}[!t]
\centering
\caption{\parbox{\textwidth}{\normalfont\footnotesize \vspace{1em}Performance of various warm-starting points on both AC-OPF \eqref{eq:AC-OPF} and the feasible projection \eqref{eq:AC-OPF-projection} using the MadNLP (GPU) solver for $100$ evaluation problems across various grid sizes. This is the GPU counterpart of Table~\ref{tab:warmstart}: identical $100$ test cases solved with MadNLP on the GPU (NVIDIA RTX PRO 6000 Blackwell).}}
\label{tab:warmstart-gpu}
\setlength{\tabcolsep}{3pt}
\renewcommand{\arraystretch}{1.15}
\footnotesize
\setlength{\arrayrulewidth}{0.6pt}
\arrayrulecolor{gray!60}
\begin{tabular}{@{}l;{1.5pt/1.2pt}cccc;{1.5pt/1.2pt}cccc;{1.5pt/1.2pt}cccc;{1.5pt/1.2pt}cccc@{}}
\arrayrulecolor{black}
\toprule
Warm start & \multicolumn{4}{c}{\texttt{case500\_goc}} & \multicolumn{4}{c}{\texttt{Texas2k}} & \multicolumn{4}{c}{\texttt{case6470\_rte}} & \multicolumn{4}{c}{\texttt{ACTIVSg10k}} \\
\cmidrule(lr){2-5}\cmidrule(lr){6-9}\cmidrule(lr){10-13}\cmidrule(lr){14-17}
 & \makecell{90\%\\done} & \makecell{Speed-\\up} & \makecell{Conv.\\(\%)} & \makecell{Cost gap\\(\%)} & \makecell{90\%\\done} & \makecell{Speed-\\up} & \makecell{Conv.\\(\%)} & \makecell{Cost gap\\(\%)} & \makecell{90\%\\done} & \makecell{Speed-\\up} & \makecell{Conv.\\(\%)} & \makecell{Cost gap\\(\%)} & \makecell{90\%\\done} & \makecell{Speed-\\up} & \makecell{Conv.\\(\%)} & \makecell{Cost gap\\(\%)} \\
\midrule
\multicolumn{17}{@{}l}{\itshape AC-OPF warm start}\\[1pt]
Cold start (flat) & 146 & 1.00 & 100 & 0.00 & -- & 1.00 & 87 & 0.00 & -- & 1.00 & 83 & 0.00 & 465 & 1.00 & 94 & 0.00 \\
DC warm start & 68 & 3.10 & 100 & 0.00 & 79 & 2.25 & 100 & 0.00 & 341 & 2.63 & 95 & 0.00 & 157 & 2.72 & 98 & 0.00 \\
\noalign{\vskip 0.8pt}
\arrayrulecolor{gray!50}\cdashline{1-17}[2.4pt/2.2pt]\arrayrulecolor{black}
\noalign{\vskip 0.8pt}
MLP (DeepOPF \cite{pan20}) & 71 & 2.45 & 100 & 0.00 & 86 & 2.10 & 100 & 0.00 & 314 & 2.42 & 98 & 0.00 & -- & \textcolor{gray!60}{1.43\,$^{\dagger}$} & 8 & 0.00 \\
GNN & 67 & 3.30 & 100 & 0.00 & 85 & 2.09 & 99 & 0.00 & 319 & 2.79 & 98 & 0.00 & -- & \textcolor{gray!60}{1.22\,$^{\dagger}$} & 4 & 0.00 \\
Linux Foundation \cite{Pue26} & 68 & 2.79 & 100 & 0.00 & 114 & 1.51 & 99 & 0.00 & -- & \textcolor{gray!60}{1.52\,$^{\dagger}$} & 50 & 0.00 & -- & 1.29 & 87 & 0.00 \\
\noalign{\vskip 0.8pt}
\arrayrulecolor{gray!50}\cdashline{1-17}[2.4pt/2.2pt]\arrayrulecolor{black}
\noalign{\vskip 0.8pt}
GridSFM (no pretrain) & 65 & \textbf{3.59} & 100 & 0.00 & 71 & 2.56 & 100 & 0.00 & -- & \textcolor{gray!60}{0.80\,$^{\dagger}$} & 34 & 0.00 & -- & 1.37 & 71 & 0.00 \\
GridSFM (zero-shot) & 65 & 3.12 & 100 & 0.00 & 69 & 2.59 & 100 & 0.00 & -- & 1.08 & 56 & 0.00 & -- & -- & 0 & -- \\
GridSFM (fine-tuned) & \textbf{64} & 3.30 & 100 & 0.00 & \textbf{66} & \textbf{2.83} & 100 & 0.00 & \textbf{303} & \textbf{2.99} & 95 & 0.00 & \textbf{73} & \textbf{7.00} & 98 & 0.00 \\
\midrule
\multicolumn{17}{@{}l}{\itshape AC-OPF feasible projection}\\[1pt]
MLP (DeepOPF \cite{pan20}) & \textbf{29} & 4.93 & 100 & 9.41 & \textbf{38} & \textbf{4.73} & 100 & 2.55 & \textbf{43} & \textbf{11.89} & 100 & 32.87 & -- & \textcolor{gray!60}{1.88\,$^{\dagger}$} & 5 & 2.61 \\
GNN & 33 & 4.81 & 100 & 0.87 & 40 & 4.49 & 100 & 1.96 & 55 & 10.59 & 99 & 4.00 & -- & \textcolor{gray!60}{7.07\,$^{\dagger}$} & 1 & 0.41 \\
Linux Foundation \cite{Pue26} & 48 & 3.97 & 100 & 1.69 & 94 & 2.14 & 100 & 65.81 & -- & \textcolor{gray!60}{0.70\,$^{\dagger}$} & 18 & 41.61 & -- & \textcolor{gray!60}{0.95\,$^{\dagger}$} & 50 & 3.46 \\
\noalign{\vskip 0.8pt}
\arrayrulecolor{gray!50}\cdashline{1-17}[2.4pt/2.2pt]\arrayrulecolor{black}
\noalign{\vskip 0.8pt}
GridSFM (no pretrain) & 31 & \textbf{5.32} & 100 & 0.31 & 40 & 4.33 & 100 & 0.29 & -- & \textcolor{gray!60}{1.04\,$^{\dagger}$} & 27 & 1.51 & -- & 1.11 & 58 & 0.07 \\
GridSFM (zero-shot) & 34 & 5.11 & 100 & 0.59 & 43 & 4.16 & 100 & 0.87 & -- & \textcolor{gray!60}{1.03\,$^{\dagger}$} & 49 & 2.36 & -- & -- & 0 & -- \\
GridSFM (fine-tuned) & 35 & 4.84 & 100 & 0.55 & 41 & 4.17 & 100 & 0.66 & 46 & 10.73 & 98 & 1.96 & \textbf{45} & \textbf{8.39} & 100 & 0.58 \\
\bottomrule
\end{tabular}
\setlength{\arrayrulewidth}{0.4pt}
\end{table*}

\begin{table*}[!t]
\centering
\caption{\vspace{1em} \parbox{\linewidth}{\normalfont\footnotesize Computational cost of a solve on each device (AMD EPYC 9535 (CPU) and Nvidia RTX Pro 6000 Blackwell (GPU)), measured on $10$ evaluation problems per grid. Every computation is measured uncontested and serially throughout. Model inference time is assuming an unbatched size of $1$ and can be significantly sped up with batching. \mbox{--} marks a quantity that does not apply or was not measured. The GridSFM forward pass is the median of $30$ single-case forwards on an otherwise idle GPU after $10$ warm-up passes, synchronised around each call.
}}
\label{tab:compute-cost}
\setlength{\tabcolsep}{4pt}
\renewcommand{\arraystretch}{1.15}
\footnotesize
\setlength{\arrayrulewidth}{0.6pt}
\arrayrulecolor{gray!60}
\begin{tabular}{@{}l;{1.5pt/1.2pt}rr;{1.5pt/1.2pt}rr;{1.5pt/1.2pt}rr;{1.5pt/1.2pt}rr@{}}
\arrayrulecolor{black}
\toprule
 & \multicolumn{2}{c}{\texttt{case500\_goc}} & \multicolumn{2}{c}{\texttt{Texas2k}} & \multicolumn{2}{c}{\texttt{case6470\_rte}} & \multicolumn{2}{c}{\texttt{ACTIVSg10k}} \\
\cmidrule(lr){2-3}\cmidrule(lr){4-5}\cmidrule(lr){6-7}\cmidrule(lr){8-9}
 & ms/iter & s/prob & ms/iter & s/prob & ms/iter & s/prob & ms/iter & s/prob \\
\midrule
\multicolumn{9}{@{}l}{\itshape Solve, CPU (Ipopt)}\\[1pt]
\quad Cold start & 34.2 & 1.80 & 220.6 & 61.32 & 321.2 & 56.38 & 560.6 & 138.61 \\
\quad DC warm start & 23.0 & 1.25 & 255.2 & 14.56 & 380.1 & 146.05 & 503.1 & 125.34 \\
\quad GridSFM (fine-tuned) & 23.1 & 1.03 & 272.4 & 10.85 & 339.4 & 59.42 & 558.3 & 38.39 \\
\quad GridSFM $+$ projection & 24.2 & \textbf{0.66} & 241.5 & \textbf{6.74} & 472.9 & \textbf{25.32} & 699.8 & \textbf{24.52} \\
\noalign{\vskip 0.8pt}
\arrayrulecolor{gray!50}\cdashline{1-9}[2.4pt/2.2pt]\arrayrulecolor{black}
\noalign{\vskip 0.8pt}
\multicolumn{9}{@{}l}{\itshape Solve, GPU (MadNLP)}\\[1pt]
\quad Cold start & 4.4 & 0.55 & 16.6 & 1.85 & 11.4 & 4.78 & 19.2 & 4.98 \\
\quad DC warm start & 4.8 & 0.21 & 15.4 & 0.64 & 11.1 & 1.91 & 15.1 & 1.67 \\
\quad GridSFM (fine-tuned) & 4.7 & \textbf{0.18} & 16.1 & \textbf{0.56} & 9.8 & 1.55 & 19.0 & \textbf{0.96} \\
\quad GridSFM $+$ projection & 20.7 & 0.60 & 20.3 & 0.88 & 26.9 & \textbf{1.02} & 31.1 & 1.13 \\
\quad GPU data construction & -- & 0.02 & -- & 0.07 & -- & 0.23 & -- & 0.34 \\
\noalign{\vskip 0.8pt}
\arrayrulecolor{gray!50}\cdashline{1-9}[2.4pt/2.2pt]\arrayrulecolor{black}
\noalign{\vskip 0.8pt}
\multicolumn{9}{@{}l}{\itshape Warm-start production}\\[1pt]
\quad DC-OPF solve & -- & 1.43 & -- & 1.66 & -- & 3.54 & -- & 9.30 \\
\quad GridSFM forward pass & -- & 0.035 & -- & 0.035 & -- & 0.044 & -- & 0.053 \\
\quad GridSFM power-flow closure & -- & 0.009 & -- & 0.045 & -- & 0.132 & -- & 0.140 \\
\bottomrule
\end{tabular}
\setlength{\arrayrulewidth}{0.4pt}
\begin{minipage}{\linewidth}\vspace{3pt}\footnotesize
\end{minipage}
\end{table*}

\makeatletter
\setlength{\@fptop}{0pt}
\makeatother
\clearpage            
\makeatletter
\setlength{\@fptop}{0pt plus 1fil}
\makeatother

\end{document}